\def\BibTeX{{\rm B\kern-.05em{\sc i\kern-.025em b}\kern-.08em
    T\kern-.1667em\lower.7ex\hbox{E}\kern-.125emX}}
\newtheorem{theorem}{Theorem}
\newtheorem{lemma}{Lemma}
\newtheorem{definition}{Definition}
\newcommand{\ubar}[1]{\underaccent{\bar}{#1}}
\newcommand{\ie}{{\em i.e.}}
\newcommand{\eg}{{\em e.g.}}
\newcommand{\et}{{\em et al.}}
\newcommand{\resp}{{\em resp.}}
\newcommand{\oie}{i.e.}
\begin{document}

\title{Online Pricing with Reserve Price Constraint for Personal Data Markets}

\author{
\IEEEauthorblockN{Chaoyue~Niu, Zhenzhe~Zheng, Fan~Wu, Shaojie~Tang$^\dag$, and Guihai~Chen}
\IEEEauthorblockA{Shanghai Key Laboratory of Scalable Computing and Systems, Shanghai~Jiao~Tong~University, China\\
$^\dag$Department of Information Systems, University of Texas at Dallas, USA\\
Email: \{rvince, zhengzhenzhe, wu-fan\}@sjtu.edu.cn; tangshaojie@gmail.com; gchen@cs.sjtu.edu.cn
}
}

\maketitle

\begin{abstract}
The society's insatiable appetites for personal data are driving the emergency of data markets, allowing data consumers to launch customized queries over the datasets collected by a data broker from data owners. In this paper, we study how the data broker can maximize her cumulative revenue by posting reasonable prices for sequential queries. We thus propose a contextual dynamic pricing mechanism with the reserve price constraint, which features the properties of ellipsoid for efficient online optimization, and can support linear and non-linear market value models with uncertainty. In particular, under low uncertainty, our pricing mechanism provides a worst-case regret logarithmic in the number of queries. We further extend to other similar application scenarios, including hospitality service, online advertising, and loan application, and extensively evaluate three pricing instances of noisy linear query, accommodation rental, and impression over MovieLens 20M dataset, Airbnb listings in U.S. major cities, and Avazu mobile ad click dataset, respectively. The analysis and evaluation results reveal that our proposed pricing mechanism incurs low practical regret, online latency, and memory overhead, and also demonstrate that the existence of reserve price can mitigate the cold-start problem in a posted price mechanism, and thus can reduce the cumulative regret.
\end{abstract}

\begin{IEEEkeywords}
personal data market, revenue maximization, contextual dynamic pricing, reserve price
\end{IEEEkeywords}


\section{Introduction}

With the proliferation of Internet of Things (IoTs), tremendous volumes of data are collected to monitor human behaviors in daily life. However, for the sake of security, privacy, or business competition, most of data owners are reluctant to share their data, resulting in a large number of data islands. The data isolation status locks the value of personal data against potential data consumers, such as commercial companies, financial institutions, medical practitioners, and researchers. To facilitate personal data circulation, more and more data brokers have emerged to build bridges between the data owners and the data consumers. Typical data brokers in industry include Factual~\cite{link:factual}, DataSift~\cite{link:datasift}, Datacoup~\cite{link:datacoup}, CitizenMe~\cite{link:citizenme}, and CoverUS~\cite{link:coverus}. On one hand, a data broker needs to adequately compensate the privacy leakages of data owners during the usage of their data, and thus incentivize them to contribute private data. On the other hand, the data broker should properly charge the online data consumers for their sequential queries over the collected datasets, since the behaviors of both underpricing and overpricing can incur the loss of revenue at the data broker. Such a data circulation ecosystem is conventionally called ``data market" in the literature~\cite{jour:vldb11:datamarket}.


\begin{figure}[t]
\centering
\includegraphics[width = 0.9\columnwidth]{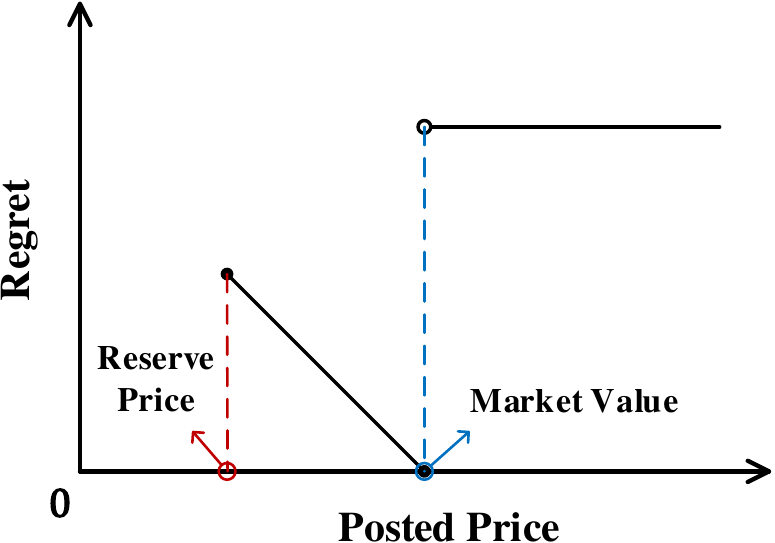}
\caption{Single-round regret function of a posted price mechanism with reserve price constraint, if the reserve price is no more than the market value.}\label{fig:regret:illustration}
\end{figure}

In this paper, we study how to trade personal data for revenue maximization from the data broker's standpoint in online data markets. We summarize three major design challenges as follows. The first and the thorniest challenge is that the objective function for optimization is quite complicated. The principal goal of a data broker in data markets is to maximize her cumulative revenue, which is defined as the difference between the prices of queries charged from the data consumers and the privacy compensations allocated to the data owners. Let's examine one round of data trading as follows. Given a query, the privacy leakages together with the total privacy compensation, regarded as the reserve price of the query, are virtually fixed. Thus, for revenue maximization, an ideal way for the data broker is to post a price, which takes the larger value of the query's reserve price and market value. However, the reality is that the data broker does not know the exact market value, and can only estimate it from the context of the current query and the historical transaction records. Of course, loose estimations will lead to different levels of regret: if the reserve price is higher than the market value, the query definitely cannot be sold, and the regret is zero; if the reserve price is no more than the market value, as shown in Fig.~\ref{fig:regret:illustration}, a slight underestimation of the market value incurs a low regret, whereas a slight overestimation causes the query not to be sold, generating a high regret. Therefore, the initial goal of revenue maximization can be equivalently converted to regret minimization. Considering even the single-round regret function is piecewise and highly asymmetric, it is nontrivial for the data broker to perform optimization for multiple rounds.


Yet, another challenge lies in how to model the market values of the customized queries from the data consumers. To minimize the regret in pricing online queries, the pivotal step for the data broker is to gain a good knowledge of their market values. However, markets for personal data significantly differ from conventional markets in that each data consumer as a buyer, rather than the data broker as a seller, can determine the product, namely a query. In general, each query involves a concrete data analysis method and a tolerable level of noise added to the true answer, which are both customized by a data consumer \cite{jour:cacm:2017:roth,jour:cacm:2017:li}. Hence, the queries from different data consumers are highly differentiated, and are uncontrollable by the data broker. This striking property further implies that most of the dynamic pricing mechanisms, which target identical products or a manageable number of distinct products, cannot apply here. Besides, existing works on data pricing, which either considered a single query \cite{proc:ec11:ghosh:selling:privacy} or investigated the determinacy relation among multiple queries \cite{proc:pods:2012:koutris,jour:vldb:2012:arbitrage,proc:sigmod:2013:arbitrage,jour:vldb:2014:lin,proc:sigmod:17:arbitrage,proc:icdt:2017:deep:arbitrage,jour:vldb:2017:arbitrage,jour:cacm:2017:li,proc:kdd:2018:erato,proc:infocom:2019:horae}, but ignored whether the data consumers accept or reject the marked prices, and thus omitted modeling the market values of queries, are parallel to this work.


The ultimate challenge comes from the novel online pricing with reserve price setting. For the market value estimation of a query, the data broker can only exploit the current and historical queries. Thus, the pricing of sequential queries can be viewed as an online learning process. In addition to the usual tension between exploitation and exploration, our pricing problem also needs to incorporate three atypical aspects. First, the feedback after trading one query is very limited. In particular, the data broker can only observe whether the posted price for the query is higher than its market value or not, but cannot obtain the exact market value, which makes standard online learning algorithms \cite{jour:ftml:2012,jour:jto:2016:oco} inapplicable. Second, the reserve price essentially imposes a lower bound on the posted price beyond the market value estimation, while the ordering between the reserve price and the market value is unknown. Besides, the impact of such a lower bound on the whole learning process has not been studied. Last but not least, the online mode requires our design of the posted price mechanism to be quite efficient. In other words, the data broker needs to choose each posted price and further update her knowledge about the market value model with low latency.

\begin{figure*}[t]
\centering
\includegraphics[width = 1.9\columnwidth]{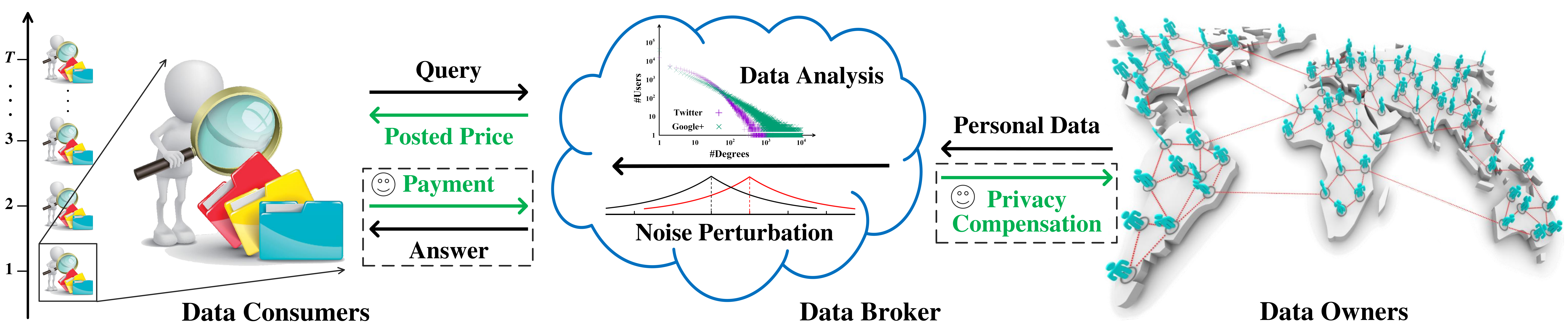}
\caption{A general system model of online personal data markets. (The smile indicates that the posted price is accepted and a deal is made.)}\label{fig:system:model}
\end{figure*}


Jointly considering the above three challenges, we propose a contextual dynamic pricing mechanism with the reserve price constraint for the data broker to maximize her revenue in online personal data markets. For problem formulation, we first adopt contextual/hedonic pricing to model the market values of different queries, which are a certain linear or non-linear function of their features plus some uncertainty. Besides, we choose the state of the privacy compensations under a query as its feature vector. In fact, such a feature representation inherits the key principle of cost-plus pricing. For posted price mechanism design, we start with the fundamental linear model, and covert the market value estimation problem to dynamically exploiting and exploring the market values of different features, \ie, the weight vector in the linear model. Specifically, depending on whether a sale occurs or not in each round, the data broker can introduce a linear inequality to update her knowledge set about the weight vector. Thus, the raw knowledge set is kept in the shape of polytope, which makes the real-time task of predicting the range of a query's market value computationally infeasible. To handle this problem, we replaces the raw knowledge set with its smallest enclosing ellipsoid, namely L\"{o}wner-John ellipsoid. Under the ellipsoid-shaped knowledge set, it only requires a few matrix-vector and vector-vector multiplications to obtain a lower bound and an upper bound on each query's market value. By further incorporating the total privacy compensation, namely the reserve price, as an additional lower bound, we define a conservative posted price and an exploratory posted price for a query. In particular, using the conservative price, the data broker sells the query with the highest probability, but cannot refine the knowledge set, and thus may not benefit the subsequent rounds. In contrast, the exploratory price, which is inspired by bisection, suffers a higher risk of no sale, but can narrow down the knowledge set by most, and thus may benefit the subsequent rounds most. In a nutshell, these two kinds of posted prices give different biases to the immediate rewards (exploitation) and the future rewards (exploration). Besides, the choice of which price in a certain round hinges on the size measure of the latest knowledge set. We further investigate how to tolerate uncertainty, and mainly introduce a ``buffer" in posting the price and updating the knowledge set. We finally extend to several non-linear models commonly used in interpreting market values, including log-linear, log-log, logistic, and kernelized models. For other similar application scenarios, we outline the characteristics of online personal data markets, including customization, existence of reserve price, and timeliness, and further illustrate the extensions to hospitality service on booking platforms, online advertising on web publishers, and loan application from banks and other financial institutions.



We list our key contributions in this paper as follows.

$\bullet$~To the best of our knowledge, we are the first to study trading personal data for revenue maximization, from the data broker's point of view in online data markets. Additionally, we formulate this problem into a contextual dynamic pricing problem with the reserve price constraint.

$\bullet$~Our proposed pricing mechanism features the properties of ellipsoid to exploit and explore the market values of sequential queries effectively and efficiently. It facilitates both linear and non-linear market value models, and is robust to some uncertainty. In particular, the worst-case regret under low uncertainty is $O(\max(n^2\log(T/n), n^3\log(T/n)/T))$, where $n$ is the dimension of feature vector and $T$ is the total number of rounds. Besides, the time and space complexities are $O(n^2)$. Furthermore, our market framework can also support trading other similar products, which share common characteristics with online queries.


$\bullet~$We extensively evaluate three application instances over three real-world datasets. The analysis and evaluation results reveal that our pricing mechanism incurs low practical regret, online latency, and memory overhead, under both linear and non-linear market value models and over both sparse and dense feature vectors. In particular, (1) for the pricing of noisy linear query under the linear model, when $n = 100$ and the number of rounds $t$ is $10^5$, the regret ratio of our pricing mechanism with reserve price (\resp, with reserve price and uncertainty) is $7.77\%$ (\resp, $9.87\%$), reducing $57.19\%$ (\resp, $45.64\%$) of the regret ratio than a risk-averse baseline, where the reserve price is posted in each round; (2) for the pricing of accommodation rental under the log-linear model, when $n = 55$, $t = 74,111$, and the ratio between the natural logarithms of market value and reserve price is set to $0.6$, the regret ratio of our pricing mechanism is $3.83\%$, reducing $77.46\%$ of the regret ratio compared with the risk-averse baseline; (3) for the pricing of impression under the logistic model, when $n = 1024$ and $t = 10^5$, the regret ratios of our pure pricing mechanism are $8.04\%$ and $0.89\%$ in the spare and dense cases, respectively. Furthermore, the online latencies of three applications per round are in the magnitude of millisecond, and the memory overheads are less than 160MB.


$\bullet$~We instructively demonstrate that the reserve price can mitigate the cold-start problem in a posted price mechanism, and thus can reduce the cumulative regret. Specifically, for the pricing of noisy linear query, when $n = 20$ and $t = 10^4$, our pricing mechanism with reserve price (\resp, with reserve price and uncertainty) reduces $13.16\%$ (\resp, $10.92\%$) of the cumulative regret than without reserve price; for the pricing of accommodation rental, as the reserve price is approaching the market value, its impact on mitigating cold start is more evident. These findings may be of independent interest in the posted price mechanism design.



The remainder of this paper is organized as follows. In Section~\ref{sec:tech:overview}, we introduce technical preliminaries, and overview design principles. In Section~\ref{sec:linear:model}, we present our pricing mechanism under the fundamental linear model with uncertainty, and then analyze its performance from the time and space complexities together with the worst-case regret. In Section~\ref{sec:extensions}, we extend to non-linear models and some other similar application scenarios. We present the evaluation results in Section~\ref{sec:evaluations}. We briefly review related work in Section~\ref{sec:related:work}, and conclude the paper in Section~\ref{sec:conclusion}.

\section{Technical Overview}\label{sec:tech:overview}
In this section, we introduce system model, problem formulation, and design principles.

\subsection{System Model}

As shown in Fig.~\ref{fig:system:model}, we consider a general system model for online personal data markets. There are three kinds of entities: data owners, a data broker, and data consumers.

The data broker first collects massive personal data from data owners. Typical examples of personal data include product ratings, electrical usages, social media data, health records, physical activities, and driving trajectories. Then, the data consumers comes to the data market in an online fashion. In round $t \in [T]$, a data consumer arrives, and makes her customized query $Q_t$ over the collected dataset. Specifically, $Q_t$ comprises a concrete data analysis method and a tolerable level of noise added to the true result~\cite{jour:cacm:2017:roth,jour:cacm:2017:li}. Here, the noise perturbation can not only allow the data consumer to control the accuracy of a returned answer, but also preserve the privacies of data owners.


Depending on $Q_t$ and the underlying dataset, the data broker quantifies the privacy leakage of each data owner, and needs to compensate her if a deal occurs. Here, the individual privacy compensation, which hinges on the contract between the data owner and the data broker with respect to distinct privacy leakages and corresponding compensations, can be pre-computed when given $Q_t$. The data broker then offers a price $p_t$ to the data consumer. If $p_t$ is no more than the market value $v_t$ of $Q_t$, this posted price will be accepted. The data broker charges the data consumer $p_t$, returns the noisy answer, and compensates the data owners as planned. Otherwise, this deal is aborted, and the data consumer goes away. We note that to guarantee non-negative utility at the data broker no matter whether a deal occurs in round $t$ or not, the posted price $p_t$ should be no less than the total privacy compensation $q_t$, where $q_t$ functions as the {\em reserve price} of $Q_t$.


\subsection{Problem Formulation}\label{subsec:problem:formulation}

We now formulate the regret minimization problem for pricing sequential queries in online personal data markets.


We first model the market values of queries. We use an elementary assumption from \emph{contextual pricing} in computational economics~\cite{proc:ec16:cohen,proc:ec17:leme,proc:focs18:leme,proc:nips18:mao} and \emph{hedonic pricing} in marketing~\cite{jour:hedonic:milon1984,jour:hedonic:malpezzi2002,jour:hedonic:sirmans2005,proc:kdd18:airbnb}, which states that the market value of a product is a deterministic function of its features. Here, the product is a query, and the function can be linear or non-linear. Besides, to make the pricing model more robust, we allow for some uncertainty in the market value of each query. In particular, for a query $Q_t$, we let $\mathbf{x}_t \in \mathbb{R}^n$ denote its $n$-dimensional feature vector, let $f:\mathbb{R}^n \mapsto \mathbb{R}$ denote the mapping from the feature vector $\mathbf{x}_t$ to the deterministic part in its market value, and let $\delta_t \in \mathbb{R}$ denote the random variable in its market value, which is independent of $\mathbf{x}_t$. In a nutshell, $v_t = f(\mathbf{x}_t) + \delta_t$.


We next identify the features of a query for measuring its market value. One naive way is to directly encode the contents of the query, including the data analysis method and the noise level. However, the query alone, especially the data analysis method, is hard to embody its economic value. Thus, we turn to utilizing the underlying valuations from massive data owners about the query, namely the privacy compensations, as the feature vector. We give some comments on such a feature representation: (1) The market value of a query depending on the privacy compensations inherits the core principle of \emph{cost-plus pricing}~\cite{book:2003:pricing,book:2018:pricing}, and has been widely used in personal data pricing~\cite{jour:cacm:2017:li,proc:kdd:2018:erato,proc:infocom:2019:horae}. In particular, cost-plus pricing states that the market value of a product is determined by adding a specific amount of markup to its cost. Here, the cost is the total privacy compensation, the determinacy is reflected in the feature representation, and the markup is realized by setting the reserve price constraint. (2) The privacy compensations, incorporating the factors of the data analysis method, the noise level as well as the underlying dataset, are observable by the data broker, and can help her to discriminate the economic values of distinct queries. For example, the privacy compensations are higher, which implies that the privacy leakages to the data owners are larger, the knowledge discovered by the data consumer is richer, and thus the market value of the query to the data consumer should be higher. (3) Considering the large scale of data owners, the dimension of feature vector can be prohibitively high. Under such circumstance, we can apply some celebrated dimensionality reduction techniques, \eg, Principal Components Analysis (PCA)~\cite{book:06:bishop,book:09:esl}. Yet, we can also apply aggregation/clustering to the privacy compensations, and regard the aggregate results as the feature vector, where its dimension $n$ controls the granularity of aggregation. For example, we can sort the privacy compensations, and evenly divide them into $n$ partitions. We sum the privacy compensations falling into a certain partition, and thus obtain a feature. In this aggregation pattern, one extreme case is $n=1$, where the only feature is the total privacy compensation. Another extreme case is $n$ equal to the number of data owners, where every feature corresponds to a data owner's individual privacy compensation.





We finally define the cumulative regret of the data broker due to her limited knowledge of market values. We consider a game between the data broker and an adversary. During this game, the adversary chooses the sequence of queries $Q_1, Q_2, \ldots,  Q_T$, selects the mapping $f$, but cannot control the uncertainty $\delta_t$ in each round $t$, \ie, she can determine the part $f(\mathbf{x}_t)$ in the market value $v_t$. In contrast, the data broker can only passively receive each query $Q_t$, and then post a price $p_t$. If the posted price is no more than the market value, \ie, $p_t \leq v_t$, a deal occurs, and the data broker earns a revenue of $p_t$. Otherwise, the deal is aborted, and the data broker gains no revenue. We define the regret in round $t$ as the difference between the adversary's revenue and the data broker's revenue for trading the query $Q_t$, \ie,
\begin{equation*}
R_t = \left\{
\begin{aligned}
&0 \ &\text{if}\ q_t > v_t,\\
&\max_{{p_t^*}}{p_t^*}\ \underset{\delta_t}{\text{Pr}}\left({p_t^*} \leq v_t\right) - p_t\mathbf{1}\left\{p_t \leq v_t\right\}\ &\text{otherwise}.
\end{aligned}
\right.
\end{equation*}
Here, in the first branch, if the reserve price and thus the posted price are higher than the market value, there is no regret. This is because under such circumstance, no matter whether the adversary knows the market value in advance or the data broker does not, there is definitely no deal/revenue. Besides, $p_t^*$ is the adversary's optimal posted price to maximize her expected revenue in round $t$, where the expectation is taken over $\delta_t$. When $\delta_t$ is omitted, the adversary will just post the market value, if the reserve price is no more than the market value, \ie, $q_t \leq p_t^* = v_t$, and $R_t$ will change to:
\begin{equation}
R_t = \left\{
\begin{aligned}
&0 \ &\text{if}\ q_t > v_t,\\
&v_t - p_t \mathbf{1}\left\{p_t \leq v_t\right\}\ &\text{otherwise}.
\end{aligned}
\right.\label{eq:regret:formula:rp}
\end{equation}
At last, considering the queries can be chosen adversarially, \eg, by other competitive data brokers or malicious data consumers, our design goal is to minimize the total worst-case regret accumulated over $T$ rounds. Besides, good pricing policies under the metric of worst-case regret would still be robust to changes or fluctuations in the arrival pattern of queries over time, \eg, some features of the queries can appear in a correlated form, or have zero values throughout most of rounds, but may be important in later rounds.

%


\begin{figure*}[t]
\centering
\subfigure[One-Dimensional Scenario]{\label{fig:line:exploration}
\includegraphics[width=0.66\columnwidth]{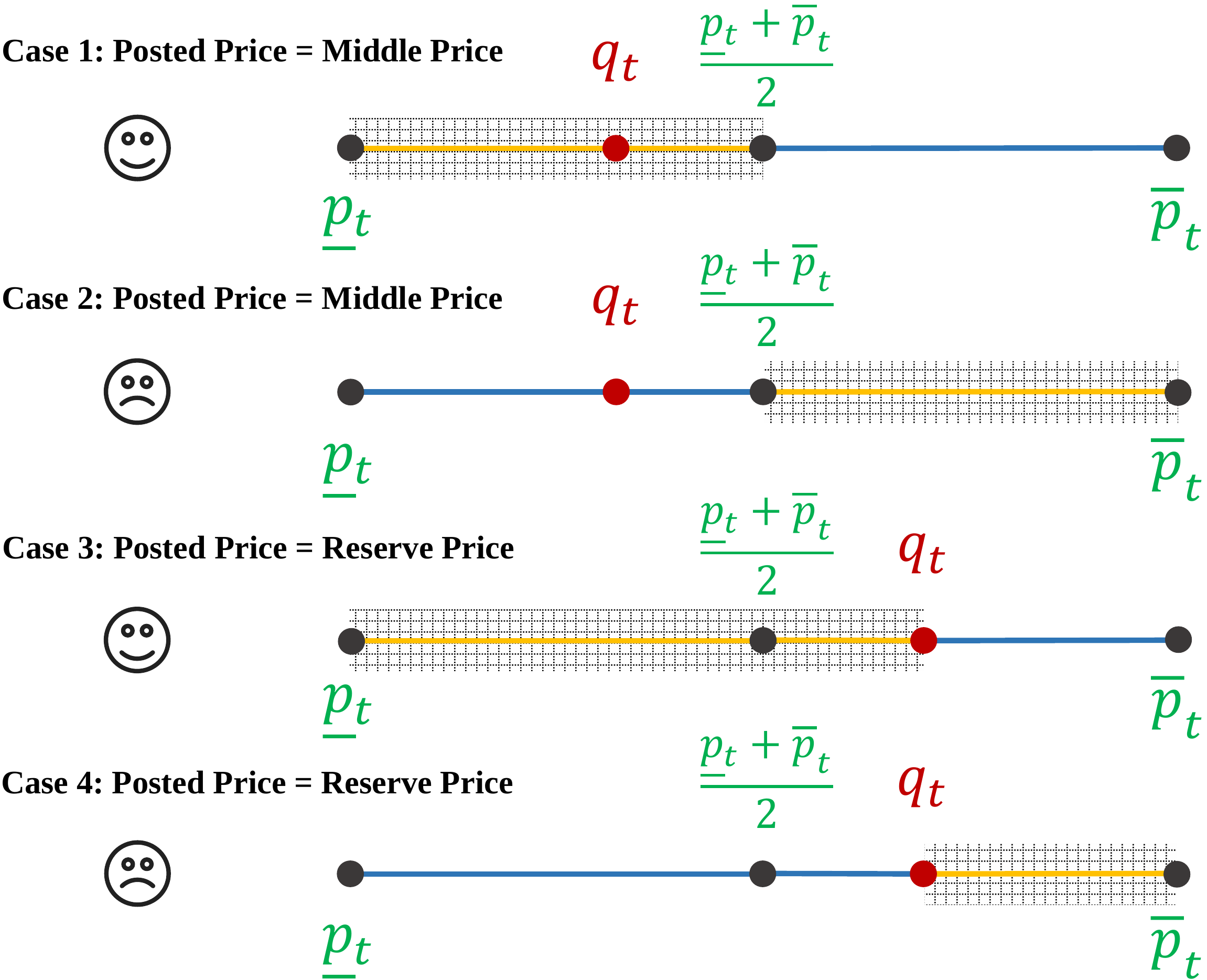}}
\subfigure[Multi-Dimensional Scenario]{\label{fig:ellipsoid:exploration}
\includegraphics[width=0.66\columnwidth]{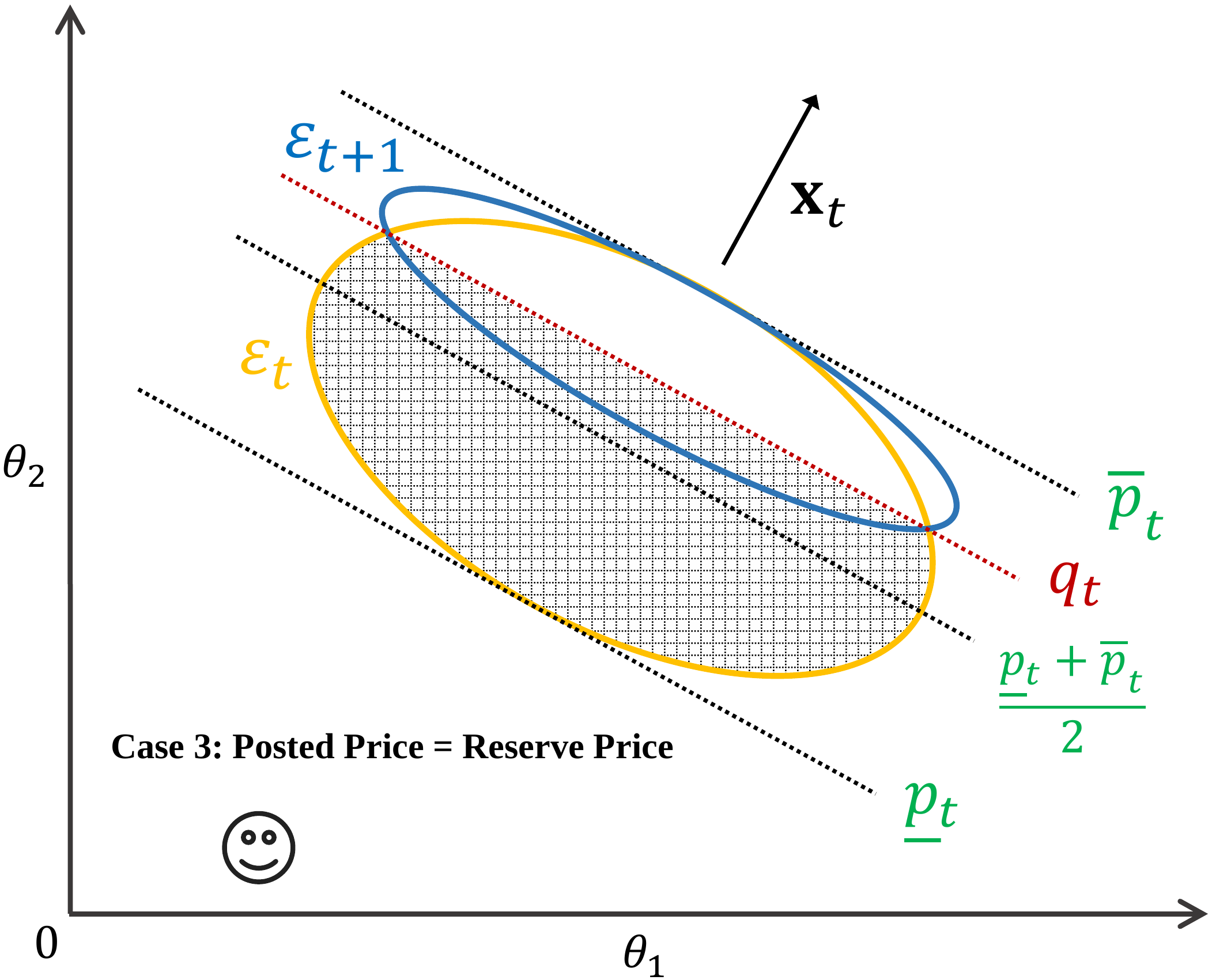}}
\subfigure[Multi-Dimensional Scenario with Uncertainty]{\label{fig:ellipsoid:exploration:noise}
\includegraphics[width=0.66\columnwidth]{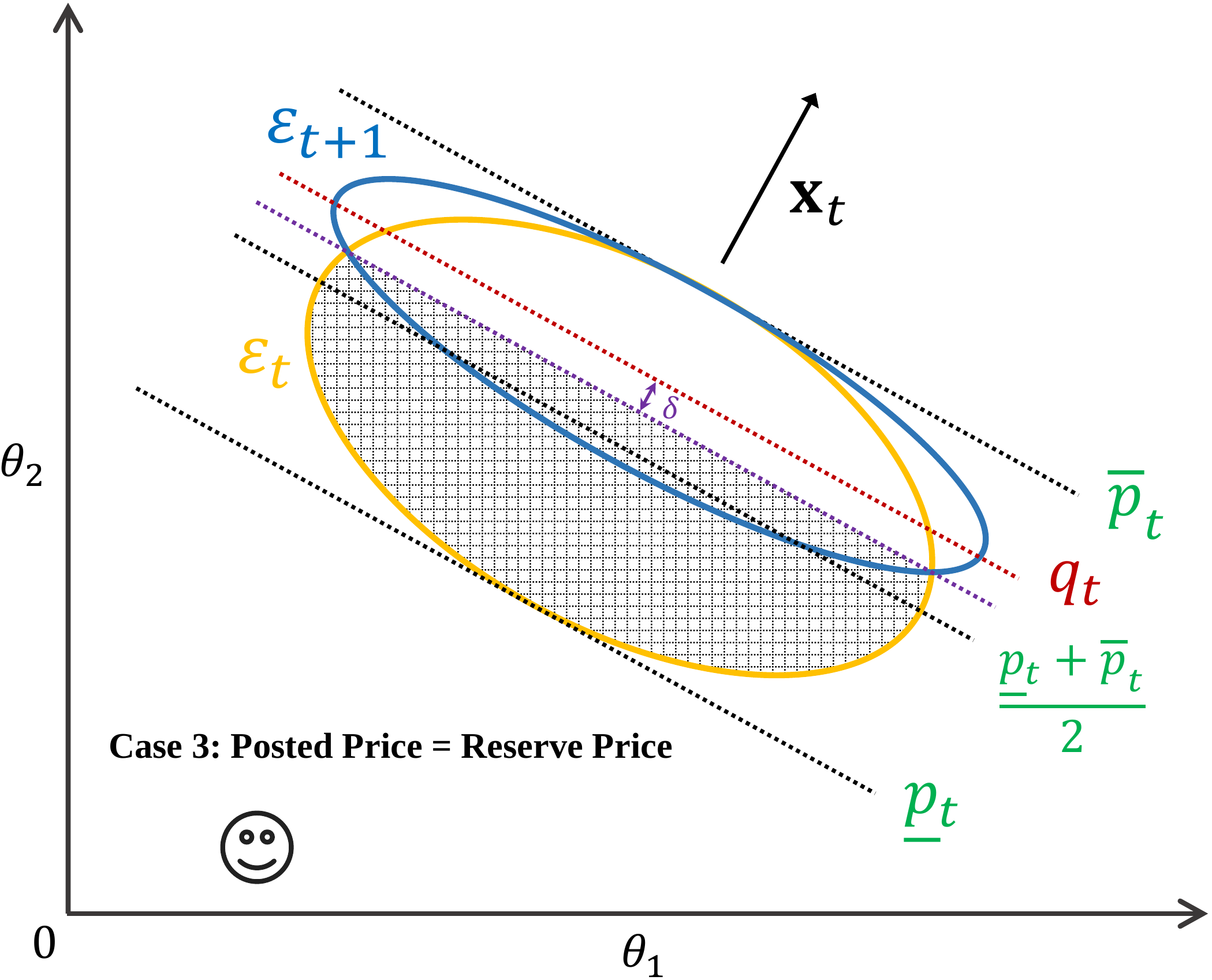}}
\caption{Illustrations of (effective) exploratory posted prices under the linear market value model.}
\vspace{-0.4em}
\end{figure*}

\subsection{Design Principles}\label{sec:design:principles}

We give an overview of our proposed pricing framework, and illustrate its key principles. We first consider the linear market value model, where $f$ is a linear function, parameterized by a weight vector $\theta^* \in \mathbb{R}^n$. In other words, the market value of the query $Q_t$ is $v_t = {\mathbf{x}_t}^T \theta^*$. We then consider extensions to the uncertain setting and non-linear models.




We start with a special case of the linear model, where each feature vector $\mathbf{x}_t$ is one-dimensional, \ie, $n=1$. For example, the single feature can be the total privacy compensation or the reserve price $q_t$, and the weight $\theta^*$ denotes some fixed but unknown revenue-to-cost ratio. We note that to minimize the regret in pricing the query $Q_t$, the data broker needs to have a good estimation of its market value $v_t$, which can be equivalently converted to having a good knowledge of $\theta^*$. We let $\mathcal{K}_t$ denote the data broker's knowledge set (or intuitively, feasible values) of $\theta^*$ in round $t$, \eg, the initial knowledge set $\mathcal{K}_1$ can be an interval $[\ell, u]$ for some $\ell, u \in \mathbb{R}$. Besides, after round $t$, if the posted price $p_t$ is rejected (\resp, accepted), the data broker will update her knowledge set $\mathcal{K}_t$ to $\mathcal{K}_{t+1} = \mathcal{K}_t \bigcap \{\theta \in \mathbb{R} | p_t \geq {\mathbf{x}_t}^T \theta\}$ (\resp, $\mathcal{K}_{t+1} = \mathcal{K}_t \bigcap \{\theta \in \mathbb{R} | p_t \leq {\mathbf{x}_t}^T \theta\}$). Now, the key problem for the data broker is how to set the posted price $p_t$. In fact, the knowledge set $\mathcal{K}_t$ can impose a lower bound $\ubar{p}_t = \min_{\theta \in \mathcal{K}_t} {\mathbf{x}_t}^T\theta$ and an upper bound $\bar{p}_t = \max_{\theta \in \mathcal{K}_t} {\mathbf{x}_t}^T\theta$ on estimating the market value $v_t$ and thus on the posted price $p_t$, while the reserve price $q_t$ imposes the other lower bound on the posted price $p_t$. If the posted price $p_t$ is $\max(q_t, \ubar{p}_t)$, the data broker can sell the query $Q_t$ with the highest probability. However, in the worst case, where $q_t \leq \ubar{p}_t$, this deal will not refine the knowledge set, \ie, $\mathcal{K}_{t+1} = \mathcal{K}_t$, and thus cannot benefit the following rounds. We call such a price $\max(q_t, \ubar{p}_t)$ a {\em conservative price}. On the other hand, as shown in Fig.~\ref{fig:line:exploration}, inspired by bisection, we define the larger value of the reserve price and the middle price, namely $\max(q_t, \frac{\ubar{p}_t + \bar{p}_t}{2})$, as an {\em exploratory price}. In the worst case, the feedback from posting this price can narrow down the knowledge set $\mathcal{K}_t$ by most, and thus can benefit the subsequent rounds most. Of course, compared with the conservative price, the exploratory price would suffer a higher risk of no sale or losing the current revenue. Besides, we note that both the conservative price and the exploratory price have adequately exploited the experience from the previous rounds, \ie, the latest knowledge set $\mathcal{K}_t$, while the difference is that these two types of posted prices give distinct biases to the immediate rewards (exploitation) and the future rewards (exploration). Yet, another key problem arisen is when the data broker should choose which price. Our strategy is to measure the size of the knowledge set $\mathcal{K}_t$, \eg, the width of interval in the one-dimensional case. If it exceeds some threshold, the data broker chooses the exploratory price to further improve her knowledge set; otherwise, her knowledge set is near optimal, and she chooses the conservative price. In our real design, we use $\bar{p}_t - \ubar{p}_t$ to capture the size of $\mathcal{K}_t$, and let $\epsilon > 0$ denote the threshold.


We further consider the general linear model with multiple features, \ie, $n \geq 2$. The holistic process is the same, whereas the difference lies in the concrete form of the knowledge set $\mathcal{K}_t$. In the one-dimensional case, $\mathcal{K}_t$ is an interval, and both the conservative price and the exploratory price can be efficiently computed from $\mathcal{K}_t$. However, when extended to the multi-dimensional case, we assume that the initial knowledge set is $\mathcal{K}_1 = \{\theta \in \mathbb{R}^n|\ell_i \leq \theta_i \leq u_i, \ell_i, u_i \in \mathbb{R}\}$. Besides, after each round, the knowledge set is updated by adding a linear inequality. Thus, the knowledge set $\mathcal{K}_t$ can be viewed as a set of linear inequalities, whose cardinality is non-decreasing with the number of rounds $t$. To post a price in round $t$, it suffices to solve two linear programs under $\mathcal{K}_t$, which is quite time-consuming, and can be computationally infeasible in online mode. Therefore, we turn to borrowing some key principles from the celebrated ellipsoid method for solving online linear programs, which was first proposed by Khachiyan in 1979~\cite{jour:khachiyan:1979}. The pivotal idea is to replace the raw knowledge set $\mathcal{K}_t$, viewed as a polytope in geometry, with the ellipsoid $\mathcal{E}_t$ of minimum volume that contains $\mathcal{K}_t$. This ellipsoid is normally referred to as the L\"{o}wner-John ellipsoid of the convex body $\mathcal{K}_t$. In addition, by leveraging the property that every ellipsoid is an image of the unit ball under a bijective affine transformation~\cite{book:ellipsoid:depth:cut}, the data broker can efficiently determine each posted price, and further update her knowledge set, which only consume a few matrix-vector and vector-vector multiplications. Fig.~\ref{fig:ellipsoid:exploration} gives an illustration of the exploratory posted price in the two-dimensional case. Here, the points on the dashed line, perpendicular to the feature vector $\mathbf{x}_t$ and tagged with a certain price, represent those weight vectors to derive this price. Besides, in round $t$, the data broker's knowledge about the weight vector $\theta^*$ comprises all the points in the yellow ellipsoid $\mathcal{E}_t$. Based on $\mathcal{E}_t$, the data broker can compute a lower bound $\ubar{p}_t$ and an upper bound $\bar{p}_t$ on the market value $v_t$. Given $\bar{p}_t - \ubar{p}_t$ exceeds the threshold $\epsilon$, the data broker should post the exploratory price $\max(q_t, \frac{\ubar{p}_t + \bar{p}_t}{2})$, \ie, the reserve price $q_t$ here. In addition, this posted price is accepted by the data consumer, which implies that the posted price is no more than the market value, \ie, $p_t = q_t \leq v_t = {\mathbf{x}_t}^T\theta^*$. This feedback enables the data broker to refine her knowledge set $\mathcal{E}_t$, by excluding the region below the cutting line $\{\theta \in \mathbb{R}^2| p_t = {\mathbf{x}_t}^T\theta\}$, which is marked with grid. Furthermore, the new knowledge set $\mathcal{E}_{t+1}$ is obtained by replacing the remaining region of $\mathcal{E}_t$ with its L\"{o}wner-John ellipsoid, \ie, all the points in the blue ellipsoid.



We finally investigate the extensions to the uncertain setting and non-linear models. First, for tractability, we make a common assumption on the randomness $\delta_t$ in the market value $v_t$, where the distribution of $\delta_t$ belongs to subGaussian. We thus bound the absolute value of any $\delta_t$ in all $T$ rounds by $\delta$, with probability near $1$. We regard $\delta$ as a ``buffer" in posting the price and updating the knowledge set, which can circumvent the randomness $\delta_t$ in each round. Second, we mainly investigate four classic non-linear models in market value estimation, whose pattern is first applying an inner feature mapping to the feature vector, then performing dot product with the weight vector, and finally applying an outer non-decreasing and continuous function. By still focusing on the discovery of the weight vector rather than the inner and outer non-linear functions, we can extend our pricing mechanism to this class of non-linear market value models.



\section{Fundamental Design Under Linear Market Value Model}\label{sec:linear:model}

In this section, we first propose the ellipsoid based pricing mechanism under the linear market value model, and then extend to tolerate some uncertainty. We further analyze the performance from the time and space complexities together with the worst-case regret.

\subsection{Ellipsoid based Pricing Mechanism}\label{subsec:version:reserve:price}

\begin{algorithm}[!t]
\small	
\caption{Ellipsoid based Personal Data Pricing}\label{alg:ellipsoid:pricing}

\KwIn{$\mathbf{A}_1 = R^2\mathbf{I}_{n \times n}$, $\mathbf{c}_1 = \mathbf{0}_{n \times 1}$, a threshold $\epsilon$}
\KwOut{Posted price $p_t$ in each round $t \in [T]$}
\For{$t = 1, 2, \ldots, T$}
{
   $\mathcal{E}_{t} = \{\theta \in \mathbb{R}^n | \left(\theta - \mathbf{c}_{t}\right)^T \mathbf{A}_{t}^{-1}\left(\theta - \mathbf{c}_{t}\right)\leq 1\}$\;
   Receive a query $Q_t$ with the feature vector $\mathbf{x}_t \in \mathbb{R}^n$\;
   Determine the reserve price $q_t$ of $Q_t$\;
   $\mathbf{b}_t = \frac{\mathbf{A}_t \mathbf{x}_t}{\sqrt{{\mathbf{x}_t}^T\mathbf{A}_t \mathbf{x}_t}}$\;
   $\ubar{p}_t = \min_{\theta \in \mathcal{E}_t} {\mathbf{x}_t}^T\theta = {\mathbf{x}_t}^T\left(\mathbf{c}_t - \mathbf{b}_t\right)$\;
   $\bar{p}_t = \max_{\theta \in \mathcal{E}_t} {\mathbf{x}_t}^T \theta = {\mathbf{x}_t}^T\left(\mathbf{c}_t + \mathbf{b}_t\right)$\;
   \If{$q_t \geq \bar{p}_t$}
   {
      $\mathbf{A}_{t+1} = \mathbf{A}_t; \mathbf{c}_{t+1} = \mathbf{c}_t$\;
      \Continue\;
   }
   \Else
   {
     \If{$\bar{p}_t - \ubar{p}_t = 2 \sqrt{{\mathbf{x}_t}^T\mathbf{A}_t \mathbf{x}_t} > \epsilon$}
     {
        Post the price $p_t = \max\left\{q_t, \frac{\ubar{p}_t + \bar{p}_t}{2} = {\mathbf{x}_t}^T\mathbf{c}_t\right\}$\;

        $\alpha_t = \frac{\frac{\ubar{p}_t + \bar{p}_t}{2} - p_t}{\sqrt{{\mathbf{x}_t}^T \mathbf{A}_t \mathbf{x}_t}} = \frac{{\mathbf{x}_t}^T\mathbf{c}_t - p_t}{\sqrt{{\mathbf{x}_t}^T \mathbf{A}_t \mathbf{x}_t}}$\;
        \If{$p_t$ {\em is rejected}}
        {
          \If{$-\frac{1}{n} \leq \alpha_t \leq 0$}
          {
            $
             \begin{multlined}[c]
                 \mathbf{A}_{t+1} = \frac{n^2\left(1 - {\alpha_t}^2\right)}{n^2 - 1}\left(\vphantom{- \frac{2\left(1 + n \alpha_t\right)}{\left(n + 1\right)\left(1 + \alpha_t\right)} {\mathbf{b}_t}{\mathbf{b}_t}^T}\right. \mathbf{A}_t\\
                   \quad\quad\ \left. - \frac{2\left(1 + n \alpha_t\right)}{\left(n + 1\right)\left(1 + \alpha_t\right)} {\mathbf{b}_t}{\mathbf{b}_t}^T\right);
             \end{multlined}
            $
           $\mathbf{c}_{t+1} = \mathbf{c}_t - \frac{1 + n\alpha_t}{n + 1} \mathbf{b}_t$\;
         }
         \Else
         {
           $\mathbf{A}_{t+1} = \mathbf{A}_t; \mathbf{c}_{t+1} = \mathbf{c}_t$\;
         }
     }
     \Else
     {
              $
                 \begin{multlined}[c]
                    \mathbf{A}_{t+1} = \frac{n^2\left(1 - {\alpha_t}^2\right)}{n^2 - 1}\left(\vphantom{- \frac{2\left(1 - n \alpha_t\right)}{\left(n + 1\right)\left(1 - \alpha_t\right)} {\mathbf{b}_t}{\mathbf{b}_t}^T}\right.\mathbf{A}_t\\
                    \quad\quad\ \left. - \frac{2\left(1 - n \alpha_t\right)}{\left(n + 1\right)\left(1 - \alpha_t\right)} {\mathbf{b}_t}{\mathbf{b}_t}^T\right);
                 \end{multlined}
              $
           $\mathbf{c}_{t+1} = \mathbf{c}_t + \frac{1 - n\alpha_t}{n + 1} \mathbf{b}_t$\;
     }
     }
     \Else
     {
        Post the price $p_t = \max\left\{q_t, \ubar{p}_t\right\}$\;
        $\mathbf{A}_{t+1} = \mathbf{A}_t; \mathbf{c}_{t+1} = \mathbf{c}_t$;
     }
   }
}
\end{algorithm}

As an appetizer, we first briefly review the definition of an ellipsoid and some of its key properties.
\begin{definition}\label{def:ellipsoid}
$\mathcal{E} \subseteq \mathbb{R}^n$ is an ellipsoid, if there exists a vector $\mathbf{c} \in \mathbb{R}^n$ and a positive definite matrix $\mathbf{A} \in \mathbb{R}^{n \times n}$ such that:
\begin{align}
\mathcal{E} = \left\{\theta \in \mathbb{R}^n \middle| \left(\theta - \mathbf{c}\right)^T \mathbf{A}^{-1}\left(\theta - \mathbf{c}\right) \leq 1 \right\}.\label{eq:ellipsoid:format}
\end{align}
\end{definition}

Intuitively, $\mathbf{c}$ represents the center of the ellipsoid $\mathcal{E}$, while $\mathbf{A}$ portrays its shape. In particular, there are some useful connections between the geometric properties of $\mathcal{E}$ and the algebraic properties of $\mathbf{A}$. We let $\gamma_i(\mathbf{A}) > 0$ denote the $i$-th largest eigenvalue of $\mathbf{A}$. Then, the $i$-th widest axis (\resp, its width) of the ellipsoid $\mathcal{E}$ corresponds to the $i$-th eigenvector (\resp, $2\sqrt{\gamma_i(\mathbf{A)}}$). Besides, the volume of the ellipsoid $\mathcal{E}$, denoted as $V(\mathcal{E})$, only depends on the eigenvalues of $\mathbf{A}$ and the dimension $n$. Specifically,
\begin{align}
V\left(\mathcal{E}\right) = V_n \sqrt{\prod_{i\in\left[n\right]} \gamma_i\left(\mathbf{A}\right)},\label{eq:ellipsoid:volume}
\end{align}
where $V_n$ denotes the volume of the unit ball in $\mathbb{R}^n$, and is a constant that only hinges on $n$.


We now present the ellipsoid based posted price mechanism with the reserve price constraint for online personal data markets in Algorithm~\ref{alg:ellipsoid:pricing} (also called ``the version with reserve price" in our evaluation part). We recall that the initial knowledge set of the data broker about the weight vector $\theta^*$ is $\mathcal{K}_1 = \{\theta \in \mathbb{R}^n|\ell_i \leq \theta_i \leq u_i, \ell_i, u_i \in \mathbb{R}\}$. We choose a ball centered at the origin with radius $R = \sqrt{\sum_{i\in[n]} \max({\ell_i}^2, {u_i}^2)}$ to enclose $\mathcal{K}_1$. This ball can serve as the initial ellipsoid $\mathcal{E}_1$, where $\mathbf{A}_1 = R^2 \mathbf{I}_{n\times n}$ and $\mathbf{c}_1 = \mathbf{0}_{n \times 1}$. In what follows, we focus on a concrete round $t$. The data broker receives a query $Q_t$ with the feature vector $\mathbf{x}_t$ from a data consumer. Without loss of generality, we assume that $\forall t \in [T], \|\mathbf{x}_t\| \leq S$ for some $S \geq 1$. Then, she virtually computes the total privacy compensation allocated to the data owners as the reserve price $q_t$, which imposes a strict lower bound on the posted price $p_t$. Based on the knowledge set $\mathcal{E}_t$, the data broker can elicit that the market value of the query $Q_t$ falls into a certain interval, \ie, $v_t = {\mathbf{x}_t}^T\theta^* \in [\ubar{p}_t, \bar{p}_t]$ (Lines 5--7). If the reserve price is no less than the maximum possible market value, implying that the posted price should be no less than the market value, \ie, $p_t \geq q_t \geq \bar{p}_t \geq v_t$, the query $Q_t$ cannot be sold (Lines 8--10). Otherwise, the data broker judges whether the difference between the maximum and minimum possible market values, \ie, $\bar{p}_t - \ubar{p}_t$, exceeds a threshold $\epsilon$. If yes, she posts the exploratory price (Lines 12--13). Otherwise, she posts the conservative price (Lines 22--23). In fact, the posted price places a cut on the ellipsoid $\mathcal{E}_t$, and splits it into two parts, where the cutting hyperplane is $\{\theta \in \mathbb{R}^n| p_t = {\mathbf{x}_t}^T\theta\}$. Besides, the data broker can compute a parameter $\alpha_t$ to locate the position of the cut (Line 14). Formally, $\alpha_t$ is interpreted as the signed distance from the center $\mathbf{c}_t$ to the cutting hyperplane, measured in the space $\mathbb{R}^n$ endowed with the ellipsoidal norm $\|\cdot\|_{{\mathbf{A}_t}^{-1}}$~\cite{jour:or81:bland,book:ellipsoid:depth:cut}. For example, if the posted price is the middle price, \ie, $p_t = \frac{\ubar{p}_t + \bar{p}_t}{2} = {\mathbf{x}_t}^T \mathbf{c}_t$, the center $\mathbf{c}_t$ is on the cutting hyperplane, and thus $\alpha_t = 0$. Moreover, according to the feedback from the data consumer, the data broker can decide to retain which side of the ellipsoid $\mathcal{E}_t$, and thus update to its L\"{o}wner-John ellipsoid $\mathcal{E}_{t+1}$, by computing the new shape $\mathbf{A}_{t+1}$ and center $\mathbf{c}_{t+1}$ (Lines 15--21). In particular, Gr{\"o}tschel~\et~\cite{book:ellipsoid:depth:cut} have offered the formulas of $\mathbf{A}_{t+1}$ and $\mathbf{c}_{t+1}$, when the remaining part of $\mathcal{E}_t$ is contained in the halfspace like $\{\theta \in \mathbb{R}^n |p_t \geq {\mathbf{x}_t}^T\theta\}$. This corresponds to the rejection branch (Lines 15--19). By the symmetry of ellipsoid, we can obtain the formulas in the acceptance branch (Lines 20--21). Furthermore, if the remaining part after a cut is exactly half of the ellipsoid $\mathcal{E}_t$, we call the cut a {\em central cut}, if the remaining part is less than half, we call it a {\em deep cut}, and if the remaining part is more than half, we call it a {\em shallow cut}. Last but not least, it is worth noting that the data broker is prohibited from refining the ellipsoid with the conservative price (Line 24). The reason is that $\bar{p}_t - \ubar{p}_t$ essentially probes the ellipsoid's width along the direction given by the feature vector $\mathbf{x}_t$ (Please see Fig.~\ref{fig:ellipsoid:exploration} for an intuition.), which is very small ($\leq \epsilon$) when posting the conservative price. Suppose the data broker is allowed to cut along this direction. By adversarially setting the reserve prices, the width of ellipsoid along this direction can shrink successively, while the widthes along other directions can expand exponentially, which can result in $O(T)$ worst-case regret. Details about the adversarial example and its regret analysis are deferred to Lemma~\ref{lem:adversarial:example} in Appendix.



We finally discuss a special case, where Algorithm~\ref{alg:ellipsoid:pricing} executes without the reserve price constraint, hereinafter denoted as Algorithm~\ref{alg:ellipsoid:pricing}* (also called ``the pure version" in our evaluation part). First, the exploratory posted price takes the middle price $\frac{\ubar{p}_t + \bar{p}_t}{2}$, and poses a central cut over the ellipsoid $\mathcal{E}_t$. Second, the conservative posted price takes $\ubar{p}_t$, which is definitely no more than the market value $v_t$, and must be accepted by the data consumer. Besides, the conservative price does not refine $\mathcal{E}_t$, and thus incurs a shallow cut. In a nutshell, there is no deep cut in Algorithm~\ref{alg:ellipsoid:pricing}*.

\subsection{Incorporating Uncertainty}

\begin{algorithm}[!t]
\small
\caption{Personal Data Pricing with Uncertainty}\label{alg:ellipsoid:pricing:noise}

\KwIn{$\mathbf{A}_1 = R^2\mathbf{I}_{n \times n}$, $\mathbf{c}_1 = \mathbf{0}_{n \times 1}$, an uncertainty parameter $\delta = \sqrt{2\log C}\sigma \log T$, a threshold $\epsilon$}
\KwOut{Posted price $p_t$ in each round $t \in [T]$}
\For{$t = 1, 2, \ldots, T$}
{
   $\mathcal{E}_{t} = \{\theta \in \mathbb{R}^n | \left(\theta - \mathbf{c}_{t}\right)^T \mathbf{A}_{t}^{-1}\left(\theta - \mathbf{c}_{t}\right)\leq 1\}$\;
   Receive a query $Q_t$ with the feature vector $\mathbf{x}_t \in \mathbb{R}^n$\;
   Determine the reserve price $q_t$ of $Q_t$\;
   $\mathbf{b}_t = \frac{\mathbf{A}_t \mathbf{x}_t}{\sqrt{{\mathbf{x}_t}^T\mathbf{A}_t \mathbf{x}_t}}$\;
   $\ubar{p}_t = \min_{\theta \in \mathcal{E}_t} {\mathbf{x}_t}^T\theta = {\mathbf{x}_t}^T\left(\mathbf{c}_t - \mathbf{b}_t\right)$\;

   $\bar{p}_t = \max_{\theta \in \mathcal{E}_t} {\mathbf{x}_t}^T \theta = {\mathbf{x}_t}^T\left(\mathbf{c}_t + \mathbf{b}_t\right)$\;
   \If{$q_t \geq \bar{p}_t + \delta$}
   {
      $\mathbf{A}_{t+1} = \mathbf{A}_t$; $\mathbf{c}_{t+1} = \mathbf{c}_t$\;
      \Continue\;
   }
   \Else
   {
     \If{$\bar{p}_t - \ubar{p}_t = 2 \sqrt{{\mathbf{x}_t}^T\mathbf{A}_t \mathbf{x}_t} > \epsilon$}
     {
       Post the price $p_t = \max\left\{q_t, \frac{\ubar{p}_t + \bar{p}_t}{2} = {\mathbf{x}_t}^T\mathbf{c}_t\right\}$\;
       \If{$p_t$ {\em is rejected}}
       {
         $\alpha_t = \frac{\frac{\ubar{p}_t + \bar{p}_t}{2} - \left(p_t + \delta\right)}{\sqrt{{\mathbf{x}_t}^T \mathbf{A}_t \mathbf{x}_t}} = \frac{{\mathbf{x}_t}^T\mathbf{c}_t - p_t - \delta}{\sqrt{{\mathbf{x}_t}^T \mathbf{A}_t \mathbf{x}_t}}$\;
         \If{$-\frac{1}{n} \leq \alpha_t \leq 1$}
         {
           $
             \begin{multlined}[c]
                 \mathbf{A}_{t+1} = \frac{n^2\left(1 - {\alpha_t}^2\right)}{n^2 - 1}\left(\vphantom{- \frac{2\left(1 + n \alpha_t\right)}{\left(n + 1\right)\left(1 + \alpha_t\right)} {\mathbf{b}_t}{\mathbf{b}_t}^T}\right. \mathbf{A}_t\\
                   \quad\quad\ \left. - \frac{2\left(1 + n \alpha_t\right)}{\left(n + 1\right)\left(1 + \alpha_t\right)} {\mathbf{b}_t}{\mathbf{b}_t}^T\right);
             \end{multlined}
           $
           $\mathbf{c}_{t+1} = \mathbf{c}_t - \frac{1 + n\alpha_t}{n + 1} \mathbf{b}_t$\;
         }
         \Else
         {
           $\mathbf{A}_{t+1} = \mathbf{A}_t$; $\mathbf{c}_{t+1} = \mathbf{c}_t$\;
         }
      }
      \Else
      {
         $\alpha_t = \frac{\frac{\ubar{p}_t + \bar{p}_t}{2} - \left(p_t - \delta\right)}{\sqrt{{\mathbf{x}_t}^T \mathbf{A}_t \mathbf{x}_t}} = \frac{{\mathbf{x}_t}^T\mathbf{c}_t - p_t + \delta}{\sqrt{{\mathbf{x}_t}^T \mathbf{A}_t \mathbf{x}_t}}$\;
         \If{$-\frac{1}{n} \leq -\alpha_t \leq 1$}
         {
            $
               \begin{multlined}[c]
                   \mathbf{A}_{t+1} = \frac{n^2\left(1 - {\alpha_t}^2\right)}{n^2 - 1}\left(\vphantom{- \frac{2\left(1 - n \alpha_t\right)}{\left(n + 1\right)\left(1 - \alpha_t\right)} {\mathbf{b}_t}{\mathbf{b}_t}^T}\right.\mathbf{A}_t\\
                   \quad\quad\ \left. - \frac{2\left(1 - n \alpha_t\right)}{\left(n + 1\right)\left(1 - \alpha_t\right)} {\mathbf{b}_t}{\mathbf{b}_t}^T\right);
               \end{multlined}
            $
           $\mathbf{c}_{t+1} = \mathbf{c}_t + \frac{1 - n\alpha_t}{n + 1} \mathbf{b}_t$\;
          }
          \Else
          {
            $\mathbf{A}_{t+1} = \mathbf{A}_t$; $\mathbf{c}_{t+1} = \mathbf{c}_t$\;
          }
       }
     }
     \Else
     {
       Post the price $p_t = \max\left\{q_t, \ubar{p}_t - \delta\right\}$\;
       $\mathbf{A}_{t+1} = \mathbf{A}_t$; $\mathbf{c}_{t+1} = \mathbf{c}_t$\;
     }
   }
}
\end{algorithm}


We now extend to the uncertain setting. We first make an assumption on the random variable $\delta_t$ in the market value model. We assume that the distribution of $\delta_t$ is $\sigma$-subGaussian, \ie, there exists a constant $C \in \mathbb{R}$ such that
\begin{align}
\forall z > 0, \text{Pr}\left(|\delta_t| > z\right) \leq C \exp\left(-\frac{z^2}{2\sigma^2}\right).\label{eq:sub:gaussian}
\end{align}
This is a common assumption widely used in modeling uncertainty~\cite{proc:nips09:subgaussian:C,proc:nips10:subgaussian,proc:icml13:chen:subgaussian,jour:vldb17:subgaussian}. In particular, many celebrated probability distributions, including normal distribution, uniform distribution, Rademacher distribution, and bounded random variables, are subGaussian. For example, normal distribution is $\sigma$-subGaussian for its standard deviation $\sigma$ and for $C = 2$. We then assign a value $\delta = \sqrt{2\log C} \sigma \log T$ to the variable $z$ in Equation~(\ref{eq:sub:gaussian}), and thus get:
\begin{align}
\text{Pr}\left(|\delta_t| > \delta\right) \leq T^{-\log T}.
\end{align}
We further apply Boole's inequality to the above inequality for all $t \in [T]$, and derive:
\begin{equation}
\begin{aligned}
&\exists t \in [T], \text{Pr}\left(|\delta_t| > \delta\right) \leq  T^{1-\log T}\\
\Rightarrow\quad&\forall t \in [T], \text{Pr}\left(|\delta_t| \leq \delta\right) \geq 1 - T^{1-\log T} \geq 1 - 1/T,
\end{aligned}\label{eq:noise:bound}
\end{equation}
where the last inequality holds for $T \geq 8$.

From Equation~(\ref{eq:noise:bound}), we can draw that in each round $t$, the randomness $\delta_t$ in the market value $v_t$ is bounded by $\delta$ in absolute value with probability at least $1 - 1/T$. Therefore, when posting the price and updating the knowledge set, we let the data broker introduce a ``buffer" of size $\delta$ to circumvent the randomness $\delta_t$. Specifically, if the data broker posts the price $p_t$ and observes a rejection, she can no longer infer that $p_t \geq {\mathbf{x}_t}^T \theta^*$. Instead, she should infer that $p_t \geq v_t = {\mathbf{x}_t}^T\theta^* - \delta_t \geq {\mathbf{x}_t}^T \theta^* - \delta$. In a similar way, if she observes an acceptance, she will infer that $p_t \leq v_t = {\mathbf{x}_t}^T\theta^* + \delta_t \leq {\mathbf{x}_t}^T\theta^* + \delta$ rather than $p_t \leq {\mathbf{x}_t}^T \theta^*$. Intuitively, in the case of rejection (\resp, acceptance), the data broker imagines that she had posted $p_t + \delta$ (\resp, $p_t - \delta$). We call $p_t + \delta$ (\resp, $p_t - \delta$) the {\em effective posted price} in the case of rejection (\resp, acceptance).

We now present the robust pricing mechanism in Algorithm~\ref{alg:ellipsoid:pricing:noise} (also called ``the version with reserve price and uncertainty" in our evaluation part). For the sake of conciseness, we mainly illustrate the differences from Algorithm~\ref{alg:ellipsoid:pricing}. First, in Lines 8--10, the condition for a certain no deal changes into $q_t \geq \bar{p}_t + \delta$. Only under this condition, the posted price must be no less than the market value, since $p_t \geq q_t \geq \bar{p}_t + \delta \geq v_t = {\mathbf{x}_t}^T\theta^* + \delta_t$. Second, in Lines 15 and 21, we utilize the effective exploratory prices to compute the positions of the cutting hyperplanes. In particular, due to the uncertainty in the market value, if the data broker posts the same price, the feedback from the data consumer can result in a smaller refinement of the knowledge set, \ie, the depth of the cut over the ellipsoid decreases. We provide Fig.~\ref{fig:ellipsoid:exploration:noise} for a visual comparison with Fig.~\ref{fig:ellipsoid:exploration}, where Fig.~\ref{fig:ellipsoid:exploration:noise} incorporates the uncertainty, while Fig.~\ref{fig:ellipsoid:exploration} does not. Third, in Line 27, the conservative posted price, involving $\ubar{p}_t$, decreases by $\delta$ to keep its high acceptance ratio.


We finally investigate Algorithm~\ref{alg:ellipsoid:pricing:noise} without the reserve price constraint, hereinafter denoted as Algorithm~\ref{alg:ellipsoid:pricing:noise}* (also called ``the version with uncertainty" in our evaluation part). First, the exploratory posted price is still the middle price $\frac{\ubar{p}_t + \bar{p}_t}{2}$. Besides, the effective exploratory price used in refining the ellipsoid is $\frac{\ubar{p}_t + \bar{p}_t}{2} + \delta$ (\resp, $\frac{\ubar{p}_t + \bar{p}_t}{2} - \delta$) in the case of rejection (\resp, acceptance), and the corresponding position parameter $\alpha_t$ is $-\delta/\sqrt{{\mathbf{x}_t}^T\mathbf{A}_t\mathbf{x}_t}$ (\resp, $\delta/\sqrt{{\mathbf{x}_t}^T\mathbf{A}_t\mathbf{x}_t}$). If $\delta > 0$, the effective exploratory prices will refine the ellipsoid less than half. Furthermore, the new ellipsoids $\mathcal{E}_{t+1}$'s obtained in the cases of acceptance and rejection are symmetric according to the hyperplane through the center of the old ellipsoid $\mathcal{E}_{t}$, namely $\{\theta \in \mathbb{R}^n| {\mathbf{x}_t}^T\theta = {\mathbf{x}_t}^T\mathbf{c}_t\}$. Second, the conservative posted price is $\ubar{p}_t - \delta$, and can be either rejected or accepted. Here, the rejection case happens when the market value is outside the interval $[\ubar{p}_t - \delta, \bar{p}_t + \delta]$, and has probability no more than $1/T$ from Equation~(\ref{eq:noise:bound}). Besides, the conservative price keeps the ellipsoid unchanged. Jointly considering two types of posted prices, we can find that if the uncertainty in the market value is incorporated, \ie, $\delta > 0$, Algorithm~\ref{alg:ellipsoid:pricing:noise}* only has shallow cuts. Of course, when the uncertainty is ignored by setting $\delta = 0$, the above analysis degenerates to the analysis of Algorithm~\ref{alg:ellipsoid:pricing}* at the end of Section~\ref{subsec:version:reserve:price}.





\subsection{Performance Analysis}
We analyze the performance of Algorithm~\ref{alg:ellipsoid:pricing:noise} from its time and space complexities as well as worst-case regret.

\subsubsection{Time and Space Complexities}
Considering the data broker needs to run the posted price mechanism in online mode, Algorithm~\ref{alg:ellipsoid:pricing:noise} should be quite efficient in each iteration. We analyze its single-round complexity from the computation and memory overheads.

First, the computation overhead of the data broker in a round of data trading mainly comes from two parts: one is to determine the posted price $p_t$, which roughly consumes 2 matrix-vector and 3 vector-vector multiplications; the other is to update the shape and the center of the ellipsoid, which roughly consumes 1 vector-vector multiplication in the worst case. Thus, the time complexity is $O(n^2)$. Second, the memory overhead of the data broker is mainly caused by maintaining the knowledge set $\mathcal{E}_t$, or alteratively the shape and the center of the ellipsoid, which requires one $n \times n$ matrix and one $n \times 1$ vector, respectively. Hence, the space complexity is $O(n^2)$.

In conclusion, our proposed pricing mechanism has a light load, and can apply to online personal data markets. We shall report the detailed overheads in Section~\ref{subsec:implementation:overhead}.

\subsubsection{Worst-Case Regret}\label{sec:regret:analysis}

We analyze the worst-case regret of Algorithm~\ref{alg:ellipsoid:pricing:noise}, which is $O(\max(n^2\log(T/n), n^3\log(T/n)/T))$ under the low uncertain setting $\delta = O(n/T)$, namely Theorem~\ref{theorem:alg2}. We first prove that the existence of reserve price cannot increase the regret of a posted price mechanism in single round (Lemma~\ref{lem:reserve:price:regret}). Thus, we can use Algorithm~\ref{alg:ellipsoid:pricing:noise} without the reserve price constraint, namely Algorithm~\ref{alg:ellipsoid:pricing:noise}*, as a springboard. In particular, to get an upper bound on the cumulative regret of Algorithm~\ref{alg:ellipsoid:pricing:noise}, we need to derive an upper bound on the number of rounds where the exploratory prices are posted. We derive this upper bound in a roundabout way: we first obtain the upper bound in Algorithm~\ref{alg:ellipsoid:pricing:noise}* (Lemma~\ref{lem:exploratory:number:round:alpha}), and further prove that it still holds in Algorithm~\ref{alg:ellipsoid:pricing:noise} by reduction and analyzing the impact of reserve price (Lemma~\ref{lem:original:exploratory:number:round}). We elicit Lemma~\ref{lem:exploratory:number:round:alpha} in a squeezing manner, particularly through constructing an upper bound and a lower bound on the final volume of the ellipsoid. For the upper bound, we adopt a core technique in proving the convergence of the celebrated ellipsoid method: the ratio between the volumes of an ellipsoid and the L\"{o}wner-John ellipsoid after a cut has an upper bound (Lemma~\ref{lem:ellipsoid:volume:ratio:alpha}). Regarding the lower bound, we resort to the formula for computing an ellipsoid's volume by multiplying all the eigenvalues of its shape matrix. Thus, we can find a lower bound on the volume, by constructing a lower bound on the smallest eigenvalue (Lemmas~\ref{lem:smallest:eigenvalue:nondecresing:alpha} and~\ref{lem:smallest:eigenvalue:change:alpha}). The cornerstone of these two lemmas is the linear algebra machinery in Lemma~\ref{lem:smallest:eigenvalue}, which captures how the smallest eigenvalue changes after a matrix is modified by a rank-one matrix. In what follows, we present, prove, and clarify the detailed lemmas and theorem in a bottom-up way.



\begin{lemma}\label{lem:reserve:price:regret}
The existence of reserve price cannot increase the regret of a posted price mechanism in single round.
\end{lemma}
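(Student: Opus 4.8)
\emph{Proof plan.} The plan is to fix a single round $t$, freeze everything about that round (the knowledge ellipsoid $\mathcal{E}_t$, the query $\mathbf{x}_t$, the induced bounds $\ubar{p}_t,\bar{p}_t$, the realized market value $v_t\ge 0$, and the reserve price $q_t\ge 0$), and compare the single-round regret of our mechanism \emph{with} the reserve price against the single-round regret of the \emph{same} mechanism run without the reserve constraint (\ie, Algorithm~\ref{alg:ellipsoid:pricing} versus Algorithm~\ref{alg:ellipsoid:pricing}*, or Algorithm~\ref{alg:ellipsoid:pricing:noise} versus Algorithm~\ref{alg:ellipsoid:pricing:noise}*). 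Let $p_t'$ be the price posted by the reserve-free rule (the middle price or the conservative price). The first observation is that, on the frozen state, the reserve-price rule either declares ``no deal'' (Lines 8--10, which forces $v_t\le q_t$ and thus zero single-round regret by the first branch of~\eqref{eq:regret:formula:rp}), or posts exactly $p_t=\max\{q_t,p_t'\}$, because its conservative and exploratory prices are obtained from the reserve-free ones just by taking the outer maximum with $q_t$. Hence it is enough to prove the inequality $R_t\big(\max\{q_t,p_t'\};\,q_t\big)\le R_t(p_t';\,\text{no reserve})$, where $R_t(\cdot\,;\cdot)$ is the single-round regret of~\eqref{eq:regret:formula:rp} and $R_t(p;\text{no reserve})=v_t-p\,\mathbf{1}\{p\le v_t\}$.

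The core step is a three-way case split on the position of $q_t$ relative to $v_t$, leaning on two elementary facts: the reserve-free single-round regret is nonnegative (since $v_t\ge 0$), and on $\{p\le v_t\}$ the map $p\mapsto v_t-p$ is nonincreasing. If $q_t>v_t$, the reserve-price regret is $0$ and we are done. If $q_t\le v_t$ and $p_t'\ge q_t$, then $\max\{q_t,p_t'\}=p_t'$ and the two regrets are literally equal. If $q_t\le v_t$ and $p_t'<q_t$, then the posted price becomes $q_t\le v_t$, so the reserve-price regret is $v_t-q_t$, while the reserve-free regret is $v_t-p_t'$ (here $p_t'<q_t\le v_t$ forces the sale indicator to be $1$); since $p_t'<q_t$ we get $v_t-q_t\le v_t-p_t'$. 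In all three branches the reserve-price single-round regret is at most the reserve-free one, which is exactly the lemma; no summation over rounds is needed, because the claim is stated per-round.

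The one subtle point — and the place I expect most of the care to go — is the ``overshoot'' branch, where imposing the floor $q_t$ drives the posted price strictly past $v_t$. The resolution is to notice that this can only happen when $q_t>v_t$ (if instead $p_t'>v_t$ already, the floor changes nothing), and that~\eqref{eq:regret:formula:rp} is deliberately defined so that a transaction rendered impossible \emph{by the reserve price} costs zero regret rather than the full value $v_t$; this is precisely what tilts the inequality in the right direction. The uncertain version (Algorithm~\ref{alg:ellipsoid:pricing:noise} versus Algorithm~\ref{alg:ellipsoid:pricing:noise}*) needs no new idea: one repeats the same case analysis with the posted prices replaced by their effective counterparts and $\bar{p}_t,\ubar{p}_t$ by their $\delta$-buffered versions.
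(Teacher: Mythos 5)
Your proposal is correct and follows essentially the same route as the paper's proof: both reduce to comparing $R_t$ at the price $\max\{q_t,p_t'\}$ against $R_t'$ at $p_t'$, split on whether $q_t>v_t$ (zero regret by definition of~\eqref{eq:regret:formula:rp}) or $q_t\le v_t$ (where the indicator $\mathbf{1}\{\max(q_t,p_t')\le v_t\}$ collapses to $\mathbf{1}\{p_t'\le v_t\}$ and the larger price yields weakly smaller regret). Your further split of the second case into $p_t'\ge q_t$ and $p_t'<q_t$ is just an unrolling of the paper's single inequality $\max(q_t,p_t')\ge p_t'$, so the two arguments are the same in substance.
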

\begin{proof}
We let $v_t$ denote the market value, let $q_t$ denote the reserve price, let $p_t'$ denote the pure posted price, and let $p_t$ denote the posted price with the reserve price constraint, \ie, $p_t = \max(q_t, p_t')$. We can express the regret of the posted price mechanism without reserve price in round $t$ as:
\begin{equation}
R_t' = v_t - p_t' \mathbf{1}\left\{p_t' \leq v_t\right\}.
\end{equation}
After introducing the reserve price constraint, the regret changes to $R_t$ given in Equation~(\ref{eq:regret:formula:rp}). We now prove $R_t \leq R_t'$ in two complementary cases: $q_t > v_t$ and $q_t \leq v_t$.

Case 1 ($q_t > v_t$): We can derive that $R_t = 0 \leq R_t'$.

Case 2 ($q_t \leq v_t$): We can derive that:
\begin{align}
\nonumber
R_t &= v_t - p_t \mathbf{1}\left\{p_t \leq v_t\right\}\\
\nonumber
    &= v_t - \max\left(q_t, p_t'\right) \mathbf{1}\left\{\max\left(q_t, p_t'\right) \leq v_t\right\}\\
    &= v_t - \max\left(q_t, p_t'\right) \mathbf{1}\left\{p_t' \leq v_t\right\} \label{eq:lem1:1}\\
    &\leq v_t - p_t'\mathbf{1}\left\{p_t' \leq v_t\right\} \label{eq:lem1:2}\\
\nonumber
    &= R_t',
\end{align}
where Equation~(\ref{eq:lem1:1}) follows from that under the antecedent $q_t \leq v_t$, the conditional statement $\{\max(q_t, p_t') \leq v_t \Leftrightarrow q_t \leq v_t$ and $p_t' \leq v_t\}$ can be simplified to $p_t' \leq v_t$. Besides, the inequality in Equation~(\ref{eq:lem1:2}) follows from the maximum function, and takes equal sign when $q_t \leq p_t'$.

Jointly considering the above two cases, we complete the proof.
\end{proof}


According to Lemma~\ref{lem:reserve:price:regret}, for the worst-case regret analysis of Algorithm~\ref{alg:ellipsoid:pricing:noise}, we can use Algorithm~\ref{alg:ellipsoid:pricing:noise}* as its stepping stone. Besides, given the worst-case cumulative regret is obtained under the condition that all the exploratory prices are rejected, we shall focus on the rejection branch of Algorithm~\ref{alg:ellipsoid:pricing:noise} (Lines 14--19). Specifically, the position parameter $\alpha_t$ will utilize the formula in Line 15. Of course, by symmetry, the analysis of the acceptance branch can be similarly derived. Moreover, our key strategy of the following regret analysis is to identify an upper bound on the number of the rounds where the data broker posts the exploratory prices in Algorithm~\ref{alg:ellipsoid:pricing:noise}*, and further show this upper bound still holds in Algorithm~\ref{alg:ellipsoid:pricing:noise}. This converts to tracing the evolution of the ellipsoid's volume, particularly its upper bound and lower bound.

We first introduce a well-studied lemma to construct the upper bound on the ellipsoid's final volume, which is the basis for proving the convergence of the conventional ellipsoid method~\cite{book:ellipsoid:depth:cut} in the field of convex optimization.
\begin{lemma}\label{lem:ellipsoid:volume:ratio:alpha}
Let $\mathcal{E}_t$ denote an ellipsoid, and let $\mathcal{E}_{t+1}$ denote the L\"{o}wner-John ellipsoid obtained after a cut over $\mathcal{E}_t$ with the position parameter $\alpha_t$. If $\alpha_t \in [-1/n, 0]$, we have:
\begin{align}
\frac{V\left(\mathcal{E}_{t+1}\right)}{V\left(\mathcal{E}_{t}\right)} \leq \exp\left(-\frac{\left(1 + n\alpha_t\right)^2}{5n}\right).
\end{align}
\end{lemma}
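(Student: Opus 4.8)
The plan is in two movements: pin down the volume ratio as an \emph{exact} one-variable function of the cut position $\alpha_t$, then reduce the claimed bound to an elementary integral estimate. For the first movement I would use $V(\mathcal{E}) = V_n\sqrt{\det\mathbf{A}}$, which is Equation~(\ref{eq:ellipsoid:volume}) rewritten through $\prod_{i\in[n]}\gamma_i(\mathbf{A}) = \det\mathbf{A}$, so that $V(\mathcal{E}_{t+1})/V(\mathcal{E}_t) = \sqrt{\det\mathbf{A}_{t+1}/\det\mathbf{A}_t}$. Substituting the update $\mathbf{A}_{t+1} = \frac{n^2(1-\alpha_t^2)}{n^2-1}\bigl(\mathbf{A}_t - \frac{2(1+n\alpha_t)}{(n+1)(1+\alpha_t)}\mathbf{b}_t\mathbf{b}_t^{T}\bigr)$ from the rejection branch of Algorithm~\ref{alg:ellipsoid:pricing}, pulling the scalar out as its $n$-th power, and invoking the matrix-determinant lemma $\det(\mathbf{A}_t - c\,\mathbf{b}_t\mathbf{b}_t^{T}) = \det(\mathbf{A}_t)\,(1 - c\,\mathbf{b}_t^{T}\mathbf{A}_t^{-1}\mathbf{b}_t)$ together with the identity $\mathbf{b}_t^{T}\mathbf{A}_t^{-1}\mathbf{b}_t = 1$ (immediate from $\mathbf{b}_t = \mathbf{A}_t\mathbf{x}_t/\sqrt{\mathbf{x}_t^{T}\mathbf{A}_t\mathbf{x}_t}$ and the symmetry of $\mathbf{A}_t$), and finally using $1 - \frac{2(1+n\alpha_t)}{(n+1)(1+\alpha_t)} = \frac{(n-1)(1-\alpha_t)}{(n+1)(1+\alpha_t)}$, I obtain
\begin{equation*}
\frac{V(\mathcal{E}_{t+1})}{V(\mathcal{E}_{t})} = \left(\frac{n^2(1-\alpha_t^2)}{n^2-1}\right)^{n/2}\left(\frac{(n-1)(1-\alpha_t)}{(n+1)(1+\alpha_t)}\right)^{1/2}.
\end{equation*}
This is precisely the classical central/deep-cut volume reduction of the ellipsoid method, so as an alternative one may simply cite Gr{\"o}tschel~\et~\cite{book:ellipsoid:depth:cut} or Bland~\et~\cite{jour:or81:bland} for it (here $n\geq 2$, so that $n^2-1\neq 0$).

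For the second movement, set $\phi(\alpha) = \ln\bigl(V(\mathcal{E}_{t+1})/V(\mathcal{E}_t)\bigr)$ as a function of $\alpha\in[-1/n,0]$. Two facts make the estimate painless. First, at $\alpha=-1/n$ the leading scalar equals $1$ and the rank-one coefficient $\frac{2(1+n\alpha)}{(n+1)(1+\alpha)}$ vanishes, so $\mathbf{A}_{t+1}=\mathbf{A}_t$ and $\phi(-1/n)=0$. Second, differentiating the closed form above gives the clean $\phi'(\alpha) = -\frac{1+n\alpha}{1-\alpha^2}$. Therefore
\begin{equation*}
-\phi(\alpha) = \int_{-1/n}^{\alpha}\bigl(-\phi'(s)\bigr)\,ds = \int_{-1/n}^{\alpha}\frac{1+ns}{1-s^2}\,ds \ \geq\ \int_{-1/n}^{\alpha}(1+ns)\,ds = \frac{(1+n\alpha)^2}{2n},
\end{equation*}
where the inequality uses $1+ns\geq 0$ and $\frac{1}{1-s^2}\geq 1$ for $s\in[-1/n,0]$, and the last step is a one-line integration. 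Exponentiating, $V(\mathcal{E}_{t+1})/V(\mathcal{E}_t) = e^{\phi(\alpha)} \leq \exp\bigl(-\frac{(1+n\alpha)^2}{2n}\bigr) \leq \exp\bigl(-\frac{(1+n\alpha)^2}{5n}\bigr)$, which is the claim (the constant $5$ leaves room to spare).

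The genuinely delicate step is the first one --- extracting the \emph{exact} ratio --- because a head-on approach that separately bounds the $\alpha$-free ``central-cut'' factor $\bigl(\frac{n^2}{n^2-1}\bigr)^{n/2}\bigl(\frac{n-1}{n+1}\bigr)^{1/2}$ and the $\alpha$-dependent factor does not recombine cleanly for small $n$: the naive $\ln(1+a)\leq a$ loses too much near $\alpha=-1/n$, where the ratio is actually $1$. Recognizing that one should anchor the logarithmic ratio at $\alpha=-1/n$ and integrate the simple log-derivative is what turns the analysis into two short lines; after that the only care needed is the determinant bookkeeping (the matrix-determinant lemma and the $\mathbf{b}_t^{T}\mathbf{A}_t^{-1}\mathbf{b}_t=1$ identity) and a correct differentiation to confirm $\phi'(\alpha)=-\frac{1+n\alpha}{1-\alpha^2}$.
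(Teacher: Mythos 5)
Your proof is correct, and it supplies something the paper deliberately omits: the paper states Lemma~\ref{lem:ellipsoid:volume:ratio:alpha} without proof, presenting it as a ``well-studied'' fact about the deep-cut ellipsoid method and citing Gr{\"o}tschel~\et~\cite{book:ellipsoid:depth:cut}, whereas you give a complete, self-contained derivation. Your first movement correctly recovers the exact volume ratio: the identity $\mathbf{b}_t^{T}\mathbf{A}_t^{-1}\mathbf{b}_t=1$, the matrix-determinant lemma, and the simplification $1-\frac{2(1+n\alpha_t)}{(n+1)(1+\alpha_t)}=\frac{(n-1)(1-\alpha_t)}{(n+1)(1+\alpha_t)}$ all check out, and your closed form is algebraically identical to the standard $\bigl(\frac{n^2(1-\alpha^2)}{n^2-1}\bigr)^{(n-1)/2}\frac{n(1-\alpha)}{n+1}$. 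Your second movement is where the argument is genuinely elegant: anchoring $\phi$ at $\alpha=-1/n$ (where the cut is vacuous and the ratio is exactly $1$), verifying $\phi'(\alpha)=-\frac{1+n\alpha}{1-\alpha^2}$ (which I confirm), and lower-bounding the integrand by $1+ns\geq 0$ on $[-1/n,0]$ yields $-\phi(\alpha)\geq\frac{(1+n\alpha)^2}{2n}$ in one line. This actually establishes the stronger constant $2n$ in place of the lemma's $5n$ (and is sharper than the textbook $2(n+1)$), from which the stated bound follows trivially. The only caveats are the ones you already flag: the argument requires $n\geq 2$ (the paper handles $n=1$ separately in Theorem~\ref{them:one:dimension}, where the ellipsoid degenerates to an interval), and the restriction $\alpha_t\in[-1/n,0]$ is exactly what makes both the sign of the integrand and the orientation of the integral work. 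No gaps.
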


We next build the lower bound on the ellipsoid's final volume, through monitoring how the smallest eigenvalue of its shape matrix changes. Before that, we introduce a cornerstone, a lemma from~\cite{jour:siam73:golub,book:wilkinson1965algebraic}, to help bound the smallest eigenvalue, if a matrix is modified by another matrix of rank one, \eg, a vector-vector multiplication. In fact, a matrix modified by another rank-one matrix is the update pattern of the ellipsoid's shape matrix in Algorithm~\ref{alg:ellipsoid:pricing:noise} (Lines 17 and 23). Besides, the building block of this lemma is the characteristic polynomial of a square matrix in linear algebra.
\begin{lemma}\label{lem:smallest:eigenvalue}
Let $\mathbf{A}_{n\times n}$ be a symmetric matrix, and let $\gamma_n(\mathbf{A})$ denote its smallest eigenvalue. Let $\mathbf{D} = \mathbf{A} - \beta\mathbf{b}\mathbf{b}^T$ for $\beta > 0$ and $\mathbf{b}\in \mathbb{R}^n$, let $\gamma_n(\mathbf{D})$ denote its smallest eigenvalue, and let $\varphi_{\mathbf{D}}(z)$ denote its characteristic polynomial with the variable $z \in \mathbb{R}$. In particular, when $z$ is not one of the eigenvalues of $\mathbf{A}$, \oie, $z \neq \gamma_i(\mathbf{A}) \forall i \in [n]$, $\varphi_{\mathbf{D}}(z)$ can be expressed as:
\begin{align*}
\varphi_\mathbf{D}\left(z\right) = \det\left(\mathbf{D} - z\mathbf{I}_{n\times n}\right) = \prod_{j\in [n]}\left(\gamma_j\left(\mathbf{A}\right) - z\right) \psi_{\mathbf{D}}\left(z\right),
\end{align*}
where
\begin{align}
\psi_{\mathbf{D}}\left(z\right) = 1 - \beta\sum_{i\in\left[n\right]}\frac{{b_i}^2}{\gamma_i\left(\mathbf{A}\right) - z}.
\end{align}
Then, we have:
\begin{align}
 \forall z < \gamma_n\left(\mathbf{A}\right), \varphi_{\mathbf{D}}\left(z\right) > 0 \Rightarrow \gamma_n\left(\mathbf{D}\right) \geq z.
\end{align}
\end{lemma}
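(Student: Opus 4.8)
The plan is to build the result from two ingredients: an explicit closed form for the characteristic polynomial $\varphi_{\mathbf{D}}$, and the eigenvalue interlacing that a rank-one downdate forces between $\mathbf{D}$ and $\mathbf{A}$. Once both are in hand, the implication $\varphi_{\mathbf{D}}(z)>0 \Rightarrow \gamma_n(\mathbf{D})\geq z$ follows from a short sign argument on the factored polynomial.

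First I would establish the factorization. Since $\mathbf{A}$ is symmetric, diagonalize it as $\mathbf{A} = \mathbf{Q}\,\mathrm{diag}\bigl(\gamma_1(\mathbf{A}),\dots,\gamma_n(\mathbf{A})\bigr)\mathbf{Q}^T$ with $\mathbf{Q}$ orthogonal, and read $b_i$ as the $i$-th coordinate of $\mathbf{b}$ in this eigenbasis, i.e. the $i$-th entry of $\mathbf{Q}^T\mathbf{b}$ (equivalently, assume $\mathbf{A}$ diagonal without loss of generality). As $\det$ is invariant under orthogonal conjugation, $\varphi_{\mathbf{D}}(z) = \det\bigl(\mathrm{diag}(\gamma_i(\mathbf{A})-z) - \beta (\mathbf{Q}^T\mathbf{b})(\mathbf{Q}^T\mathbf{b})^T\bigr)$. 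For $z \neq \gamma_i(\mathbf{A})$ the diagonal part is invertible, so the matrix determinant lemma for a rank-one update, $\det(\mathbf{M} - \beta\mathbf{u}\mathbf{u}^T) = \det(\mathbf{M})\bigl(1 - \beta\,\mathbf{u}^T\mathbf{M}^{-1}\mathbf{u}\bigr)$, yields exactly $\varphi_{\mathbf{D}}(z) = \prod_{j\in[n]}(\gamma_j(\mathbf{A})-z)\cdot\bigl(1 - \beta\sum_{i\in[n]} b_i^2/(\gamma_i(\mathbf{A})-z)\bigr)$, which is the claimed expression with $\psi_{\mathbf{D}}$.

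Next I would prove the implication. Because $\beta>0$, we have $\mathbf{D}\preceq\mathbf{A}$ and $\mathbf{A} = \mathbf{D} + \beta\mathbf{b}\mathbf{b}^T$, so Weyl's inequalities give the interlacing $\gamma_{i+1}(\mathbf{A}) \leq \gamma_i(\mathbf{D}) \leq \gamma_i(\mathbf{A})$ for every $i$; in particular $\gamma_i(\mathbf{D}) \geq \gamma_n(\mathbf{A})$ for all $i \leq n-1$, so the only eigenvalue of $\mathbf{D}$ that can lie strictly below $\gamma_n(\mathbf{A})$ is its smallest one $\gamma_n(\mathbf{D})$. Now fix $z < \gamma_n(\mathbf{A})$; then $z$ is automatically not an eigenvalue of $\mathbf{A}$, so $\varphi_{\mathbf{D}}(z) = \prod_{i\in[n]}(\gamma_i(\mathbf{D})-z)$. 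Arguing the contrapositive, suppose $z > \gamma_n(\mathbf{D})$: then the factor $\gamma_n(\mathbf{D})-z$ is strictly negative, whereas every factor with $i\leq n-1$ obeys $\gamma_i(\mathbf{D})-z \geq \gamma_n(\mathbf{A})-z > 0$, so the product $\varphi_{\mathbf{D}}(z)$ is strictly negative. Hence $\varphi_{\mathbf{D}}(z)>0$ forces $z \leq \gamma_n(\mathbf{D})$, i.e. $\gamma_n(\mathbf{D})\geq z$, which is the desired conclusion.

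The main obstacle is mostly careful bookkeeping rather than depth: one must be explicit that $b_i$ is the eigenbasis coordinate of $\mathbf{b}$ (so that $\psi_{\mathbf{D}}(z) = 1 - \beta\,\mathbf{b}^T(\mathbf{A}-z\mathbf{I})^{-1}\mathbf{b}$ genuinely holds), and one must not get tripped up by degenerate cases --- repeated eigenvalues of $\mathbf{A}$, the possibility $\gamma_n(\mathbf{D}) = \gamma_n(\mathbf{A})$, or $b_n = 0$. Routing the sign argument through the global interlacing bound $\gamma_i(\mathbf{D})\geq\gamma_n(\mathbf{A})$ for $i\leq n-1$, rather than through a pole-by-pole analysis of the rational function $\psi_{\mathbf{D}}$ on $(-\infty,\gamma_n(\mathbf{A}))$, dispatches all of these uniformly. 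The matrix determinant lemma and Weyl's inequalities are standard and may simply be cited.
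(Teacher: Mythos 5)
Your proposal is correct. Note, however, that the paper does not actually prove this lemma at all: it is imported verbatim as a known result from Golub's 1973 SIAM paper and Wilkinson's book, so your argument is a self-contained proof where the paper offers only a citation. Both halves of your argument check out. The factorization follows exactly as you say: conjugating by the orthogonal eigenbasis of $\mathbf{A}$ and applying the matrix determinant lemma to $\det\bigl(\mathrm{diag}(\gamma_i(\mathbf{A})-z)-\beta\mathbf{u}\mathbf{u}^T\bigr)$ with $\mathbf{u}=\mathbf{Q}^T\mathbf{b}$ gives precisely $\prod_j(\gamma_j(\mathbf{A})-z)\,\psi_{\mathbf{D}}(z)$, and your insistence that $b_i$ be read as the eigenbasis coordinate is exactly how the lemma is used downstream in the paper (the proofs of the two eigenvalue lemmas pass to $\mathbf{D}'=\Psi^T\mathbf{D}\Psi$ and $\mathbf{b}'$ before invoking it). For the implication, the route through Weyl/rank-one interlacing --- $\gamma_{i+1}(\mathbf{A})\leq\gamma_i(\mathbf{D})\leq\gamma_i(\mathbf{A})$, so every eigenvalue of $\mathbf{D}$ except possibly the smallest sits at or above $\gamma_n(\mathbf{A})$ --- makes the sign count on $\varphi_{\mathbf{D}}(z)=\prod_i(\gamma_i(\mathbf{D})-z)$ immediate: for $z<\gamma_n(\mathbf{A})$, at most one factor can be negative, so $\varphi_{\mathbf{D}}(z)>0$ forces that factor to be nonnegative too. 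This is cleaner and more robust than the classical secular-equation treatment (a pole-by-pole analysis of $\psi_{\mathbf{D}}$ on $(-\infty,\gamma_n(\mathbf{A}))$, which must fuss over repeated eigenvalues of $\mathbf{A}$ and vanishing coordinates $b_i$); as you observe, the interlacing bound dispatches all those degeneracies uniformly. The only cosmetic remark is that the contrapositive you prove ($\gamma_n(\mathbf{D})<z\Rightarrow\varphi_{\mathbf{D}}(z)<0$) is slightly stronger than needed, which is harmless.
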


Based on Lemma~\ref{lem:smallest:eigenvalue}, we further give the following two lemmas to construct a lower bound on the smallest eigenvalue of the final ellipsoid's shape matrix, and thus obtain a lower bound on its volume. In particular, Lemma~\ref{lem:smallest:eigenvalue:nondecresing:alpha} shows that if the smallest eigenvalue is below some threshold, namely $\tau \epsilon^2$, it can no longer decrease. Lemma~\ref{lem:smallest:eigenvalue:change:alpha} reveals that in each round, the smallest eigenvalue cannot decrease sharply, to its $\frac{n^2(1-\alpha_t)^2}{(n + 1)^2}$ at most. Therefore, the smallest eigenvalue is bounded below by $\tau\epsilon^2 \frac{n^2(1-\alpha_t)^2}{(n + 1)^2}$. In terms of geometry, these two lemmas follow from that the difference $\bar{p}_t - \ubar{p}_t$ monitors the width of the ellipsoid along the direction given by the feature vector $\mathbf{x}_t$, and if it is below the threshold $\epsilon$, the data broker will post the conservative price rather than the exploratory price to avoid shortening the width along this direction. Hence, the smallest eigenvalue, having a correspondence with the width of the ellipsoid's narrowest axis, cannot become too small.

\begin{lemma}\label{lem:smallest:eigenvalue:nondecresing:alpha}
In Algorithm~\ref{alg:ellipsoid:pricing:noise}* ($\epsilon \geq 4n\delta$), there exists $\tau \in \mathbb{R}$, such that $\gamma_n(\mathbf{A}_t) \leq  \tau\epsilon^2, {\mathbf{x}_t}^T \mathbf{A}_t\mathbf{x}_t > \epsilon^2/4 \Rightarrow \gamma_n(\mathbf{A}_{t+1}) \geq \gamma_n(\mathbf{A}_t)$. Besides, $\tau = \frac{1}{400n^2S^4}$ is a feasible solution.
\end{lemma}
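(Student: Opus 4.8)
The plan is to turn the claimed monotonicity $\gamma_n(\mathbf{A}_{t+1})\ge\gamma_n(\mathbf{A}_t)$ into one positivity statement about the characteristic polynomial of the updated shape matrix and then discharge it via Lemma~\ref{lem:smallest:eigenvalue}. I would work in the rejection branch of the exploratory price (where Line~17 of Algorithm~\ref{alg:ellipsoid:pricing:noise}* modifies the ellipsoid; the acceptance branch is symmetric) and take $n\ge2$ so that the quantities below are defined. In Algorithm~\ref{alg:ellipsoid:pricing:noise}* the exploratory price equals the middle price, so here the position parameter is $\alpha_t=-\delta/\sqrt{{\mathbf{x}_t}^T\mathbf{A}_t\mathbf{x}_t}$; combining $\epsilon\ge4n\delta$ with the hypothesis ${\mathbf{x}_t}^T\mathbf{A}_t\mathbf{x}_t>\epsilon^2/4$ gives $-\tfrac1{2n}<\alpha_t\le0$, so Line~17 indeed fires and $\mathbf{A}_{t+1}=c_t\bigl(\mathbf{A}_t-\beta_t\mathbf{b}_t\mathbf{b}_t^T\bigr)$ with $c_t=\tfrac{n^2(1-\alpha_t^2)}{n^2-1}$ and $\beta_t=\tfrac{2(1+n\alpha_t)}{(n+1)(1+\alpha_t)}$. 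From $\alpha_t\in(-\tfrac1{2n},0]$ I would record the two facts needed later: $c_t>1$ with $c_t-1=\tfrac{1-n^2\alpha_t^2}{n^2-1}>\tfrac3{4n^2}$, and, since $\alpha_t\le0$ forces $1+n\alpha_t\le1+\alpha_t$ with both positive, $0<\beta_t\le\tfrac2{n+1}$.

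As $c_t>1$, the eigenvalues of $\mathbf{A}_{t+1}$ are $c_t$ times those of $\mathbf{D}_t:=\mathbf{A}_t-\beta_t\mathbf{b}_t\mathbf{b}_t^T$, so it suffices to prove $\gamma_n(\mathbf{D}_t)\ge\gamma_n(\mathbf{A}_t)/c_t$. Set $z:=\gamma_n(\mathbf{A}_t)/c_t<\gamma_n(\mathbf{A}_t)$. By Lemma~\ref{lem:smallest:eigenvalue} it is enough to show $\varphi_{\mathbf{D}_t}(z)>0$, and since $\prod_j(\gamma_j(\mathbf{A}_t)-z)>0$ for $z<\gamma_n(\mathbf{A}_t)$, this is equivalent to $\psi_{\mathbf{D}_t}(z)=1-\beta_t\sum_i b_{t,i}^2/(\gamma_i(\mathbf{A}_t)-z)>0$. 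Diagonalizing $\mathbf{A}_t$ and writing $y_i$ for the coordinates of $\mathbf{x}_t$ in its orthonormal eigenbasis, one gets $b_{t,i}^2=\gamma_i(\mathbf{A}_t)w_i$ with $w_i:=\gamma_i(\mathbf{A}_t)y_i^2/({\mathbf{x}_t}^T\mathbf{A}_t\mathbf{x}_t)\ge0$ and $\sum_iw_i=1$, and then $\tfrac{\gamma_i}{\gamma_i-z}=1+\tfrac z{\gamma_i-z}$ yields the clean split
\begin{equation*}
\sum_i\frac{b_{t,i}^2}{\gamma_i(\mathbf{A}_t)-z}=1+z\sum_i\frac{w_i}{\gamma_i(\mathbf{A}_t)-z}.
\end{equation*}
It therefore remains to bound the correction $z\sum_i w_i/(\gamma_i(\mathbf{A}_t)-z)$ by an absolute constant small enough that $\psi_{\mathbf{D}_t}(z)>0$ follows.

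This correction bound is the crux, and it is exactly where both hypotheses are used: bounding all terms by $\gamma_i(\mathbf{A}_t)-z\ge\gamma_n(\mathbf{A}_t)(1-1/c_t)$ only gives $\le1/(c_t-1)=\Theta(n^2)$, which is far too weak. Instead I would split the indices at the threshold $M\gamma_n(\mathbf{A}_t)$ for a constant $M$ (say $M=10$). For $\gamma_i(\mathbf{A}_t)>M\gamma_n(\mathbf{A}_t)$ the bound $\gamma_i(\mathbf{A}_t)-z>(1-1/M)\gamma_i(\mathbf{A}_t)$ gives $z\,w_i/(\gamma_i(\mathbf{A}_t)-z)<w_i/(M-1)$, contributing at most $1/(M-1)$. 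For $\gamma_i(\mathbf{A}_t)\le M\gamma_n(\mathbf{A}_t)$ the hypotheses make the total weight tiny, $\sum_{\gamma_i\le M\gamma_n}w_i\le\tfrac{M\gamma_n(\mathbf{A}_t)}{{\mathbf{x}_t}^T\mathbf{A}_t\mathbf{x}_t}\|\mathbf{x}_t\|^2\le\tfrac{M(\tau\epsilon^2)S^2}{\epsilon^2/4}=4M\tau S^2$, using $\gamma_n(\mathbf{A}_t)\le\tau\epsilon^2$, ${\mathbf{x}_t}^T\mathbf{A}_t\mathbf{x}_t>\epsilon^2/4$ and $\|\mathbf{x}_t\|\le S$; and on this range $z\,w_i/(\gamma_i(\mathbf{A}_t)-z)\le w_i/(c_t-1)<\tfrac{4n^2}{3}w_i$, so it contributes at most $\tfrac{16Mn^2\tau S^2}{3}$. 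With $\tau=\tfrac1{400n^2S^4}$ and $S\ge1$, the two contributions sum to at most $\tfrac M{75}+\tfrac1{M-1}$, which for $M=10$ is below $\tfrac14$.

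Combining, $\sum_i b_{t,i}^2/(\gamma_i(\mathbf{A}_t)-z)<\tfrac54$, whence $\psi_{\mathbf{D}_t}(z)\ge1-\tfrac2{n+1}\cdot\tfrac54=1-\tfrac5{2(n+1)}>0$ for $n\ge2$; Lemma~\ref{lem:smallest:eigenvalue} then gives $\gamma_n(\mathbf{D}_t)\ge z$, so $\gamma_n(\mathbf{A}_{t+1})=c_t\gamma_n(\mathbf{D}_t)\ge c_tz=\gamma_n(\mathbf{A}_t)$; since $\tau=\tfrac1{400n^2S^4}$ was the only property of $\tau$ used, it is a feasible choice. I expect the correction estimate to be the main obstacle: it is genuinely $\Theta(n^2)$ under crude bounds, and squeezing it down to a small constant forces one to use simultaneously that $\mathbf{x}_t$ has width at least $\epsilon/2$ along $\mathbf{A}_t$ (so almost no weight sits on the near-degenerate eigendirections) and that the threshold $\tau\epsilon^2$ is small enough to survive multiplication by the $\Theta(n^2)$ amplification $1/(c_t-1)$ created by the near-central cut; making the constants actually close — in particular using $\beta_t\le\tfrac2{n+1}$ rather than the looser $\tfrac8{3(n+1)}$, which would fail at $n=2$ — is the delicate bookkeeping.
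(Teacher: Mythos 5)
Your proof is correct and follows essentially the same route as the paper's: reduce via Lemma~\ref{lem:smallest:eigenvalue} to positivity of $\psi_{\mathbf{D}}$ at the rescaled target eigenvalue, pass to the eigenbasis so the secular sum becomes a convex combination with weights $w_i$, and split the sum by eigenvalue magnitude, using ${\mathbf{x}_t}^T\mathbf{A}_t\mathbf{x}_t>\epsilon^2/4$ to show almost no weight sits on the near-degenerate directions and $\gamma_n(\mathbf{A}_t)\le\tau\epsilon^2$ to make the small-eigenvalue contribution survive the $\Theta(n^2)$ amplification. The only differences are cosmetic bookkeeping — you split at $M\gamma_n(\mathbf{A}_t)$ rather than the paper's $\sqrt{\tau}\epsilon^2$ and use the $1+z\sum_i w_i/(\gamma_i-z)$ identity — and your explicit verification that $\tau=\tfrac{1}{400n^2S^4}$ works is actually more complete than the paper's, which asserts the final numerical check rather than carrying it out.
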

\begin{proof}
We recall that if ${\mathbf{x}_t}^T \mathbf{A}_t\mathbf{x}_t > \epsilon^2/4$, indicating $\bar{p}_t - \ubar{p}_t = 2\sqrt{{\mathbf{x}_t}^T \mathbf{A}_t\mathbf{x}_t} > \epsilon$, the data broker will utilize the effective exploratory price to cut the ellipsoid, where the position parameter is $\alpha_t = -\delta/\sqrt{{\mathbf{x}_t}^T\mathbf{A}_t\mathbf{x}_t}$ in the case of rejection. Since $\epsilon \geq 4n\delta$, implying $0 \leq \delta \leq \frac{\epsilon}{4n}$, we have $\alpha_t \in [-\frac{1}{2n}, 0]$. Besides, from Algorithm~\ref{alg:ellipsoid:pricing:noise}, we can get $\mathbf{A}_{t+1} = \frac{n^2(1 - {\alpha_t}^2)}{n^2 - 1} \mathbf{D}$ for $\mathbf{D} = \mathbf{A}_t - \frac{2(1 + n\alpha_t)}{(n+1)(1 + \alpha_t)} \mathbf{b}_t{\mathbf{b}_t}^T$, where $\mathbf{b}_t = \mathbf{A}_t\mathbf{x}_t/\sqrt{{\mathbf{x}_t}^T\mathbf{A}_t\mathbf{x}_t}$. For convenience, we transform to the base of $\mathbf{A}_t$'s eigenvalues as follows. By using eigendecomposition, we can express
\begin{equation}
\mathbf{A}_t = \Psi\Lambda\Psi^T,
\end{equation}
where $\Psi$ is an orthogonal matrix composed of eigenvectors of $\mathbf{A}_t$, and $\Lambda$ is a diagonal matrix with the eigenvalues of $\mathbf{A}_t$ in its main diagonal. By convention, we sort the diagonal entries of $\Lambda$ in descending order, which implies that the $i$-th diagonal entry of $\Lambda$ is $\gamma_i(\mathbf{A}_t)$ and the $i$-th column of $\Psi$ is the corresponding eigenvector. We then define $\mathbf{A}_{t+1}' = \Psi^T \mathbf{A}_{t+1} \Psi$ and $\mathbf{D}' = \Psi^T \mathbf{D} \Psi$. We can obtain $\mathbf{A}_{t+1}' = \frac{n^2(1- {\alpha_t}^2)}{n^2 - 1}\mathbf{D}'$ and $\mathbf{D}' = \Lambda - \frac{2(1 + n\alpha_t)}{(n+1)(1 + \alpha_t)}\mathbf{b}'{\mathbf{b}'}^T$. Here, $\mathbf{b}' = \Psi \mathbf{b}_t = \Lambda \mathbf{d}/\sqrt{\mathbf{d}^T \Lambda \mathbf{d}}$, where $\mathbf{d} = \Psi^T \mathbf{x}_t$.

Due to the fact that the eigenvalues of any square matrix are invariant under the changing base, we have $\gamma_n(\mathbf{A}_{t+1}) = \gamma_n(\mathbf{A}_{t+1}') = \frac{n^2(1- {\alpha_t}^2)}{n^2 - 1} \gamma_n(\mathbf{D}) = \frac{n^2(1- {\alpha_t}^2)}{n^2 - 1} \gamma_n(\mathbf{D}')$. Now, our task is to prove that under the antecedents $\gamma_n(\mathbf{A}_t) \leq  \tau\epsilon^2$ and ${\mathbf{x}_t}^T \mathbf{A}_t\mathbf{x}_t > \epsilon^2/4$, the consequent $\gamma_n(\mathbf{A}_{t+1}) \geq \gamma_n(\mathbf{A}_t)$, or equivalently $\gamma_n(\mathbf{D}') \geq \frac{n^2-1}{n^2(1 - {\alpha_t}^2)}\gamma_n(\mathbf{A}_t)$, holds. By Lemma~\ref{lem:smallest:eigenvalue} ($\beta$ takes the value $\frac{2(1 + n\alpha_t)}{(n+1)(1 + \alpha_t)}$), we only need to show that $\varphi_{\mathbf{D}'}(\frac{n^2-1}{n^2(1 - {\alpha_t}^2)}\gamma_n(\mathbf{A}_t)) \geq 0$. Given $\alpha_t \in [-\frac{1}{2n}, 0]$, we have $\frac{n^2-1}{n^2(1 - {\alpha_t}^2)}\gamma_n(\mathbf{A}_t) \leq \frac{n^2 - 1}{n^2} \gamma_n(\mathbf{A}_t) < \gamma_n(\mathbf{A}_t)$. Therefore, it suffices to show that: $\psi_{\mathbf{D}'}(\frac{n^2 - 1}{n^2(1 - {\alpha_t}^2)}\gamma_n(\mathbf{A}_t)) \geq 0$. We utilize ${b_i'} = \frac{\gamma_i(\mathbf{A}_t)d_i}{\sqrt{\sum_{j \in [n]} \gamma_j(\mathbf{A}_t) {d_j}^2}}$ to expand $\psi_{\mathbf{D}'}(\frac{n^2 - 1}{n^2(1 - {\alpha_t}^2)}\gamma_n(\mathbf{A}_t))$ as:
\begin{align}
\nonumber
&\psi_{\mathbf{D}'}\left(\frac{n^2 - 1}{n^2\left(1 - {\alpha_t}^2\right)}\gamma_n\left(\mathbf{A}_t\right)\right)\\
\nonumber
=\ &1 - \frac{2\left(1 + n\alpha_t\right)}{\left(n+1\right)\left(1 + \alpha_t\right)}\sum_{i\in\left[n\right]} \frac{{b_i'}^2}{\gamma_i\left(\mathbf{A}_t\right) - \frac{n^2 - 1}{n^2\left(1 - {\alpha_t}^2\right)}\gamma_n\left(\mathbf{A}_t\right)}\\
\nonumber
=\ &1 - \frac{2\left(1 + n\alpha_t\right)}{\left(n+1\right)\left(1 + \alpha_t\right)} \\
   &\quad\ \ \underbrace{\sum_{i \in \left[n\right]} \frac{\gamma_i\left(\mathbf{A}_t\right){d_i}^2}{\sum_{j\in[n]} \gamma_j\left(\mathbf{A}_t\right) {d_j}^2} \frac{1}{1 - \frac{n^2 - 1}{n^2\left(1 - {\alpha_t}^2\right)}\frac{\gamma_n\left(\mathbf{A}_t\right)}{\gamma_i\left(\mathbf{A}_t\right)}}}_{\text{LHS} = \text{LHS1} \left(i: \gamma_i\left(\mathbf{A}_t\right) < \sqrt{\tau}\epsilon^2\right) + \text{LHS2} \left(i: \gamma_i\left(\mathbf{A}_t\right) \geq \sqrt{\tau}\epsilon^2\right)}.\label{eq:LHS}
\end{align}
Depending on whether $\gamma_i(\mathbf{A}_t)$ is less than $\sqrt{\tau}\epsilon^2$ or not, we split LHS in Equation~(\ref{eq:LHS}) into LHS1 and LHS2, and further give upper bounds on these two parts, separately.
\begin{align}
\nonumber
\text{LHS1} &= \sum_{i: \gamma_i(\mathbf{A}_t) < \sqrt{\tau}\epsilon^2} \frac{\gamma_i\left(\mathbf{A}_t\right){d_i}^2}{\sum_{j \in [n]} \gamma_j\left(\mathbf{A}_t\right) {d_j}^2} \frac{1}{1 - \frac{n^2 - 1}{n^2\left(1 - {\alpha_t}^2\right)}\frac{\gamma_n\left(\mathbf{A}_t\right)}{\gamma_i\left(\mathbf{A}_t\right)}},\\
&\leq \sum_{i: \gamma_i\left(\mathbf{A}_t\right) < \sqrt{\tau} \epsilon^2} \frac{\sqrt{\tau} \epsilon^2 {d_i}^2}{\epsilon^2/4}\frac{1}{1 - \frac{n^2 - 1}{n^2\left(1 - {\alpha_t}^2\right)}} \label{eq:lhs1:1}\\
\nonumber
&= 4\sqrt{\tau}\frac{1}{1 - \frac{n^2 - 1}{n^2\left(1 - {\alpha_t}^2\right)}}\sum_{i: \gamma_i\left(\mathbf{A}_t\right) < \sqrt{\tau} \epsilon^2} {d_i}^2\\
\nonumber
&\leq 4\sqrt{\tau}\frac{1}{1 - \frac{n^2 - 1}{n^2\left(1 - {\alpha_t}^2\right)}}\sum_{i\in\left[n\right]} {d_i}^2\\
\nonumber
&= 4\sqrt{\tau}\frac{1}{1 - \frac{n^2 - 1}{n^2\left(1 - {\alpha_t}^2\right)}} {\mathbf{x}_t}^T \Psi^T\Psi\mathbf{x}_t\\
&= 4\sqrt{\tau}\frac{1}{1 - \frac{n^2 - 1}{n^2\left(1 - {\alpha_t}^2\right)}} {\mathbf{x}_t}^T\mathbf{x}_t \label{eq:lhs1:2}\\
&\leq 4\sqrt{\tau} \frac{1}{1 - \frac{n^2 - 1}{n^2\left(1 - {\alpha_t}^2\right)}} S^2.\label{eq:lhs1:3}
\end{align}
Here, the inequality in Equation~(\ref{eq:lhs1:1}) follows from $\gamma_i(\mathbf{A}_t) < \sqrt{\tau}\epsilon^2$ and $\sum_{j \in [n]} \gamma_j(\mathbf{A}_t){d_j}^2 = {\mathbf{x}_t}^T\mathbf{A}_t\mathbf{x}_t > \epsilon^2/4$ in the antecedents, and $\gamma_n(\mathbf{A}_t) \leq \gamma_i(\mathbf{A}_t)$. Besides, Equation~(\ref{eq:lhs1:2}) follows from the orthogonality of $\Psi$. Moreover, the inequality in Equation~(\ref{eq:lhs1:3}) follows from $\|\mathbf{x}_t\| \leq S$.
\begin{align}
\nonumber
\text{LHS2} &= \sum_{i: \gamma_i(\mathbf{A}_t) \geq \sqrt{\tau}\epsilon^2} \frac{\gamma_i\left(\mathbf{A}_t\right){d_i}^2}{\sum_{j \in [n]} \gamma_j\left(\mathbf{A}_t\right) {d_j}^2} \frac{1}{1 - \frac{n^2 - 1}{n^2\left(1 - {\alpha_t}^2\right)}\frac{\gamma_n\left(\mathbf{A}_t\right)}{\gamma_i\left(\mathbf{A}_t\right)}}\\
\nonumber
&\leq \sum_{i: \gamma_i\left(\mathbf{A}_t\right) \geq \sqrt{\tau} \epsilon^2} \frac{\gamma_i\left(\mathbf{A}_t\right){d_i}^2}{\sum_{j \in [n]} \gamma_j\left(\mathbf{A}_t\right) {d_j}^2} \frac{1}{1 - \frac{n^2 - 1}{n^2\left(1 - {\alpha_t}^2\right)}\sqrt{\tau}}\\
\nonumber
&= \frac{1}{1 - \frac{n^2 - 1}{n^2\left(1 - {\alpha_t}^2\right)}\sqrt{\tau}} \sum_{i: \gamma_i\left(\mathbf{A}_t\right) \geq \sqrt{\tau} \epsilon^2} \frac{\gamma_i\left(\mathbf{A}_t\right){d_i}^2}{\sum_{j \in [n]} \gamma_j\left(\mathbf{A}_t\right) {d_j}^2}\\
\nonumber
&\leq \frac{1}{1 - \frac{n^2 - 1}{n^2\left(1 - {\alpha_t}^2\right)}\sqrt{\tau}} \sum_{i \in \left[n\right]} \frac{\gamma_i\left(\mathbf{A}_t\right){d_i}^2}{\sum_{j \in [n]} \gamma_j\left(\mathbf{A}_t\right) {d_j}^2}\\
\nonumber
&= \frac{1}{1 - \frac{n^2 - 1}{n^2\left(1 - {\alpha_t}^2\right)}\sqrt{\tau}},
\end{align}
where the first inequality follows from $\gamma_n\left(\mathbf{A}_t\right) \leq \tau\epsilon^2$ in the antecedent and $\gamma_i\left(\mathbf{A}_t\right) \geq \sqrt{\tau}\epsilon^2$.

By combining two upper bounds on LHS1 and LHS2, we finally derive a lower bound on $\psi_{\mathbf{D}'}(\frac{n^2 - 1}{n^2(1 - {\alpha_t}^2)}\gamma_n(\mathbf{A}_t))$ as the parameter $\tau$ approaches 0:
\begin{align*}
   &\psi_{\mathbf{D}'}\left(\frac{n^2 - 1}{n^2\left(1 - {\alpha_t}^2\right)}\gamma_n\left(\mathbf{A}_t\right)\right)\\
=\ &1 - \frac{2\left(1 + n\alpha_t\right)}{\left(n+1\right)\left(1 + \alpha_t\right)} \left(\text{LHS1} + \text{LHS2}\right)\\
\geq\ &1 - \frac{2\left(1 + n\alpha_t\right)}{\left(n+1\right)\left(1 + \alpha_t\right)}\underbrace{ \left(\frac{4\sqrt{\tau}S^2}{1 - \frac{n^2 - 1}{n^2\left(1 - {\alpha_t}^2\right)}} + \frac{1}{1 - \frac{n^2 - 1}{n^2\left(1 - {\alpha_t}^2\right)}\sqrt{\tau}}\right)}_{= 1,\ \text{as}\ \alpha_t \in \left[-\frac{1}{2n}, 0\right]\ \text{and}\ \lim_{\tau \rightarrow 0}}\\
\geq\ &1 - \frac{2\left(1 + n\alpha_t\right)}{\left(n + 1\right)\left(1 + \alpha_t\right)}\\
=\    &\frac{\left(n - 1\right)\left(1 - \alpha_t\right)}{\left(n + 1\right)\left(1 + \alpha_t\right)}\\
\geq\ &0.
\end{align*}
In addition, we can check $\tau = \frac{1}{400n^2S^4}$ is a feasible value to let the lower bound and thus the original $\psi_{\mathbf{D}'}(\frac{n^2 - 1}{n^2(1 - {\alpha_t}^2)}\gamma_n(\mathbf{A}_t))$ be no less than 0. This completes the proof.
\end{proof}

\begin{lemma}\label{lem:smallest:eigenvalue:change:alpha}
For any round $t$ in Algorithm~\ref{alg:ellipsoid:pricing:noise}* ($\epsilon \geq 4n\delta$), where the data broker posts the exploratory price, we have: $\gamma_{n}(\mathbf{A}_{t+1}) \geq \frac{n^2(1-\alpha_t)^2}{(n + 1)^2}\gamma_n\left(\mathbf{A}_t\right)$.
\end{lemma}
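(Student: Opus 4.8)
The plan is to reduce the statement to a single application of Lemma~\ref{lem:smallest:eigenvalue}, exactly as in the proof of Lemma~\ref{lem:smallest:eigenvalue:nondecresing:alpha}. In the exploratory step of Algorithm~\ref{alg:ellipsoid:pricing:noise}* the update in the rejection branch has the form $\mathbf{A}_{t+1} = \frac{n^2(1 - \alpha_t^2)}{n^2 - 1}\,\mathbf{D}$ with $\mathbf{D} = \mathbf{A}_t - \beta\,\mathbf{b}_t\mathbf{b}_t^T$, where $\beta = \frac{2(1 + n\alpha_t)}{(n+1)(1 + \alpha_t)}$, $\mathbf{b}_t = \mathbf{A}_t\mathbf{x}_t/\sqrt{{\mathbf{x}_t}^T\mathbf{A}_t\mathbf{x}_t}$, and $\alpha_t = -\delta/\sqrt{{\mathbf{x}_t}^T\mathbf{A}_t\mathbf{x}_t}$. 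Since the exploratory price is posted we have $\sqrt{{\mathbf{x}_t}^T\mathbf{A}_t\mathbf{x}_t} > \epsilon/2$, which together with $\epsilon \geq 4n\delta$ forces $\alpha_t \in [-\frac{1}{2n}, 0]$; hence $\beta > 0$ and $\frac{n^2(1-\alpha_t^2)}{n^2-1} > 0$ (here $n \geq 2$, as required for the ellipsoid update to be well defined). A one-line computation then shows that it suffices to prove $\gamma_n(\mathbf{D}) \geq \frac{(n-1)(1 - \alpha_t)}{(n+1)(1 + \alpha_t)}\,\gamma_n(\mathbf{A}_t)$, since multiplying through by the scaling factor produces exactly $\frac{n^2(1 - \alpha_t)^2}{(n+1)^2}\,\gamma_n(\mathbf{A}_t)$.

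Next I would diagonalize $\mathbf{A}_t = \Psi\Lambda\Psi^T$ and pass to $\mathbf{D}' = \Psi^T\mathbf{D}\Psi = \Lambda - \beta\,\mathbf{b}'{\mathbf{b}'}^T$ with $b_i' = \gamma_i(\mathbf{A}_t)d_i/\sqrt{\sum_j\gamma_j(\mathbf{A}_t)d_j^2}$ and $\mathbf{d} = \Psi^T\mathbf{x}_t$; because eigenvalues are invariant under this change of basis, it is enough to lower-bound $\gamma_n(\mathbf{D}')$. Put $z = \frac{(n-1)(1 - \alpha_t)}{(n+1)(1 + \alpha_t)}\,\gamma_n(\mathbf{A}_t)$. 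The range of $\alpha_t$ and $n \geq 2$ make $z < \gamma_n(\mathbf{A}_t) \leq \gamma_i(\mathbf{A}_t)$ for all $i$, so by Lemma~\ref{lem:smallest:eigenvalue} it is enough to check $\varphi_{\mathbf{D}'}(z) \geq 0$, which (the factor $\prod_j(\gamma_j(\mathbf{A}_t) - z)$ being positive) reduces to showing $\psi_{\mathbf{D}'}(z) \geq 0$.

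The crux is the estimate on $\sum_i \frac{{b_i'}^2}{\gamma_i(\mathbf{A}_t) - z}$. Writing $z = r\,\gamma_n(\mathbf{A}_t)$ with $r \in (0,1)$ and using $\gamma_n(\mathbf{A}_t) \leq \gamma_i(\mathbf{A}_t)$, one gets $\gamma_i(\mathbf{A}_t) - z \geq (1-r)\gamma_i(\mathbf{A}_t)$, hence $\frac{{b_i'}^2}{\gamma_i(\mathbf{A}_t) - z} \leq \frac{1}{1-r}\cdot\frac{\gamma_i(\mathbf{A}_t)d_i^2}{\sum_j\gamma_j(\mathbf{A}_t)d_j^2}$, and summing over $i$ collapses the right-hand side to $\frac{1}{1-r}$. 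The key algebraic coincidence is $1 - r = \frac{2(1+n\alpha_t)}{(n+1)(1+\alpha_t)} = \beta$, so $\psi_{\mathbf{D}'}(z) \geq 1 - \beta/(1-r) = 0$; feeding this back through Lemma~\ref{lem:smallest:eigenvalue} gives $\gamma_n(\mathbf{D}') \geq z$, and the scaling step closes the proof. The acceptance branch follows from the symmetric computation, where the analogous bound is $\frac{n^2(1+\alpha_t)^2}{(n+1)^2}\,\gamma_n(\mathbf{A}_t)$, which dominates the claimed expression since there $\alpha_t \geq 0$.

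I expect the main obstacles to be twofold. First, the bookkeeping must be exact: the factor $\frac{n^2(1-\alpha_t)^2}{(n+1)^2}$ appears only because $z$ is chosen as precisely this multiple of $\gamma_n(\mathbf{A}_t)$, so that $1-r$ cancels $\beta$ — any looser choice degrades the bound. Second, Lemma~\ref{lem:smallest:eigenvalue} is stated with the strict inequality $\varphi_{\mathbf{D}'}(z) > 0$, whereas we only obtain $\psi_{\mathbf{D}'}(z) \geq 0$, so one should handle the boundary case, either by noting that $\psi_{\mathbf{D}'}$ is strictly decreasing on $(-\infty, \gamma_n(\mathbf{A}_t))$ (unless $\mathbf{b}' = \mathbf{0}$, which is trivial) and applying the lemma at $z' < z$ before letting $z' \uparrow z$, or by observing via eigenvalue interlacing that any root of $\varphi_{\mathbf{D}'}$ below $\gamma_n(\mathbf{A}_t)$ must coincide with $\gamma_n(\mathbf{D}')$. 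Everything else is the same linear-algebra manipulation already carried out for Lemma~\ref{lem:smallest:eigenvalue:nondecresing:alpha}.
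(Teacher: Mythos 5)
Your proposal is correct and follows essentially the same route as the paper's proof: reduce to $\gamma_n(\mathbf{D}') \geq \frac{(n-1)(1-\alpha_t)}{(n+1)(1+\alpha_t)}\gamma_n(\mathbf{A}_t)$, apply Lemma~\ref{lem:smallest:eigenvalue} at exactly that value of $z$, and exploit the identity $1 - r = \frac{2(1+n\alpha_t)}{(n+1)(1+\alpha_t)}$ so that $\psi_{\mathbf{D}'}(z) \geq 0$ with equality in the worst case; your per-term bound $\gamma_i(\mathbf{A}_t) - z \geq (1-r)\gamma_i(\mathbf{A}_t)$ is equivalent to the paper's convex-combination/maximum argument. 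Your remark about the strict inequality in Lemma~\ref{lem:smallest:eigenvalue} versus the non-strict bound $\psi_{\mathbf{D}'}(z) \geq 0$ identifies a boundary case the paper silently skips, and your limiting argument handles it correctly.
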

\begin{proof}
We continue to use the notations in the proof of Lemma~\ref{lem:smallest:eigenvalue:nondecresing:alpha}. We need to prove $\gamma_{n}(\mathbf{A}_{t+1}) \geq \frac{n^2(1-\alpha_t)^2}{(n + 1)^2}\gamma_n(\mathbf{A}_t)$. It suffices to prove $\gamma_{n}(\mathbf{A}_{t+1})= \frac{n^2(1- {\alpha_t}^2)}{n^2 - 1} \gamma_n(\mathbf{D}') \geq \frac{n^2(1-\alpha_t)^2}{(n + 1)^2}\gamma_n(\mathbf{A}_t)$, or $\gamma_n(\mathbf{D}') \geq \frac{(n - 1)(1 - \alpha_t)}{(n + 1)(1 + \alpha_t)}\gamma_n(\mathbf{A}_t)$. By Lemma~\ref{lem:smallest:eigenvalue}, it suffices to prove $\varphi_{\mathbf{D}'}\left(\frac{(n - 1)(1 - \alpha_t)}{(n + 1)(1 + \alpha_t)}\gamma_n(\mathbf{A}_t)\right) \geq 0$. Given $\alpha_t \in [-\frac{1}{2n}, 0]$, we have: $\frac{(n - 1)(1 - \alpha_t)}{(n + 1)(1 + \alpha_t)}\gamma_n(\mathbf{A}_t) \leq \frac{2n^2 - n - 1}{2n^2 +n - 1} \gamma_n(\mathbf{A}_t) < \gamma_n(\mathbf{A}_t)$. Thus, we only need to prove $\psi_{\mathbf{D}'}\left(\frac{(n - 1)(1 - \alpha_t)}{(n + 1)(1 + \alpha_t)}\gamma_n(\mathbf{A}_t)\right) \geq 0,$ which holds as follows.
\begin{align}
\nonumber
&\psi_{\mathbf{D}'}\left(\frac{\left(n - 1\right)\left(1 - \alpha_t\right)}{\left(n + 1\right)\left(1 + \alpha_t\right)}\gamma_n\left(\mathbf{A}_t\right)\right)\\
\nonumber
=\ &1 - \frac{2\left(1 + n\alpha_t\right)}{\left(n+1\right)\left(1 + \alpha_t\right)} \\
\nonumber
   &\quad\ \ \sum_{i \in \left[n\right]} \frac{\gamma_i\left(\mathbf{A}_t\right){d_i}^2}{\sum_{j \in [n]} \gamma_j\left(\mathbf{A}_t\right) {d_j}^2} \frac{1}{1 - \frac{(n - 1)(1 - \alpha_t)}{(n + 1)(1 + \alpha_t)}\frac{\gamma_n\left(\mathbf{A}_t\right)}{\gamma_i\left(\mathbf{A}_t\right)}}\\
\geq\ &1 - \frac{2\left(1 + n\alpha_t\right)}{\left(n+1\right)\left(1 + \alpha_t\right)} \max_{i \in \left[n\right]} \frac{1}{1 - \frac{\left(n - 1\right)\left(1 - \alpha_t\right)}{\left(n + 1\right)\left(1 + \alpha_t\right)}\frac{\gamma_n\left(\mathbf{A}_t\right)}{\gamma_i\left(\mathbf{A}_t\right)}} \label{eq:lem9:1}\\
=\ &1 - \frac{2\left(1 + n\alpha_t\right)}{\left(n+1\right)\left(1 + \alpha_t\right)} \frac{1}{1 - \frac{\left(n - 1\right)\left(1 - \alpha_t\right)}{\left(n + 1\right)\left(1 + \alpha_t\right)}\frac{\gamma_n\left(\mathbf{A}_t\right)}{\gamma_n\left(\mathbf{A}_t\right)}}\label{eq:lem9:2}\\
\nonumber
=\ &0.
\end{align}
Here, the inequality in Equation~(\ref{eq:lem9:1}) follows from that the terms $\frac{\gamma_i\left(\mathbf{A}_t\right){d_i}^2}{\sum_{j \in [n]} \gamma_j\left(\mathbf{A}_t\right) {d_j}^2}$'s stand for the coefficients of a convex combination of elements, which can be bounded above by the maximum element. Besides, Equation~(\ref{eq:lem9:2}) follows from that $\gamma_n(\mathbf{A}_t)$ is the smallest eigenvalue.
\end{proof}

By combining the upper and lower bounds on the final ellipsoid's volume, we can derive an upper bound on the number of rounds where the exploratory prices are posted, hereinafter denoted as $T_e$, in Algorithm~\ref{alg:ellipsoid:pricing:noise}*.

\begin{lemma}\label{lem:exploratory:number:round:alpha}
Algorithm~\ref{alg:ellipsoid:pricing:noise}* ($\epsilon \geq 4n\delta$) chooses the exploratory prices in at most $20n^2\log(20RS^2(n + 1)/\epsilon)$ rounds.
\end{lemma}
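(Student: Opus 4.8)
The plan is to trap the volume $V(\mathcal{E}_{T+1})$ of the final ellipsoid between two bounds: an upper bound that decays geometrically in the number $T_e$ of rounds in which the exploratory price is posted, and a fixed positive lower bound that does not depend on $T_e$; comparing them and solving for $T_e$ gives the claim. The key structural fact is that in Algorithm~\ref{alg:ellipsoid:pricing:noise}* the ellipsoid changes \emph{only} in exploratory rounds: a conservative round leaves $\mathbf{A}_t,\mathbf{c}_t$ untouched, and, since there is no reserve price, the ``no deal for sure'' branch never fires. Hence both the upper and the lower bound only have to account for the $T_e$ exploratory rounds.

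For the upper bound I would show that each exploratory round shrinks the volume by at least a fixed factor. When the exploratory price is posted, $\bar{p}_t-\ubar{p}_t=2\sqrt{{\mathbf{x}_t}^T\mathbf{A}_t\mathbf{x}_t}>\epsilon$, so $\sqrt{{\mathbf{x}_t}^T\mathbf{A}_t\mathbf{x}_t}>\epsilon/2$, and the position parameter obeys $|\alpha_t|=\delta/\sqrt{{\mathbf{x}_t}^T\mathbf{A}_t\mathbf{x}_t}<2\delta/\epsilon\le 1/(2n)$ because $\epsilon\ge 4n\delta$; thus in the rejection branch $\alpha_t\in[-1/(2n),0]\subseteq[-1/n,0]$, and the acceptance branch is the mirror image (with $-\alpha_t$ in the same range), so the ellipsoid update there has the same volume ratio. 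Lemma~\ref{lem:ellipsoid:volume:ratio:alpha} then gives $V(\mathcal{E}_{t+1})/V(\mathcal{E}_t)\le\exp(-(1+n\alpha_t)^2/(5n))\le\exp(-1/(20n))$, using $1/2\le 1+n\alpha_t\le 1$. Multiplying over the $T_e$ exploratory rounds and using $\mathbf{A}_1=R^2\mathbf{I}_{n\times n}$, whence $V(\mathcal{E}_1)=V_nR^n$ by Equation~(\ref{eq:ellipsoid:volume}), yields $V(\mathcal{E}_{T+1})\le V_nR^n\exp(-T_e/(20n))$.

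For the lower bound I would track the smallest eigenvalue $\gamma_n(\mathbf{A}_t)$, starting from $\gamma_n(\mathbf{A}_1)=R^2$. It is unchanged in conservative rounds, and in an exploratory round Lemma~\ref{lem:smallest:eigenvalue:change:alpha} bounds its decrease by the factor $n^2(1-\alpha_t)^2/(n+1)^2\ge n^2/(n+1)^2$ (since $\alpha_t\le 0$). Lemma~\ref{lem:smallest:eigenvalue:nondecresing:alpha} adds that once $\gamma_n(\mathbf{A}_t)\le\tau\epsilon^2$ with $\tau=1/(400n^2S^4)$, it never decreases again in an exploratory round (its other hypothesis ${\mathbf{x}_t}^T\mathbf{A}_t\mathbf{x}_t>\epsilon^2/4$ holds automatically in those rounds). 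Hence either $\gamma_n$ stays above $\tau\epsilon^2$ forever, giving $\gamma_n(\mathbf{A}_{T+1})\ge\tau\epsilon^2$, or it first drops below $\tau\epsilon^2$ in some exploratory round, in which case it lands at no less than $\tau\epsilon^2\cdot n^2/(n+1)^2$ and stays at least that large thereafter; either way $\gamma_n(\mathbf{A}_{T+1})\ge\tau\epsilon^2 n^2/(n+1)^2=\epsilon^2/(400S^4(n+1)^2)$. Since every eigenvalue of $\mathbf{A}_{T+1}$ is at least $\gamma_n(\mathbf{A}_{T+1})$, Equation~(\ref{eq:ellipsoid:volume}) gives $V(\mathcal{E}_{T+1})\ge V_n(\epsilon/(20S^2(n+1)))^n$.

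Combining the two bounds, the factors $V_n$ cancel and $(\epsilon/(20S^2(n+1)))^n\le R^n\exp(-T_e/(20n))$; taking $n$-th roots and logarithms rearranges this to $T_e\le 20n^2\log(20RS^2(n+1)/\epsilon)$, which is the stated bound. The step I expect to be most delicate is the lower bound: one must chain Lemmas~\ref{lem:smallest:eigenvalue:nondecresing:alpha} and~\ref{lem:smallest:eigenvalue:change:alpha} around the first time $\gamma_n$ crosses the threshold $\tau\epsilon^2$ and verify that conservative rounds (where nothing moves) and the acceptance branch do not break the bookkeeping; by contrast, the upper bound is a routine invocation of the classical ellipsoid volume-shrinkage estimate, once the range of $\alpha_t$ has been pinned down via the hypothesis $\epsilon\ge 4n\delta$.
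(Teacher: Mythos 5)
Your proposal is correct and follows essentially the same route as the paper's proof: trap $V(\mathcal{E}_{T+1})$ between the geometric decay $\exp(-T_e/(20n))$ from Lemma~\ref{lem:ellipsoid:volume:ratio:alpha} (after pinning $\alpha_t\in[-\tfrac{1}{2n},0]$ via $\epsilon\ge 4n\delta$) and the eigenvalue-based lower bound $V_n(\sqrt{\tau}\,\epsilon\,\tfrac{n}{n+1})^n$ from Lemmas~\ref{lem:smallest:eigenvalue:nondecresing:alpha} and~\ref{lem:smallest:eigenvalue:change:alpha}, then solve for $T_e$. Your explicit bookkeeping around the first crossing of the threshold $\tau\epsilon^2$ is a slightly more careful rendering of a step the paper states tersely, but it is not a different argument.
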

\begin{proof}
In Algorithm~\ref{alg:ellipsoid:pricing:noise}*, each effective exploratory price incurs a cut with the position parameter $\alpha_t \in [-\frac{1}{2n}, 0]$ over the ellipsoid, while each conservative price keeps the ellipsoid unchanged. We next give an upper bound and a lower bound on the ratio between the volumes of the final and the initial ellipsoids, namely $V(\mathcal{E}_{T+1})/V(\mathcal{E}_{1})$.

For the upper bound, by Lemma~\ref{lem:ellipsoid:volume:ratio:alpha}, after each round $t$ where the exploratory price is posted, we have:
\begin{align}
\frac{V\left(\mathcal{E}_{t+1}\right)}{V\left(\mathcal{E}_{t}\right)} \leq \exp\left(-\frac{\left(1 + n\alpha_t\right)^2}{5n}\right) \leq \exp\left(-\frac{1}{20n}\right),
\end{align}
where the last inequality takes equal sign at $\alpha_t = -\frac{1}{2n}$. Hence, after $T$ rounds, we can derive that:
\begin{align}
\frac{V\left(\mathcal{E}_{T+1}\right)}{V\left(\mathcal{E}_1\right)} \leq \exp\left(- \frac{T_e}{20n}\right).
\end{align}

Regarding the lower bound, by Lemmas~\ref{lem:smallest:eigenvalue:nondecresing:alpha} and \ref{lem:smallest:eigenvalue:change:alpha}, we have:
\begin{align}
\gamma_n\left(\mathbf{A}_{T+1}\right) \geq \tau\epsilon^2\frac{n^2(1-\alpha_t)^2}{(n + 1)^2} \geq \tau\epsilon^2\frac{n^2}{(n + 1)^2},\label{eq:ellipsoid:smallest:eigenvalue}
\end{align}
where the last inequality takes equal sign at $\alpha_t = 0$. Correspondingly, we can obtain a lower bound on $V(\mathcal{E}_{T+1})$:
\begin{align}
V(\mathcal{E}_{T+1}) = V_n \sqrt{\prod_{i}^{n} \gamma_i\left(\mathbf{A}_{T+1}\right)} \geq V_n \left(\sqrt{\tau} \epsilon \frac{n}{n+1}\right)^{n}.
\end{align}
Besides, the initial ellipsoid $\mathcal{E}_1$ is an $n$-dimensional ball with radius $R$, whose volume is $V_n R^n$. Therefore, we can derive a lower bound on $V(\mathcal{E}_{T+1})/V(\mathcal{E}_1)$:
\begin{align}
\frac{V(\mathcal{E}_{T+1})}{V(\mathcal{E}_1)} \geq \left(\frac{\sqrt{\tau} \epsilon}{R} \frac{n}{n+1}\right)^n.
\end{align}

By using the inequality that the upper bound is no less than the lower bound, we can get an upper bound on $T_e$:
\begin{equation}
\begin{aligned}
            &\left(\frac{\sqrt{\tau} \epsilon}{R} \frac{n}{n+1}\right)^n \leq \exp\left(- \frac{T_e}{20n}\right)\\
\Rightarrow\quad &T_e\ \leq\ 20n^2\log\left(\frac{R}{\sqrt{\tau} \epsilon} \frac{n + 1}{n}\right).
\end{aligned}
\end{equation}
We note that this upper bound holds for any feasible $\tau$. When $\tau$ takes the value $\frac{1}{400n^2S^4}$ from Lemma~\ref{lem:smallest:eigenvalue:nondecresing:alpha}, we have:
\begin{align}
T_e \leq 20n^2\log\left(20 RS^2\left(n+1\right)/\epsilon\right).
\end{align}
This completes the proof.
\end{proof}

We finally analyze the worst-case regret of Algorithm~\ref{alg:ellipsoid:pricing:noise}. Before that, we restate Lemma~\ref{lem:exploratory:number:round:alpha} for Algorithm~\ref{alg:ellipsoid:pricing:noise}, by analyzing the impact of the reserve price constraint on $T_e$.

\begin{lemma}\label{lem:original:exploratory:number:round}
Algorithm~\ref{alg:ellipsoid:pricing:noise} ($\epsilon \geq 4n\delta$) chooses the exploratory prices in at most $20n^2\log(20RS^2(n + 1)/\epsilon)$ rounds.
\end{lemma}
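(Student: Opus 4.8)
The plan is to reduce the analysis of Algorithm~\ref{alg:ellipsoid:pricing:noise} (with reserve price) to that of Algorithm~\ref{alg:ellipsoid:pricing:noise}* (pure version), for which Lemma~\ref{lem:exploratory:number:round:alpha} already gives the bound $20n^2\log(20RS^2(n+1)/\epsilon)$. The key observation is that the only way the reserve price constraint influences the ellipsoid update is through the position parameter $\alpha_t$ used in the rejection branch (Line~15): in the pure version the cut is essentially central ($\alpha_t = -\delta/\sqrt{{\mathbf{x}_t}^T\mathbf{A}_t\mathbf{x}_t} \in [-\tfrac{1}{2n},0]$ under $\epsilon \ge 4n\delta$), whereas with a reserve price $q_t \ge \frac{\ubar p_t + \bar p_t}{2}$ the posted price is pushed up to $q_t$, which produces a cut that is at least as deep — i.e. $\alpha_t$ only becomes more negative (but is clamped so that the update is skipped once $\alpha_t < -1/n$). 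So I would first argue: in every round where Algorithm~\ref{alg:ellipsoid:pricing:noise} posts an exploratory price \emph{and performs a non-trivial ellipsoid update}, the position parameter $\alpha_t$ lies in $[-\tfrac1n, 0]$, hence Lemma~\ref{lem:ellipsoid:volume:ratio:alpha} applies and the volume shrinks by at least a factor $\exp(-(1+n\alpha_t)^2/(5n))$. Rounds where $\alpha_t < -1/n$ leave the ellipsoid unchanged and therefore never tighten the width along $\mathbf{x}_t$; I need to make sure these are still counted correctly.

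The second ingredient is that the lower bound on the final volume carries over unchanged. The proofs of Lemmas~\ref{lem:smallest:eigenvalue:nondecresing:alpha} and~\ref{lem:smallest:eigenvalue:change:alpha} only use that the ellipsoid is updated via the deep-cut formula with $\alpha_t \in [-\tfrac1n,0]$ (indeed $[-\tfrac{1}{2n},0]$ suffices, and deeper cuts only help: a deeper cut shrinks the smallest eigenvalue less, since the relevant factor $\frac{n^2(1-\alpha_t)^2}{(n+1)^2}$ is increasing as $\alpha_t$ decreases, and the non-decreasing property of $\gamma_n$ below the threshold $\tau\epsilon^2$ is preserved). Hence $\gamma_n(\mathbf{A}_{T+1}) \ge \tau\epsilon^2 \frac{n^2(1-\alpha_t)^2}{(n+1)^2} \ge \tau\epsilon^2\frac{n^2}{(n+1)^2}$ still holds with the same $\tau = \frac{1}{400n^2S^4}$, giving the same lower bound $V(\mathcal{E}_{T+1})/V(\mathcal{E}_1) \ge (\sqrt\tau\epsilon\, n/(R(n+1)))^n$.

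Then I would run the same squeezing argument as in the proof of Lemma~\ref{lem:exploratory:number:round:alpha}. Let $T_e'$ be the number of exploratory rounds in Algorithm~\ref{alg:ellipsoid:pricing:noise} that actually cut the ellipsoid (i.e. $\alpha_t \in [-\tfrac1n,0]$), and let $T_e$ be the total number of exploratory rounds. Each cutting round contributes a volume-shrink factor at most $\exp(-1/(20n))$ (worst case $\alpha_t=-\tfrac{1}{2n}$; for $\alpha_t \in (-\tfrac1n,-\tfrac{1}{2n})$ the factor $\exp(-(1+n\alpha_t)^2/(5n))$ is even smaller than $1$ but could exceed $\exp(-1/(20n))$ — I should double-check the worst case over $[-\tfrac1n,0]$ is at $\alpha_t=-\tfrac{1}{2n}$, giving $\exp(-1/(20n))$), so $V(\mathcal{E}_{T+1})/V(\mathcal{E}_1) \le \exp(-T_e'/(20n))$. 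Combining with the lower bound yields $T_e' \le 20n^2\log(20RS^2(n+1)/\epsilon)$. The final — and I expect the most delicate — step is to bound $T_e$ by $T_e'$: I must argue that a ``wasted'' exploratory round, where $\alpha_t < -1/n$ so the ellipsoid is not updated, cannot occur. This should follow because $\alpha_t < -1/n$ would require $q_t > \frac{\ubar p_t+\bar p_t}{2} + \frac1n\sqrt{{\mathbf{x}_t}^T\mathbf{A}_t\mathbf{x}_t} + \delta$, i.e. the reserve price exceeds a point strictly inside $[\ubar p_t, \bar p_t + \delta]$ — but one still has $q_t < \bar p_t + \delta$ from the Line~8 check, so $\alpha_t \in (-\text{(something)}, 0]$ and the bound $\alpha_t \ge -1/n$ can be verified from $\frac{\ubar p_t + \bar p_t}{2} - q_t - \delta \ge \ubar p_t - \bar p_t = -2\sqrt{{\mathbf{x}_t}^T\mathbf{A}_t\mathbf{x}_t}$, which gives $\alpha_t \ge -2$, not $-1/n$, so I actually need the \texttt{Else} branch of Lines~18--19 to handle $\alpha_t < -1/n$, meaning such rounds \emph{do} exist but leave the ellipsoid unchanged. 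In that case the correct statement is that only cutting rounds are counted against the volume budget, but non-cutting exploratory rounds must be bounded separately — and the resolution is that they \emph{do not} arise in the worst-case regret accounting because posting $q_t$ with $q_t$ so large relative to the ellipsoid width means $q_t$ is close to $\bar p_t + \delta$, so a rejection is consistent with $v_t$ being just below $q_t$, and the regret incurred is small; alternatively, one shows these rounds are charged to conservative-type accounting. I would present the clean version: restrict attention to rounds with $\alpha_t \in [-\tfrac1n,0]$, invoke Lemmas~\ref{lem:ellipsoid:volume:ratio:alpha}--\ref{lem:smallest:eigenvalue:change:alpha} verbatim, conclude the same numeric bound, and remark that exploratory rounds with deeper implied cuts (handled by Lines~18--19) only leave the ellipsoid fixed and hence are subsumed — so the total count of exploratory rounds in Algorithm~\ref{alg:ellipsoid:pricing:noise} is still at most $20n^2\log(20RS^2(n+1)/\epsilon)$, where I will need to argue the fixed-ellipsoid exploratory rounds cannot occur more than a bounded number of times, most cleanly by noting that once the reserve price forces $\alpha_t < -1/n$ the subsequent behavior is as if that query were simply not presented for learning purposes, and the regret in such a round is at most $\delta$ (comparable to a conservative round).
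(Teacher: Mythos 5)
Your overall strategy --- reduce to Algorithm~\ref{alg:ellipsoid:pricing:noise}* and rerun the volume squeeze of Lemma~\ref{lem:exploratory:number:round:alpha}, checking that the reserve-price cuts break neither side of the squeeze --- is the same as the paper's, and your treatment of the lower-bound side is correct (the factor $\frac{n^2(1-\alpha_t)^2}{(n+1)^2}$ only grows as $\alpha_t$ decreases, so the bound in Equation~(\ref{eq:ellipsoid:smallest:eigenvalue}) is unaffected). But your upper-bound bookkeeping is backwards: in the rejection branch, pushing the posted price up to $q_t > \frac{\ubar{p}_t+\bar{p}_t}{2}$ makes $\alpha_t$ more negative, and a more negative $\alpha_t$ here is a \emph{shallower} cut --- the retained halfspace $\{\theta : {\mathbf{x}_t}^T\theta \leq q_t+\delta\}$ contains \emph{more} of the ellipsoid --- so the per-round shrinkage guarantee $\exp(-(1+n\alpha_t)^2/(5n))$ degrades toward $1$ and disappears entirely once $\alpha_t \leq -1/n$ (where Lines 18--19 skip the update). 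You notice this yourself midway through, to your credit, but it means these rounds cannot each be charged $\exp(-1/(20n))$ of volume, and your count $T_e'$ then only covers the rounds with $q_t \leq \frac{\ubar{p}_t+\bar{p}_t}{2}$.

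The genuine gap is that you never close the accounting for the remaining exploratory rounds (those where the reserve price is posted, whether they cut shallowly or not at all): your candidate fixes --- ``the regret incurred is small,'' ``at most $\delta$,'' ``charged to conservative-type accounting'' --- are unjustified, and the $\delta$ bound is not true in general. The paper's resolution is a single clean observation you missed: in such a round the posted price \emph{is} the reserve price, so a rejection means $p_t = q_t > v_t$, and by the first branch of Equation~(\ref{eq:regret:formula:rp}) the round's regret is exactly \emph{zero}. These rounds can therefore be excised as dummy rounds (Lemma~\ref{lem:exploratory:number:round:alpha} does not depend on $T$), and all that remains is to check that the extra cut they perform neither increases the volume upper bound (trivial: an admissible cut only shrinks the ellipsoid) nor decreases the lower bound (your eigenvalue observation). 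Without the zero-regret observation the literal count of exploratory rounds is not bounded by $20n^2\log(20RS^2(n+1)/\epsilon)$ --- an adversary can make essentially every round a rejected reserve-price exploratory round --- and the lemma only does its job in Theorem~\ref{theorem:alg2} because those extra rounds contribute nothing to the regret.
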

\begin{proof}
We recall that if the reserve price $q_t$ is introduced in round $t$, the exploratory posted price is $p_t = \max(q_t, \frac{\ubar{p}_t + \bar{p}_t}{2})$, the effective exploratory price is $p_t + \delta$ in the rejection case, and its position parameter can be computed via $\alpha_t =  (\frac{\ubar{p}_t + \bar{p}_t}{2} - (p_t + \delta))/\sqrt{{\mathbf{x}_t}^T \mathbf{A}_t \mathbf{x}_t}$ (Algorithm~\ref{alg:ellipsoid:pricing:noise}, Line 15). We now prove Lemma~\ref{lem:original:exploratory:number:round} in two complementary cases as follows.

Case 1 $(\frac{\ubar{p}_t + \bar{p}_t}{2} \geq q_t)$: The posted price is the middle price, \ie, $p_t = \frac{\ubar{p}_t + \bar{p}_t}{2}$. Algorithm~\ref{alg:ellipsoid:pricing:noise} degenerates to Algorithm~\ref{alg:ellipsoid:pricing:noise}*, and Lemma~\ref{lem:original:exploratory:number:round} holds from Lemma~\ref{lem:exploratory:number:round:alpha}.

Case 2 $(q_t > \frac{\ubar{p}_t + \bar{p}_t}{2})$: The posted price is the reserve price, \ie, $p_t = q_t$. Given the reserve price is rejected, we can draw that the reserve price is higher than the market value, \ie, $p_t = q_t > v_t$, which further implies that $R_t = 0$ from Equation~(\ref{eq:regret:formula:rp}). Suppose the data broker does not utilize the reserve price to refine the ellipsoid in this round. The analysis of Algorithm~\ref{alg:ellipsoid:pricing:noise} can be reduced to analyzing Algorithm~\ref{alg:ellipsoid:pricing:noise}* with the number of total rounds $T - 1$ plus one dummy round inserted in the $t$-th round. Considering Lemma~\ref{lem:exploratory:number:round:alpha} does not rely on the number of total rounds, $T_e \leq 20n^2\log(20RS^2(n + 1)/\epsilon)$ still holds in Algorithm~\ref{alg:ellipsoid:pricing:noise}. However, in Algorithm~\ref{alg:ellipsoid:pricing:noise} (Lines 14--19), the data broker needs to cut the ellipsoid using the effective exploratory price, \ie, $q_t + \delta$ here. We thus need to analyze the impact of such a cut on $T_e$. In particular, following the guidelines in proving Lemma~\ref{lem:exploratory:number:round:alpha}, to prove Lemma~\ref{lem:original:exploratory:number:round}, it suffices to prove that this cut cannot increase the upper bound on the final volume of the ellipsoid, and meanwhile, cannot decrease the lower bound. First, the effective exploratory price imposes a cut over the ellipsoid, and thus cannot increase the final volume together with the upper bound on the final volume. Second, in Equation~(\ref{eq:ellipsoid:smallest:eigenvalue}), the lower bound on the smallest eigenvalue of the final ellipsoid's shape matrix is obtained at $\alpha_t = 0$. Besides, we can check that a negative $\alpha_t$ increases the lower bound. Therefore, the effective exploratory price $q_t + \delta$, holding a negative $\alpha_t =  (\frac{\ubar{p}_t + \bar{p}_t}{2} - (q_t + \delta))/\sqrt{{\mathbf{x}_t}^T \mathbf{A}_t \mathbf{x}_t} < -  \delta/\sqrt{{\mathbf{x}_t}^T \mathbf{A}_t \mathbf{x}_t} < 0$, cannot decrease the lower bound on the final volume of the ellipsoid.

By summarizing two cases, we complete the proof.
\end{proof}

\begin{theorem}\label{theorem:alg2}
If $\delta = O(n/T)$, then the worst-case regret of Algorithm~\ref{alg:ellipsoid:pricing:noise} is $O(\max(n^2\log(T/n), n^3\log(T/n)/T))$.\footnote{The regret analysis in the one-dimensional case is deferred to Theorem~\ref{them:one:dimension} in Appendix. Besides, our regret analyses in this work focus on the terms, involving the dimension of the feature vector $n$ and the number of total rounds $T$, and ignore other constant terms.}
\end{theorem}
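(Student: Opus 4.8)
The plan is to partition the $T$ rounds of Algorithm~\ref{alg:ellipsoid:pricing:noise} into the rounds in which an exploratory price is posted (Lines~12--13) and all the remaining rounds (the conservative rounds of Lines~22--23 and the ``no-deal'' rounds of Lines~8--10), bound the single-round regret in each class, multiply the (rare) exploratory-round regret by the bound on the number of such rounds already established, and finally tune the threshold $\epsilon$.

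First I would invoke Lemma~\ref{lem:reserve:price:regret}: in every round the regret $R_t$ of Algorithm~\ref{alg:ellipsoid:pricing:noise} is at most $v_t-p_t'\mathbf{1}\{p_t'\le v_t\}$, where $p_t'$ is the \emph{pure} price the algorithm computes from its current ellipsoid $\mathcal{E}_t$ (the middle price $\mathbf{x}_t^{T}\mathbf{c}_t$ in an exploratory round, $\ubar{p}_t-\delta$ in a conservative round). I then condition on the event $\mathcal{G}=\{|\delta_t|\le\delta\ \text{for all}\ t\in[T]\}$, which has probability at least $1-1/T$ by Equation~(\ref{eq:noise:bound}); on its complement I use only the crude bound $R_t\le v_t\le\bar{p}_t+\delta\le SR+\delta=O(1)$ (a quick check shows every exploratory cut has $\alpha_t\le0$, so $\mathcal{E}_t\subseteq\mathcal{E}_1$, the radius-$R$ ball, whence $\bar{p}_t=\max_{\theta\in\mathcal{E}_t}\mathbf{x}_t^{T}\theta\le SR$), so $\mathcal{G}^{c}$ contributes only $O(1)$ in expectation over the noise.

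On $\mathcal{G}$ the true parameter $\theta^{*}$ remains in $\mathcal{E}_t$, so $v_t\le\bar{p}_t+\delta$. In a conservative round $\bar{p}_t-\ubar{p}_t\le\epsilon$ and the pure price $\ubar{p}_t-\delta\le\mathbf{x}_t^{T}\theta^{*}-\delta\le v_t$ sells, giving regret $v_t-(\ubar{p}_t-\delta)\le(\bar{p}_t-\ubar{p}_t)+2\delta\le\epsilon+2\delta$; in a ``no-deal'' round $R_t=0$ by Equation~(\ref{eq:regret:formula:rp}); and in an exploratory round the regret is at most $v_t\le SR+\delta=O(1)$. Combining with Lemma~\ref{lem:original:exploratory:number:round}, which bounds the number $T_e$ of exploratory rounds by $20n^{2}\log(20RS^{2}(n+1)/\epsilon)$, the cumulative regret on $\mathcal{G}$ is at most
\[
T_e\cdot O(1)+(T-T_e)(\epsilon+2\delta)=O\big(n^{2}\log((n+1)/\epsilon)\big)+O(T\epsilon)+O(T\delta).
\]
Choosing $\epsilon=\Theta(n^{2}/T)$ --- admissible because $\epsilon\ge4n\delta$ holds under the hypothesis $\delta=O(n/T)$ that Lemmas~\ref{lem:smallest:eigenvalue:nondecresing:alpha}--\ref{lem:original:exploratory:number:round} require --- and using $T\delta=O(n)$ collapses this to $O(n^{2}\log(T/n))$; carrying the $O(T\epsilon)$ and $O(T\delta)$ terms through the constraint $\epsilon\ge4n\delta$ without this simplification is what produces the second argument of the $\max$, so the worst-case regret is $O\big(\max(n^{2}\log(T/n),\,n^{3}\log(T/n)/T)\big)$ once the negligible $\mathcal{G}^{c}$ term is added.

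The crux is not this two-regime bookkeeping but the bound on $T_e$ behind Lemma~\ref{lem:original:exploratory:number:round}: one squeezes the final ellipsoid's volume between a geometric upper bound (each exploratory cut multiplies it by $\exp(-\Theta(1/n))$, Lemma~\ref{lem:ellipsoid:volume:ratio:alpha}) and a lower bound obtained by proving the smallest eigenvalue of $\mathbf{A}_t$ never drops below $\Theta(\epsilon^{2})$ along the directions the algorithm declines to cut (Lemmas~\ref{lem:smallest:eigenvalue}--\ref{lem:smallest:eigenvalue:change:alpha}), and then transfers the count to the reserve-price version using that a binding-but-rejected reserve price has regret exactly $0$ and induces only a deeper (more negative $\alpha_t$) cut, which tightens both bounds. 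Since all of those lemmas are already available, the only work left for the theorem is precisely the regret split and the choice of $\epsilon$ sketched above.
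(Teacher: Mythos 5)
Your proposal is correct and follows essentially the same route as the paper: it splits the regret into the low-probability noise-failure event (contributing $O(1)$), the exploratory rounds (each $O(RS+\delta)$, at most $T_e \leq 20n^2\log(20RS^2(n+1)/\epsilon)$ of them by Lemma~\ref{lem:original:exploratory:number:round}), and the conservative/no-deal rounds (each at most $\epsilon+2\delta$), then sets $\epsilon=\max(n^2/T,4n\delta)=O(n^2/T)$. The only cosmetic difference is that you make the no-deal rounds and the admissibility of $\epsilon\geq 4n\delta$ explicit, which the paper leaves implicit.
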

\begin{proof}
First, as we illustrated in the paragraph below Equation~(\ref{eq:noise:bound}), in each round $t$, the absolute value of the random variable $\delta_t$ has probability at most $1/T$ outside $\delta$. Thus, the cumulative regret incurred by removing the weight vector $\theta^*$ from the knowledge set is at most $\max_{\mathbf{x}_t, {\theta^*}} {\mathbf{x}_t}^T{\theta^*} T / T = RS$.

Second, we analyze the cumulative regret due to the posted prices. In round $t$, the regret incurred by posting the exploratory (\resp, conservative) price can be bounded above by $\bar{p}_t + \delta$ (\resp, $(\bar{p}_t + \delta) - (\ubar{p}_t - \delta)$), which can be further bounded above by $RS + \delta$ (\resp, $\epsilon + 2\delta$). Thus, the cumulative regret is no more than $T_e(RS + \delta) + (T - T_e)(\epsilon + 2\delta)$. When $\delta = O(n/T)$, $T_e$ takes its upper bound $20n^2\log(20RS^2$ $(n + 1)/\epsilon)$ from Lemma~\ref{lem:original:exploratory:number:round}, and $\epsilon$ is set to $\max(n^2/T, 4n\delta) = O(n^2/T)$, the worst-case regret incurred by the posted prices is $O(\max(n^2\log(T/n), n^3\log(T/n)/T))$.

By adding the above two parts, the worst-case regret of Algorithm~\ref{alg:ellipsoid:pricing:noise} is $O(\max(n^2\log(T/n), n^3\log(T/n)/T))$. This completes the proof.
\end{proof}


\section{Extensions}\label{sec:extensions}

In this section, we extend our proposed pricing framework under the fundamental linear model to some other non-linear models. We also discuss how to support several other similar application scenarios. 

\subsection{Non-linear Market Value Models}
We mainly investigate four kinds of non-linear models commonly used in measuring market values. The first two are the log-log and log-linear models in hedonic pricing from real estate and property studies~\cite{jour:hedonic:milon1984,jour:hedonic:malpezzi2002,jour:hedonic:sirmans2005,proc:kdd18:airbnb}, which can be formalized as $\log v_t = \sum_{i \in [n]} \log(x_{t, i}) {\theta_i^*}$ and $\log v_t = {{\mathbf{x}_t}^T}\theta^*$, respectively. Here, $x_{t,i}$ and $\theta_i^*$ denote the $i$-th elements of the feature vector $\mathbf{x}_t$ and the weight vector $\theta^*$, respectively. The other two models are the logistic model~\cite{proc:www07:richardson:lr,proc:www08:chakrabarti:lr,proc:kdd13:mcmahan} and the kernelized model~\cite{proc:nips14:amin} in online advertising, which can be formalized as $v_t = 1/(1 + \exp({\mathbf{x}_t}^T\theta^*))$ and $v_t = \sum_{k = 1}^{t - 1} K(\mathbf{x}_t, \mathbf{x}_k) {\theta_k^*}$, respectively. Here, $K(\cdot, \cdot)$ is a Mercer kernel operator.


We can further observe that the above four non-linear models can be unified to a general class of non-linear models:
\begin{align}
v_t = g\left(\phi\left(\mathbf{x}_t\right)^T \theta^*\right),
\end{align}
where $g: \mathbb{R} \mapsto \mathbb{R}$ is a non-decreasing and continuous function, \eg, in the two hedonic pricing models, $g$ is the natural exponential function; in the logistic model, $g$ is the logistic sigmoid function; in the kernelized model, $g$ is the identity function. Besides, $\phi: \mathbb{R}^n \mapsto \mathbb{R}^m$ represents a feature mapping of the original feature vector $\mathbf{x}_t$, which intends to capture non-linear correlations/dependencies among the different features of $\mathbf{x}_t$ and the different feature vectors within $t$ rounds. For example, in the log-log model, $\phi$ denotes applying the natural logarithm function to each element of $\mathbf{x}_t$; in the kernelized model, $m = t - 1$ and $\phi$ stands for the kernel function $K$; in the other two models, $\phi$ denotes the identity map. Furthermore, it is worth noting that both $g$ and $\phi$ are public knowledge, and only the weight vector $\theta^*$ is unknown. Therefore, by regarding the domain of $\theta^*$ as the knowledge set to be refined, our proposed pricing mechanism under the linear model can still apply to the above class of non-linear models. Specifically, $\phi(\mathbf{x}_t)$ now functions as the new feature vector, and the threshold $\epsilon$ is used to control $\bar{p}_t - \ubar{p}_t$, \ie, the difference between the maximum and minimum possible values of $\phi(\mathbf{x}_t)^T\theta$, where $\theta$ belongs to the data broker's knowledge set. In addition, the data broker will post the price $g(p_t)$ rather than the original $p_t$. We finally analyze the worst-case regret of Algorithm~\ref{alg:ellipsoid:pricing:noise} when adapted to the above class of non-linear models.

\begin{theorem}
Let $g: \mathbb{R} \mapsto \mathbb{R}$ denote a non-decreasing and continuous function with Lipschitz constant $L$. Let $\phi: \mathbb{R}^n \mapsto \mathbb{R}^m$ denote a mapping of the feature vector $\mathbf{x}_t$ such that $\|\phi(\mathbf{x}_t)\| \leq U$ for some $U \in \mathbb{R}$, and let $\theta^*$ denote the weight vector over $\phi(\mathbf{x}_t)$ such that $\|\theta^*\| \leq R$. If the market value takes the form $v_t = g(\phi(\mathbf{x}_t)^T \theta^*)$, and the uncertainty parameter $\delta = O(n/T)$, the worst-case regret of the adapted Algorithm~\ref{alg:ellipsoid:pricing:noise} is $O(\max(n^2\log(T/n), n^3\log(T/n)/T))$.
\end{theorem}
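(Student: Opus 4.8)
The plan is to reduce the non-linear instance to the linear one already settled by Theorem~\ref{theorem:alg2}. The adapted algorithm runs Algorithm~\ref{alg:ellipsoid:pricing:noise} verbatim with the transformed features $\phi(\mathbf{x}_t)$ in place of $\mathbf{x}_t$: the ellipsoidal knowledge set lives in the domain of $\theta^*$, with $R$ bounding $\|\theta^*\|$ and $U$ taking over the role of the feature-norm bound $S$; it produces an \emph{internal} price $p_t$ together with the estimates $\ubar{p}_t\le\tilde v_t\le\bar{p}_t$, where $\tilde v_t:=\phi(\mathbf{x}_t)^T\theta^*$; the broker then posts $g(p_t)$, and the reserve price, which naturally lives in the output space of $g$, is imported as the internal lower bound $g^{-1}(q_t)$ (a generalized inverse when $g$ is only non-decreasing). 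First I would check that the accept/reject feedback survives the outer map: since $g$ is non-decreasing and continuous, a rejection $g(p_t)>v_t=g(\tilde v_t)$ forces $p_t>\tilde v_t$, so every ellipsoid cut the algorithm performs is still a valid cut on the knowledge set for $\theta^*$ (an acceptance is weaker on flat pieces of $g$, but --- exactly as in Section~\ref{sec:regret:analysis} --- the worst-case regret is analyzed under the assumption that all exploratory prices are rejected, so the clean implication is all that is needed). Hence the sequence of ellipsoids, and in particular the partition of rounds into exploratory and conservative ones, coincides with that of Algorithm~\ref{alg:ellipsoid:pricing:noise} run on a linear instance, and Lemma~\ref{lem:original:exploratory:number:round} (with $S\mapsto U$) bounds the number of exploratory rounds by $T_e=O(n^2\log(T/n))$ once $\epsilon=\max(n^2/T,4n\delta)=O(n^2/T)$ and $\delta=O(n/T)$.

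It then remains to turn this internal guarantee into a bound on the true per-round regret $R_t=v_t-g(p_t)\mathbf{1}\{g(p_t)\le v_t\}$, and the Lipschitz constant $L$ is the bridge. Since $|\tilde v_t|\le UR$, one has $v_t=g(\tilde v_t)\le g(0)+LUR=O(1)$, which bounds $R_t$ crudely in every round; applied on the at most $T_e$ exploratory rounds (each incurring regret at most $g(0)+L(RU+\delta)$) this contributes $O(n^2\log(T/n)+n^3\log(T/n)/T)$, mirroring the term $T_e(RS+\delta)$ in the proof of Theorem~\ref{theorem:alg2}. On a conservative round, either $g^{-1}(q_t)>\tilde v_t$, in which case $q_t\ge v_t$ and hence $R_t=0$ by the reserve branch of~(\ref{eq:regret:formula:rp}), or the internal conservative price satisfies $p_t\le\tilde v_t$, so that $g(p_t)\le g(\tilde v_t)=v_t$, the true sale occurs, and $R_t=g(\tilde v_t)-g(p_t)\le L(\tilde v_t-p_t)\le L(\bar{p}_t-\ubar{p}_t+\delta)\le L(\epsilon+\delta)$, the last step because the conservative branch was taken; summed over the at most $T$ such rounds this is $O(n^2)$. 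Adding the $O(1)$ rare-event term from the subGaussian tail (step one of the proof of Theorem~\ref{theorem:alg2}) and invoking Lemma~\ref{lem:reserve:price:regret} so that the reserve price never hurts, the total is $O(\max(n^2\log(T/n),n^3\log(T/n)/T))$.

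The step I expect to be the main obstacle is the faithful transfer through $g$, for two intertwined reasons. Because $g$ is only non-decreasing, on a flat stretch a true sale can occur even when $p_t>\tilde v_t$ --- harmless for the regret (then $g(p_t)=v_t$ and $R_t=0$) but it must be excluded from the ellipsoid-shrinking analysis, which is legitimate only because we may restrict to the all-rejections worst case; and the reserve price, together with the ``certain no-sale'' test $q_t\ge\bar{p}_t+\delta$, has to be re-expressed in the domain of $g$ with care, since $g^{-1}$ need not be single-valued. Once this bookkeeping is in place, everything else is routine: Lemmas~\ref{lem:reserve:price:regret} and~\ref{lem:original:exploratory:number:round} carry over essentially unchanged, and the regret accounting differs from that of Theorem~\ref{theorem:alg2} only by the constant factor $L$ and the substitution $S\mapsto U$.
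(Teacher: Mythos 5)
Your proposal is correct and follows essentially the same route as the paper: bound $T_e$ via Lemma~\ref{lem:original:exploratory:number:round} with $U$ in place of $S$, charge each exploratory round a constant (of order $g(RU+\delta)$) and each conservative round $L(\epsilon+O(\delta))$ via the Lipschitz constant, and add the subGaussian rare-event term. The extra bookkeeping you flag --- that rejection of $g(p_t)$ implies rejection of the internal price $p_t$ because $g$ is non-decreasing, and that the reserve price must be pulled back through a generalized inverse $g^{-1}$ --- is glossed over in the paper's two-paragraph proof, so your version is, if anything, slightly more careful while reaching the identical bound.
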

\begin{proof}
First, the cumulative regret incurred by removing the weight vector $\theta^*$ from the knowledge set is at most $g(RU)T/T = g(RU)$.

Second, we analyze the cumulative regret caused by the posted prices. In round $t$, the regret incurred by posting the exploratory (\resp, conservative) price is no more than $g(\bar{p}_t + \delta)$ (\resp, $g(\bar{p}_t + \delta) - g(\ubar{p}_t - \delta)$), which can be further bounded above by $g(RU + \delta)$ (\resp, $L ((\bar{p}_t + \delta) - (\ubar{p}_t - \delta)) \leq L(\epsilon + 2\delta)$). Therefore, the cumulative regret is no more than $T_e g(RU + \delta) + (T - T_e)L(\epsilon + 2\delta)$.
When  $\delta = O(n/T)$, $T_e$ takes its upper bound $20n^2\log(20RU^2(n+1)/\epsilon)$ (here, $U$ replaces $S$ in Lemma~\ref{lem:original:exploratory:number:round} for the new feature vector $\phi(\mathbf{x}_t)$), and $\epsilon = O(n^2/T)$, the cumulative regret incurred by the posted prices is $O(\max(n^2\log(T/n), n^3\log(T/n)/T))$.

By adding the above two parts, we complete the proof.
\end{proof}

\subsection{Supporting Other Application Scenarios}

We first outline the characteristics of the pricing problem in online personal data markets. We then point out some other similar application scenarios in practice, and further illustrate how to support them with our proposed pricing mechanism under different market value models.

In data markets, the data broker is the seller, and each data consumer is a buyer. Besides, the sequential queries, as the products to be sold, have three atypical characteristics: (1) {\em Customization}: The queries, requested by different data consumers, are highly differentiated; (2) {\em Existence of reserve price}: The total privacy compensation, allocated to the underlying data owners, serves as the reserve price of a query; (3) {\em Timeliness}: If no deal occurs in a round, the query will vanish, generating regret at the data broker.


Several other products in practice share one or more characteristics listed above, which implies that our proposed pricing mechanism for personal data markets can be extended to these scenarios. One example is hospitality service on booking platforms, \eg, Airbnb, Wimdu, and Workaway. A tourist can raise some requirements on her desirable accommodation, such as location, the numbers of bedrooms and bathrooms, amenities, reviews, historical occupancy rate, and so on. Meanwhile, the host of the house can set a minimum/reserve price for the accommodation. If the house is not rented out at a certain date, it may cause regret at both the host and the booking platform. We can observe that the host, the booking platform, and the tourist, play similar roles to the data owner, the data broker, and the data consumer in data markets, respectively. Besides, the market value of the accommodation can be well interpreted by the linear or log-linear model~\cite{proc:kdd18:airbnb}. Yet, another example is online advertising on web publishers. Here, we consider a novel scenario, where the impressions are traded through posting prices rather than the ad auctions already adopted by Internet giants, \eg, Google, Microsoft, and Facebook. In particular, an advertiser can customize her need of an impression, \eg, position and target audience. If the impression is not sold within a given time frame, it will generate regret at the web publisher. We note that the web visitors who generate impressions, the web publisher, and the advertiser, play similar roles to the data owners, the data broker, and the data consumer in data markets, respectively. Besides, the market value of an impression is normally measured by its click-through rate (CTR), which can be effectively captured by the logistic~\cite{proc:www07:richardson:lr,proc:www08:chakrabarti:lr,proc:kdd13:mcmahan} or kernelized model~\cite{proc:nips14:amin}. Last but not least example is loan application from banks and other financial institutions, \eg, JPMorgan Chase Bank, Wells Fargo, and Lending Club. A borrower intends to request a certain amount of loan. Based on the situations of the borrower, including the credit score, employment status, and property state, a financial institution determines whether to approve the application and at which interest rate. Of course, the borrower can also accept or reject the loan offer. The case of rejection may incur regret at the financial institution. Here, the financial institution and the borrower play similar roles to the data broker and the data consumer in data markets, respectively. In addition, the interest rate of a loan can be captured by the linear or log-log model~\cite{jor:jar98:loan:pricing,book:2005:loan}.



In conclusion, our proposed pricing mechanism is not just limited to online personal data trading, and can also support other application scenarios with common characteristics.



\section{Evaluation Results}\label{sec:evaluations}

\begin{figure*}[t]
\centering
\subfigure[$n = 1$]{\label{fig:query:n1}
\includegraphics[width=0.66\columnwidth]{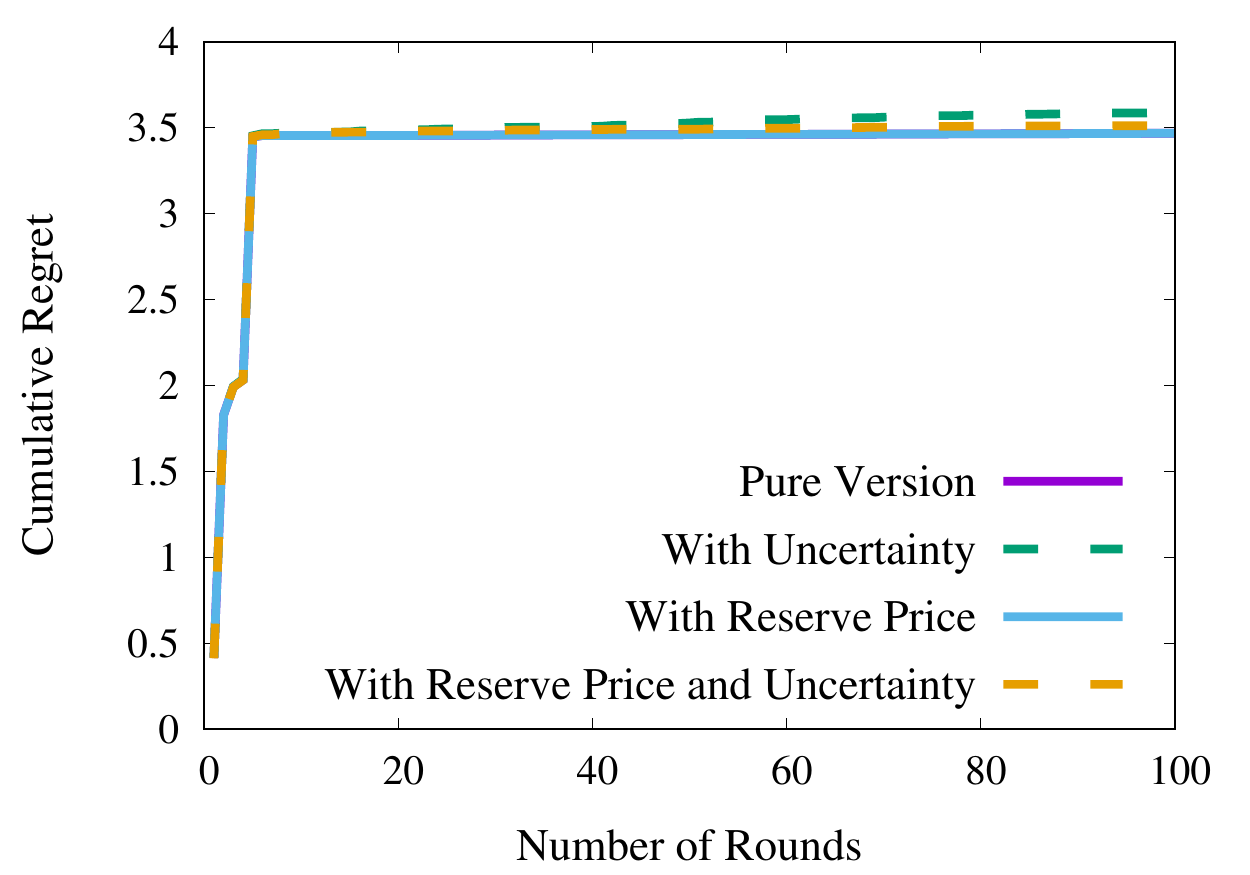}}
\subfigure[$n = 20$]{\label{fig:query:n20}
\includegraphics[width=0.66\columnwidth]{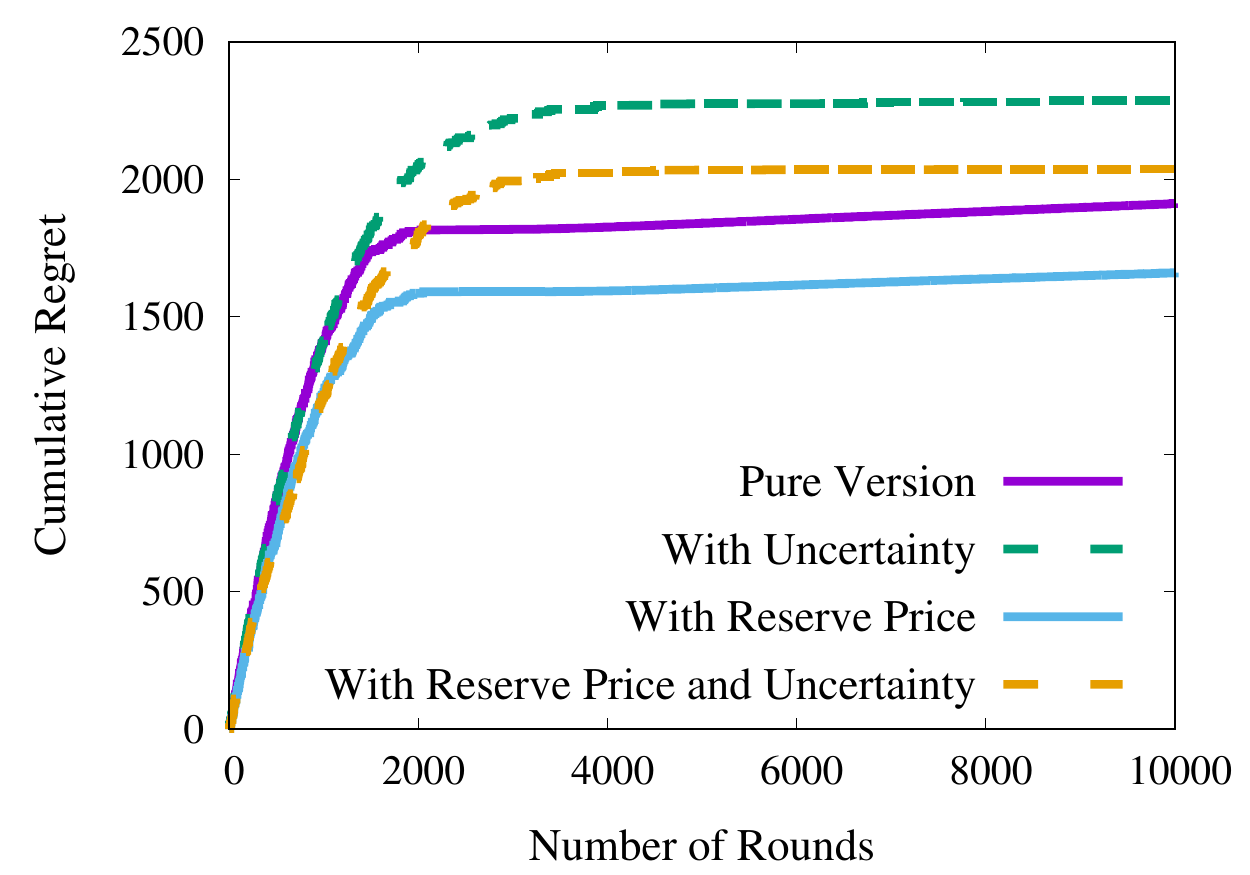}}
\subfigure[$n = 40$]{\label{fig:query:n40}
\includegraphics[width=0.66\columnwidth]{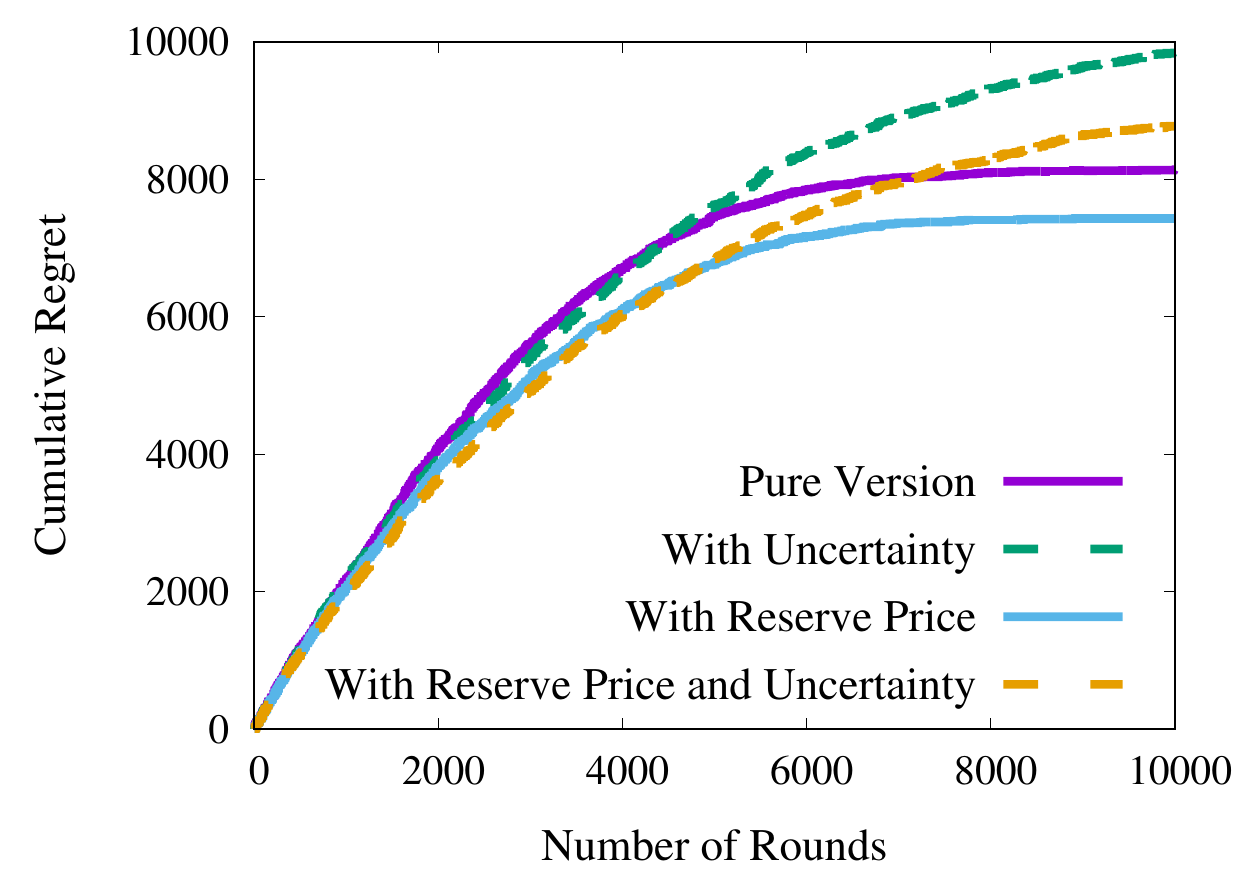}}
\subfigure[$n = 60$]{\label{fig:query:n60}
\includegraphics[width=0.66\columnwidth]{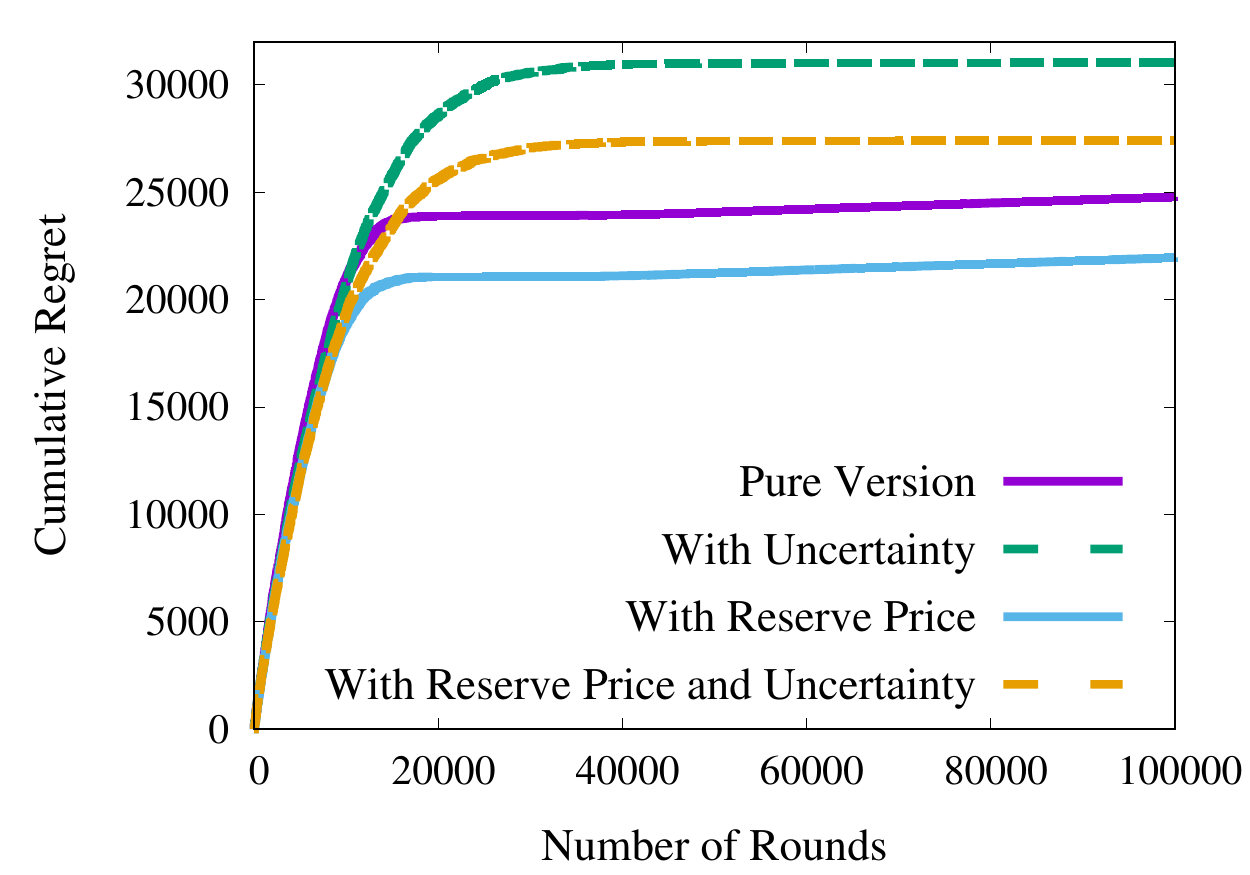}}
\subfigure[$n = 80$]{\label{fig:query:n80}
\includegraphics[width=0.66\columnwidth]{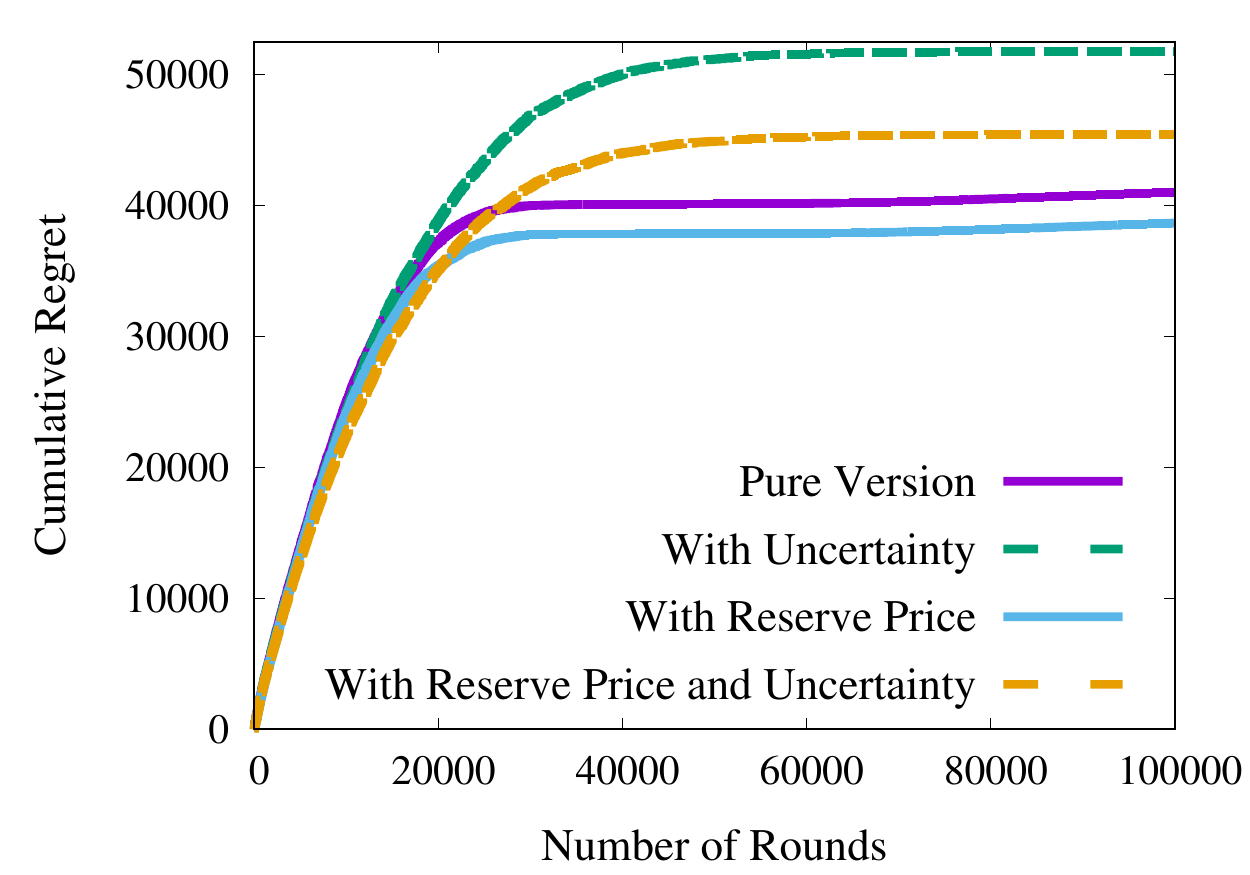}}
\subfigure[$n = 100$]{\label{fig:query:n100}
\includegraphics[width=0.66\columnwidth]{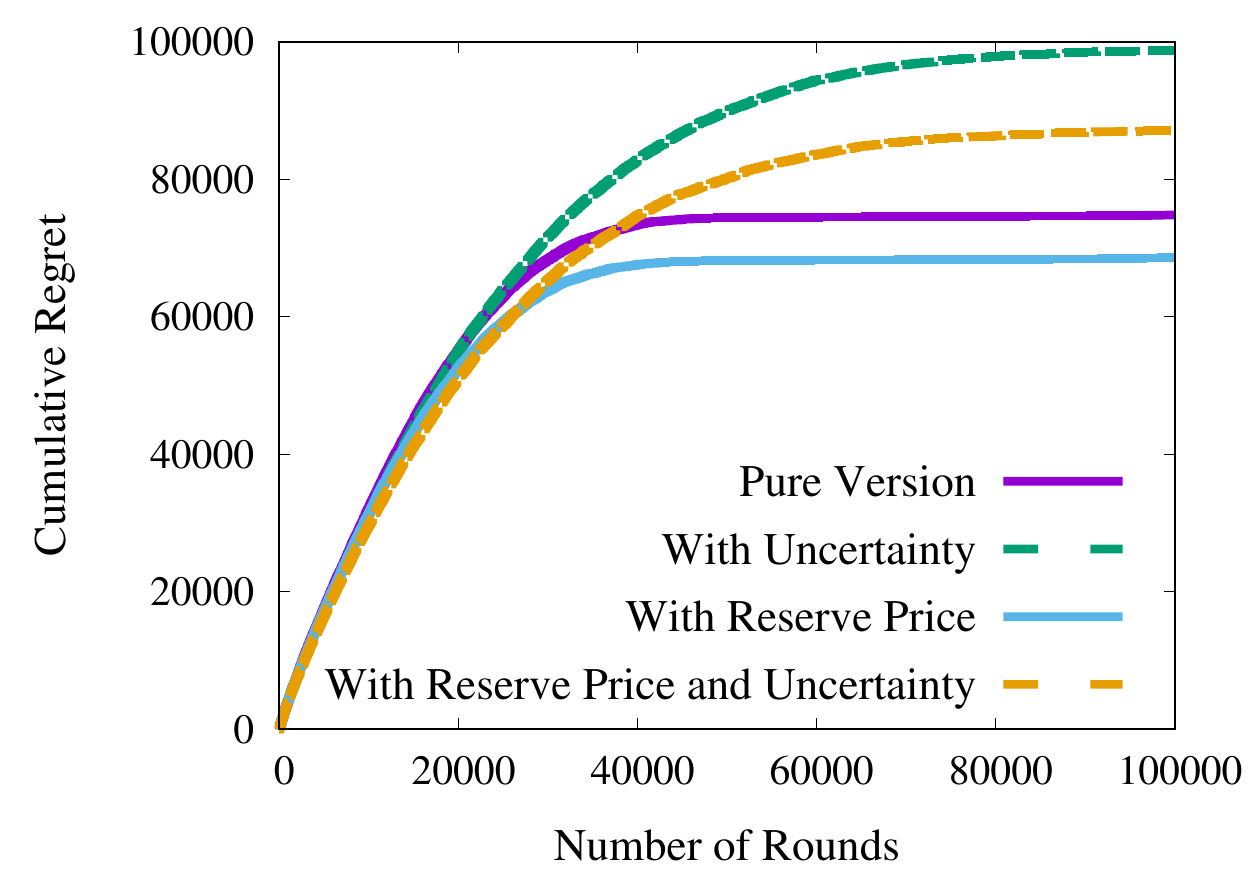}}
\caption{Cumulative regrets with varying dimensions of feature vector in pricing of noisy linear query.}\label{fig:query:total:all}
\vspace{-0.4em}
\end{figure*}

In this section, we present the evaluation results of our pricing mechanism from practical regret and overhead.


We use three real-world datasets, including MovieLens 20M dataset~\cite{link:movielens20m}, Airbnb listings in U.S. major cities~\cite{link:airbnb}, and Avazu mobile ad click dataset~\cite{link:avazuctr}, to evaluate our pricing mechanism over noisy linear queries, accommodation rentals, and impressions, under the linear, log-linear, logistic market value models, respectively. First, the MovieLens dataset contains $20,000,263$ ratings of $27,278$ movies made by $138,493$ anonymous users between January 9, 1995 and March 31, 2015. Second, the Airbnb dataset provides a list of $74,111$ booking records from year 2008 to year 2017 in 6 U.S. cities, \ie, New York, Los Angeles, San Francisco, Washington, Chicago, and Boston. Besides, each record contains a user id, the logarithmic lodging price, house type, location, amenities, host response rate, cancellation policy, and so on. Third, the Avazu dataset comprises 10 days of click-through data, in total $404,289,670$ ad displaying samples. Each sample covers information of both the ad and corresponding mobile user, \eg, an ad id, click or non-click reaction, position, device id, device ip, and internet access type.




\subsection{Application 1: Noisy Linear Query}


We first introduce the details of our setup for trading noisy linear queries, where a static market framework using marked prices has been investigated in~\cite{jour:cacm:2017:li}. On one hand, we regard the MovieLens users, who contributed the ratings, as the data owners in data markets. Additionally, for the data owners, we adopt the differential privacy based privacy leakage quantification mechanism and the tanh based privacy compensation functions from~\cite{jour:cacm:2017:li}. On the other hand, we simulate the noisy linear queries from online data consumers. To validate the adaptivity of our pricing mechanism, the parameters of each linear query are randomly drawn from either a multivariate normal distribution with zero mean vector and identity covariance matrix or a uniform distribution within the interval $[-1, 1]$, while the variance of Laplace noise added to the true answer is randomly selected from $\{10^{k}|k \in \mathbb{Z}, |k|\leq 4\}$. Under each query $Q_t$, we compute the privacy compensations of all data owners, and then generate an $n$-dimensional feature vector with the method in Section~\ref{subsec:problem:formulation}. We recall that we evenly divide the sorted privacy compensations into $n$ partitions, sum those falling into a certain partition, and thus obtain a feature. For the sake of normalization, we scale each feature vector such that its $L_2$ norm is 1, \ie, $\forall t \in [T], \|\mathbf{x}_t\| = 1$ and $S = 1$. Besides, we set the reserve price of a query to be the total privacy compensation, \ie, $q_t = \sum_{i \in [n]} x_{t, i}$ here. In nature, the $L_2$ norm of the weight vector for deriving $q_t$ is $\sqrt{n}$. Moreover, we draw the weight vector $\theta^*$ for modeling the market values of queries in a similar way to sample the query parameters. The difference is that we further scale $\theta^*$ such that its $L_2$ norm is $\sqrt{2n}$, \ie, $\|\theta^*\| = \sqrt{2n}$. This guarantees that the market value of each query $v_t = {\mathbf{x}_t}^T\theta^*$ is no less than its reserve price $q_t$ with a high probability. Furthermore, we set the data broker's initial knowledge set $\mathcal{E}_1$ of $\theta^*$ such that its $L_2$ norm is no more than $2\sqrt{n}$, \ie, $R = 2\sqrt{n}$.



\begin{table}[!t]
\caption{Statistics over pricing of noisy linear query per pound under the version with reserve price.}
\label{tab:query:pricing:statistics}
\centering
\resizebox{\columnwidth}{!}{
\begin{tabular}[t]{rc|cccc}
\toprule
$n$ & $T$ & Market Value & Reserve Price &  Posted Price & Regret\tabularnewline
\midrule\midrule
1 & $10^2$ & 1.414 & 1 & 1.409 (0.045) & 0.035 (0.202)\tabularnewline
20  & $10^4$ & $^*$3.874 (1.278) & 3.388 (0.776) & 3.685 (1.631) & 0.166 (0.824) \tabularnewline
40  & $10^4$ & 5.246 (1.616)     & 4.739	(1.188) & 5.254 (1.614) & 0.743	(1.933) \tabularnewline
60  & $10^5$ & 7.098 (1.910)     & 5.733 (1.491) & 7.089 (1.912) & 0.220 (1.257) \tabularnewline
80  & $10^5$ & 7.266 (2.046)     & 6.531 (1.761) & 7.243 (2.091) & 0.387 (1.690) \tabularnewline
100	& $10^5$ &  8.824 (2.235) 	& 7.221 (1.985) & 8.820 (2.242) & 0.686 (2.461) \tabularnewline
\toprule
\multicolumn{6}{l}{$^*$The entry is stored in the format: mean (standard deviation).} \tabularnewline
\end{tabular}
}
\end{table}


In Fig.~\ref{fig:query:total:all}, we plot the cumulative regrets of four different versions of our pricing mechanism under the linear model, including Algorithm~\ref{alg:ellipsoid:pricing}* (the pure version), Algorithm~\ref{alg:ellipsoid:pricing:noise}* (the version with uncertainty), Algorithm~\ref{alg:ellipsoid:pricing} (the version with reserve price), and Algorithm~\ref{alg:ellipsoid:pricing:noise} (the version with reserve price and uncertainty). Here, the dimension of feature vector $n$ first takes the value 1, and then increases from 20 to 100 with a step of 20. Besides, $\delta$ is fixed at 0.01, which is in the pre-analyzed order of $O(n/T)$ for $n = 1$, but is much larger than $O(n/T)$ for $n \neq 1$. Moreover, in each round $t$, the randomness $\delta_t$ in the market value $v_t$ is drawn from the normal distribution with mean 0 and standard deviation $\sigma = \delta/(\sqrt{2\log 2}\log T)$. Furthermore, the threshold $\epsilon$ is set to $\log_2(T)/T$ for $n = 1$ as claimed in Theorem~\ref{them:one:dimension}, while is set to $n^2/T$ for $n \neq 1$ in Theorem~\ref{theorem:alg2}. As a complement to Fig.~\ref{fig:query:total:all}, we list some precise statistic information about the version with reserve price in Table~\ref{tab:query:pricing:statistics}, including the dimension of feature vector $n$, the number of total rounds $T$, together with the means and standard deviations of market value, reserve price, posted price, and regret per round. We note that the market value column can work as a baseline for relatively measuring the levels of uncertainty (in the magnitude of $0.1\%$ of the market value) and regret.


We first observe Fig.~\ref{fig:query:total:all} holistically. We can see that under a specific version, the cumulative regret at the end of a certain number of rounds $t$ increases with $n$. The reason is that when $n$ grows, the data broker needs to post exploratory prices in more rounds to obtain a good knowledge of the weight vector $\theta^*$, and thus can accumulate more regret. This outcome conforms to our theoretic regret analysis in Section~\ref{sec:regret:analysis}.

\begin{figure*}[t]
\centering
\subfigure[Noisy Linear Query ($n = 100$)]{\label{fig:query:regretn100:ratio}
\includegraphics[width=0.66\columnwidth]{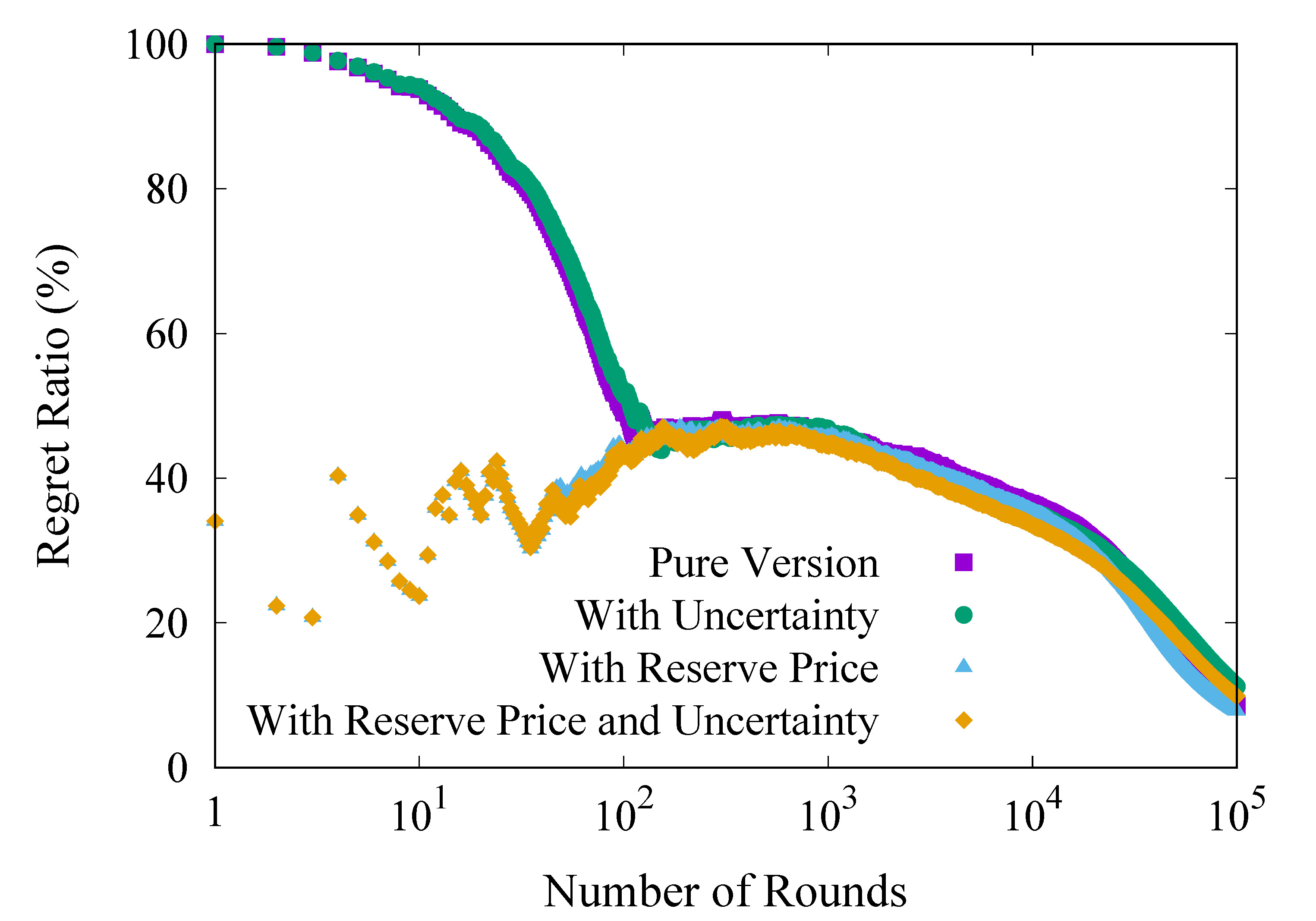}}
\subfigure[Accommodation Rental]{\label{fig:airbnb:regret:ratio}
\includegraphics[width=0.66\columnwidth]{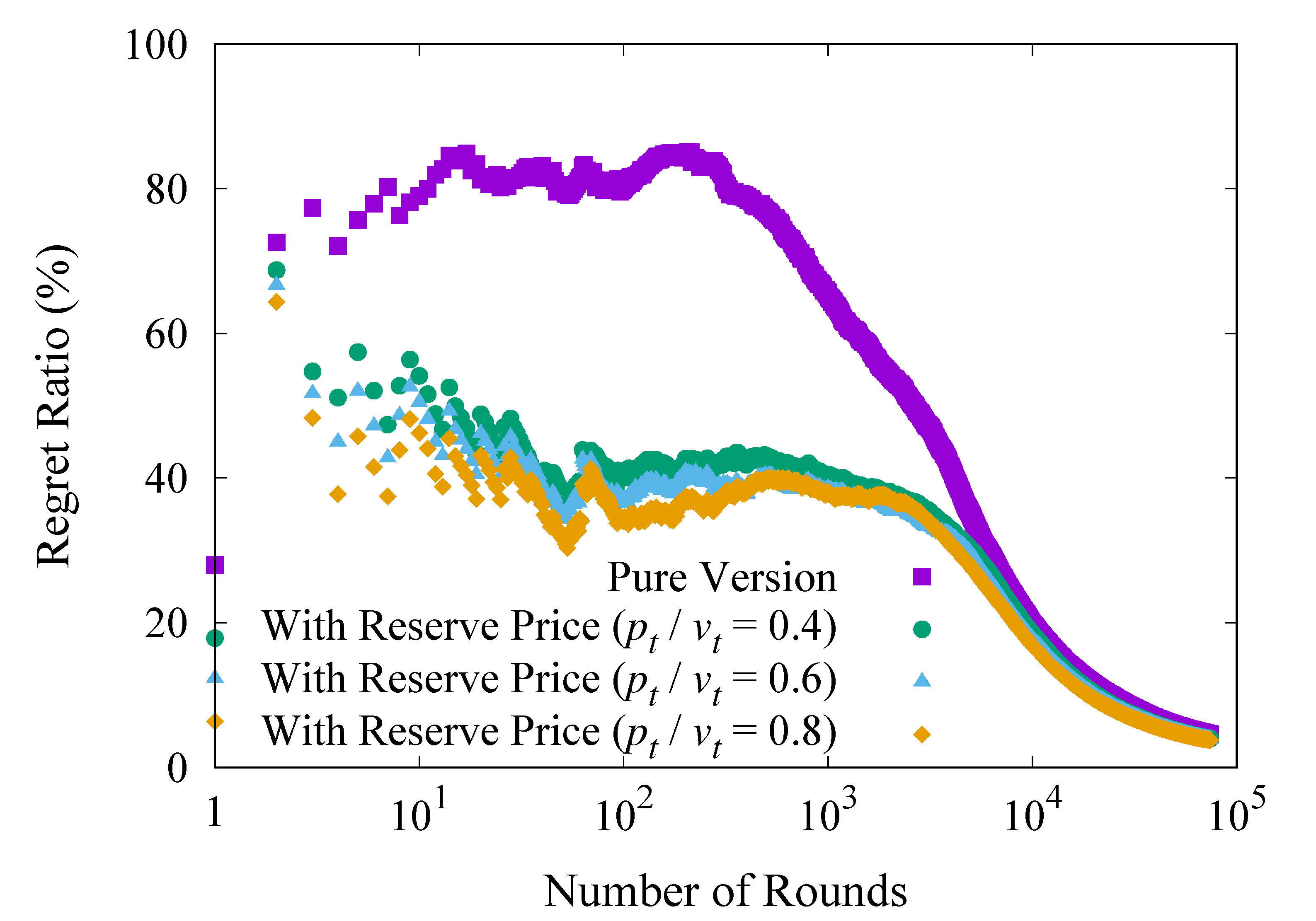}}
\subfigure[Impression]{\label{fig:ctr:regret:ratio}
\includegraphics[width=0.66\columnwidth]{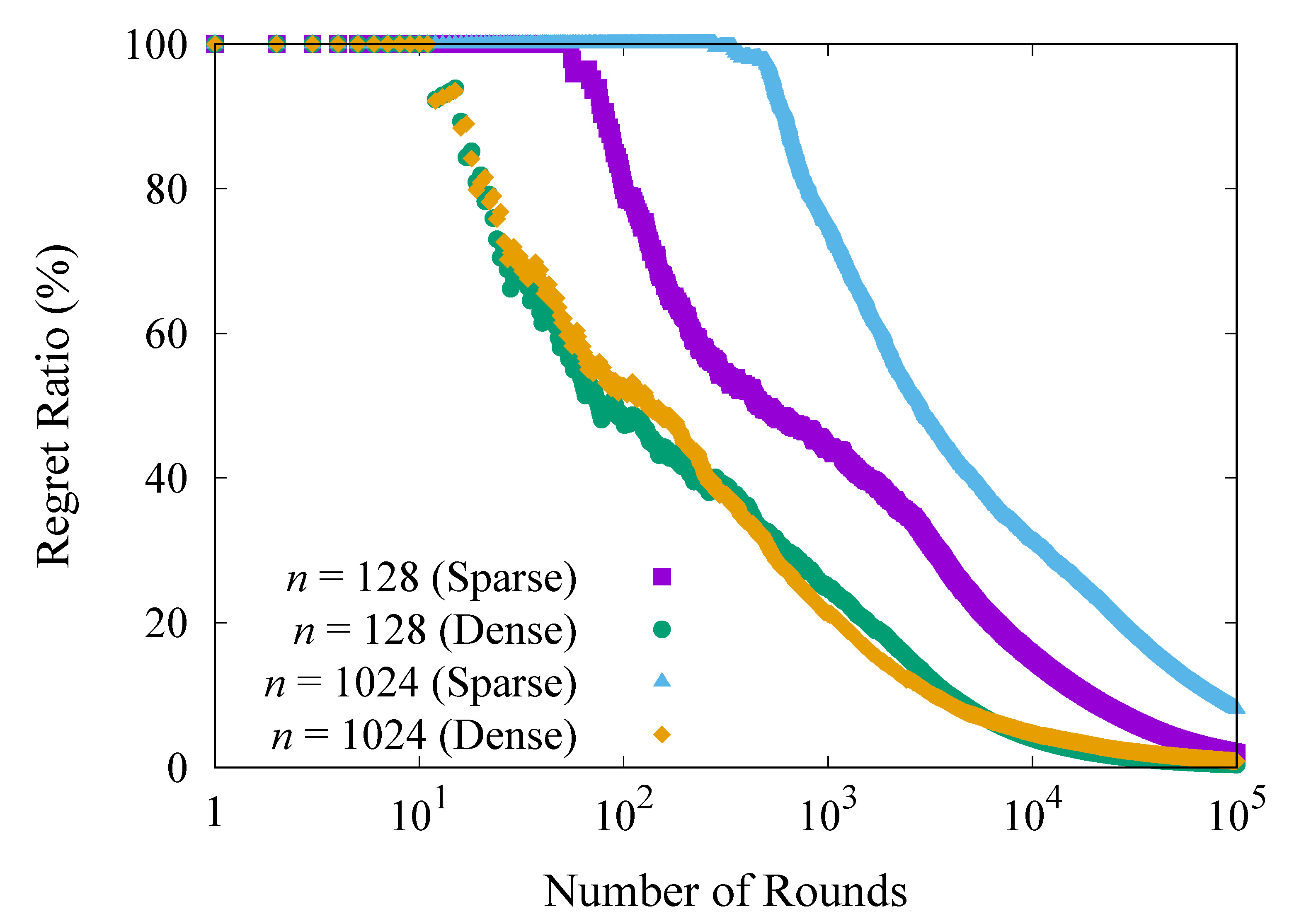}}
\caption{Regret ratios in pricings of noisy linear query, accommodation rental, and impression.}
\end{figure*}

We now observe the one-dimensional case in Fig.~\ref{fig:query:n1} and the multi-dimensional cases from Fig.~\ref{fig:query:n20} to Fig.~\ref{fig:query:n100} more carefully. We start with the one-dimensional case. First, by comparing the purple and blue solid lines in Fig.~\ref{fig:query:n1}, we can see that the introduction of the reserve price constraint has no effect on the pure version of our pricing mechanism. When $n = 1$, the reserve price and the market value of each query are constants 1 and $\sqrt{2}$, respectively. In addition, the data broker's initial knowledge of the market value is the interval $[0, 2]$. Thus, in the first round, no matter the data broker considers or ignores the reserve price 1, she posts the exploratory price 1, which is less than the market value $\sqrt{2}$, and should be accepted by the data consumer. After this round, the interval is refined to $[1, 2]$, which indicates that the reserve price 1 can no longer affect the posted prices. Second, by comparing the purple (\resp, blue) solid line with the green (\resp, golden) dashed line in Fig.~\ref{fig:query:n1}, we can also see that the introduction of low uncertainty will slightly increase the cumulative regret in the pure version (\resp, the version with reserve price).


We next focus on the multi-dimensional cases. Once again, we examine how the reserve price constraint can affect our posted price mechanism. We can find that the incorporation of reserve price can dramatically reduce the cumulative regret. In particular, when $n = 20$ and the number of rounds $t$ is $10^4$, the version with reserve price (\resp, the version with reserve price and uncertainty) reduces $13.16\%$ (\resp, $10.92\%$) of the cumulative regret than the pure version (\resp, the version with uncertainty). We further examine the impact of uncertainty. We can see that the existence of uncertainty accumulates more regret, especially when $t$ is large. In particular, when $n = 60$ and $t = 10^5$, the version with uncertainty (\resp, the version with reserve price and uncertainty) increases $25.25\%$ (\resp, $24.86\%$) of the cumulative regret than the pure version (\resp, the version with reserve price). This is because in the case of a large $t$, the data broker already has a good knowledge of the weight vector $\theta^*$, and posts the conservative price with a high probability. Besides, we recall that to circumvent uncertainty, the conservative price, involving the minimum possible market value $\ubar{p}_t$, decreases by a ``buffer" of size $\delta$ to keep its acceptance ratio (Algorithm~\ref{alg:ellipsoid:pricing:noise}, Line 27), which can generate a higher regret.


We finally provide an intuitive sense of the regret level of our pricing mechanism. We introduce a metric, called {\em regret ratio}, which is defined as the ratio between the cumulative regret and the cumulative market value, \ie, $\sum_{k=1}^{t} R_k / \sum_{k=1}^{t} v_k$ at the end of $t$ rounds. For example, in Table~\ref{tab:query:pricing:statistics}, we can divide the mean values in the regret column by those in the market value column, and get the regret ratios of the version with reserve price for different $n$'s at the end of $T$ rounds .

Coupled with Fig.~\ref{fig:query:n100}, which depicts the cumulative regrets of four different versions for $n = 100$ at the end of different numbers of rounds, Fig.~\ref{fig:query:regretn100:ratio} further plots the regret ratios. One key observation from Fig.~\ref{fig:query:regretn100:ratio} is that when the number of rounds $t$ is small, \ie, when $t \leq 100$, the regret ratio of the version with reserve price (\resp, the version with reserve price and uncertainty) is much lower than that of the pure version (\resp, the version with uncertainty). This reflects a critical functionality of reserve price: it can mitigate the cold-start problem in a posted price mechanism. More specifically, in the beginning, the data broker holds a broad knowledge set of the weight vector $\theta^*$, and thus the market value estimation of a query is coarse, which implies a high regret ratio. However, with the help of reserve price, the data broker can improve the market value estimation, through imposing an additional lower bound and refining the knowledge set more quickly. The mitigation of cold start can be a factor underlying our above observation that the reserve price constraint reduces the cumulative regret.


The second key observation from Fig.~\ref{fig:query:regretn100:ratio} is that as the number of rounds $t$ grows, the difference between the regret ratios of the versions with and without reserve price becomes smaller. Besides, when $t$ is very large, the regret ratios of all four versions are very low. In particular, at the end of $T = 10^5$ rounds, the regret ratios of the pure version, the version with uncertainty, the version with reserve price, and the version with reserve price and uncertainty are $8.48\%$, $11.19\%$, $7.77\%$, and $9.87\%$, respectively. The reason is that after enough rounds, the data broker will have a good estimation of a query's market value, and the effect of reserve price on the posted price diminishes. An extreme example happens in the one-dimensional case presented above, where after the first round, the reserve price has already been excluded from the estimated interval. At last, we provide a risk-averse baseline for the versions involving the reserve price, which consistently posts the reserve price in each round. The regret ratio of such a baseline is $18.16\%$. Therefore, compared with this baseline, the data broker, using our pricing mechanism, can further reduce $57.19\%$ and $45.64\%$ of the regret ratios in the version with reserve price and the version with reserve price and uncertainty, respectively.

These evaluation results demonstrate that our pricing mechanism under the fundamental linear model can indeed reduce the practical regret of the data broker in online data markets.


\subsection{Application 2: Accommodation Rental}

We first describe how to preprocess the Airbnb dataset, and also present some setup details for pricing accommodation rentals under the log-linear model. First, to obtain the feature vector of each booking record, we mainly utilize a data analysis library in Python, called pandas. In particular, we process the categorical features with the pandas built-in data type ``categoricals", which can handle the missing values, and return an integer array of codes for all categories. Besides, we add some interaction features to enhance model capacity. The final dimension of each feature vector $n$ is 55. Second, to obtain the weight vector $\theta^*$ in modeling the market values of accommodations, we regard the logarithmic lodging prices as target variables in supervised learning, and then apply linear regression to learn the coefficients of different features, which play the role of $\theta^*$ here. Specifically, the mean squared error (MSE) over the test set, which occupying $20\%$ of the Airbnb dataset, is $0.226$. Third, to investigate how different settings of reserve price can affect the posted price mechanism, we vary the ratio between the natural logarithms of reserve price and market value, namely $q_t/v_t$. Fourth, when computing the regret ratios, we use the real rather than the logarithmic posted price and market value, by applying the natural exponential function to $p_t$ and $v_t$.


Fig.~\ref{fig:airbnb:regret:ratio} depicts the regret ratios of the pure version of our pricing mechanism under the log-linear model, as well as the versions with reserve price, where $q_t/v_t$ ranges from 0.4, to 0.6, and to 0.8. From Fig.~\ref{fig:airbnb:regret:ratio}, we can see that when the reserve price is set to be closer to the market value, the regret ratio decreases, especially when the number of rounds $t$ is small, \ie, when $t \leq 10^3$. In other words, as the reserve price is approaching the market value, its impact on mitigating the cold-start problem in a posted price mechanism is more evident. We can also see from Fig.~\ref{fig:airbnb:regret:ratio} that at the end of $T = 74,111$ rounds, the regret ratios are very low. In particular, the regret ratios of the pure version, and the versions with reserve price where $q_t/v_t = 0.4, 0.6,$ and $0.8,$ are $4.57\%$, $4.01\%$, $3.83\%$, and $3.79\%$, respectively. We here still consider the risk-averse baseline, where the reserve price is posted in each round, for comparison. The regret ratios of this baseline are $23.40\%$, $17.00\%$, and $9.33\%$ in the versions with reserve price where $q_t/v_t$ = 0.4, 0.6, and 0.8, respectively. Compared with this baseline, the data broker can further reduce $82.88\%$, $77.46\%$, and $59.39\%$ of the regret ratios, respectively.

The above fine-grained evaluation results provide a deeper understanding of the reserve price's role in reducing the practical regret of a posted price mechanism. Besides, our proposed pricing mechanism significantly outperforms the baseline which merely exploits the reserve price.


\subsection{Application 3: Impression in Advertising}

We first introduce data preprocessing and setup for pricing impressions under the logistic model. First, to handle the categorical data fields in ad displaying samples, we utilize one-hot encoding with the hashing trick, where the dimension of the feature vector $n$ serves as the modulus after hashing. Second, we regard the click/non-click states as target variables, further apply Follow The Proximally Regularized Leader (FTRL-Proximal) based logistic regression algorithm (which has been deployed at Google's advertising platform~\cite{proc:kdd13:mcmahan}), and thus obtain the weight vector $\theta^*$ for capturing CTRs. In particular, FTRL-Proximal is an online learning algorithm with per-coordinate learning rates and $L_1, L_2$ regularizations, and can preserve excellent performance and sparsity. When testing over the samples in the last two days, the logistic loss is 0.420 (\resp, 0.406) at $n = 128$ (\resp, $n = 1024$). Besides, the learnt weight vector $\theta^*$ is quite sparse. Specifically, the number of nonzero elements in $\theta^*$ is 21 (\resp, 23) at $n = 128$ (\resp, $n = 1024$). In what follows, we investigate two different cases to validate the feasibility of our pricing mechanism over both sparse and dense feature vectors. One is the sparse case, which keeps all the elements in each feature vector no matter their weights are nonzero or zero. The other is the dense case, which omits those features if their weights are zero.

In Fig.~\ref{fig:ctr:regret:ratio}, we plot the regret ratios of the pure version of our pricing mechanism in both sparse and dense cases, when $n$ takes the values $128$ and $1024$. We can observe from Fig.~\ref{fig:ctr:regret:ratio} that the regret ratio in the sparse case decreases more slowly than that in the dense case, especially when the number of rounds $t$ is smaller than $10^3$. This outcome stems from that the starting rounds are mainly dedicated to eliminating those zero elements in the weight vector, which implies a larger regret ratio in the beginning. This reason can also account for the phenomenon that in the sparse case, the regret ratio for $n = 1024$ decreases more slowly than that for $n = 128$. Even so, after $10^5$ rounds, the regret ratios are $2.02\%$, $0.41\%$, $8.04\%$, and $0.89\%$, when $n$ takes the values $128$ and $1024$ in the spare and dense cases, respectively. Furthermore, when the number of rounds $t$ becomes larger, the regret ratios can be sustainably reduced.

These evaluation results reveal that our proposed pricing mechanism performs well over both sparse and dense feature vectors. By further combining with the pricing of accommodation rental, we can conclude that our pricing mechanism has a good extensibility to non-linear market value models.

\subsection{Details on Implementation and Overhead}\label{subsec:implementation:overhead}

We implemented our pricing mechanism in Python $2.7.15$. The running environment is a Broadwell-E workstation with 64-bit Ubuntu $16.04.5$ Linux operation system. In particular, the processor is Intel(R) Core(TM) i7-6900K with 8 cores, the base frequency is 3.20GHz, the memory size is 64GB, and the cache size is 20MB. We use the function {\sf time.time()} in Python to record the elapsed time, and use the command {\sf cat /proc/PID/status $|$ grep 'VmRSS'} to monitor memory overhead, where {\sf PID} is the process identifier of our program. Our source code is online available from~\cite{link:source:code}.

We next present the computation and memory overheads of the above three applications. For the pricing of noisy linear query under the version with reserve price, when $n = 100$, the online latency of the data broker in determining the posted price and updating her knowledge set is 0.115ms per query. Besides, the memory overhead is 151MB. For the pricing of accommodation rental under the version with reserve price ($q_t/v_t = 0.6$), the online latency is 0.019ms per booking request, and the memory overhead is 105MB. For the pricing of impression, when $n = 1024$, the online latency is 3.509ms (\resp, 0.024ms) per ad displaying sample in the sparse (\resp, dense) case. Additionally, the memory overhead is 106MB (\resp, 75MB) in the sparse (\resp, dense) case.

In a nutshell, our proposed pricing mechanism has a light load under both linear and non-linear models, and can be employed to dynamically price those products with customization, existence of reserve price, and timeliness properties.


\section{Related Work}\label{sec:related:work}
In this section, we briefly review related work.

\subsection{Data Market Design}



An explosive demand for sharing data contributes to growing attention on data market design~\cite{jour:vldb11:datamarket,proc:balazinska:2013}. The researchers from the database community~\cite{proc:pods:2012:koutris,jour:vldb:2012:arbitrage,proc:sigmod:2013:arbitrage,jour:vldb:2014:lin,proc:sigmod:17:arbitrage,proc:icdt:2017:deep:arbitrage,jour:vldb:2017:arbitrage} mainly focused on arbitrage freeness in pricing queries over the relational databases, which maintain insensitive data. Here, the existence of arbitrage means that the data consumer can buy a query with a lower price than the marked price, through combining a bundle of other cheaper queries. Thus, the data broker needs to rule out arbitrage opportunities to preserve her revenue. Specific to personal data trading, the researchers routinely adopted the cost-plus pricing strategy, where the data broker first compensates each data owner for her privacy leakage, and then scales up the total privacy compensation to determine the price of query for the data consumer. In particular, the quantification of privacy leakage primarily resorts to the classical differential privacy framework~\cite{proc:tcc06:dp,jour:2014:dwork:dp} or its variants, \eg, Pufferfish privacy~\cite{proc:sigmod:2011:kifer:privacy,proc:pods:12:pufferfish,jour:tods:14:pufferfish}. Besides, different researchers investigated distinct types of queries from the data consumers. Ghosh and Roth~\cite{proc:ec11:ghosh:selling:privacy} considered single counting query. The follow-up work by Li~\et~\cite{jour:cacm:2017:li} further extended to multiple noisy linear queries. Hynes~\et~\cite{jour:vldb18:hynes} investigated model training requests. We took the ubiquitous data correlations into account, including the correlations among data owners~\cite{proc:kdd:2018:erato} and the temporal correlations within a certain data owner's time-series data~\cite{proc:infocom:2019:horae}. We thus proposed to trade noisy aggregate statistics over private correlated data.



The original intention of these works is to ensure the consistency and robustness of a non-interactive pricing strategy against cunning data consumers. However, they didn't consider the responses from the data consumers, and further optimize the cumulative revenue of the data broker, which is instead the major focus of this work.

\subsection{Contextual Dynamic Pricing}

The dynamic pricing problem has been extensively studied in diverse contexts. The pioneering work by Kleinberg and Leighton~\cite{proc:focs03:kleinberg} considered markets for identical products, and designed optimal posted pricing strategies under the buyers' identical, random, and worst-case valuation models. However, the products in practical markets, \eg, online commerce and advertising, tend to differ from each other. This further motivated the emergence of contextual pricing, where the seller intends to sell a sequence of highly differentiated products, posts a price for each product, and then observes whether the buyer accepts or not. More specifically, each product is represented by a feature vector for differentiation, while its market value is typically assumed to linear in the feature vector. The researchers thus turned to online learning the weight vector from feedbacks, and further converted this task to a multi-dimensional binary search problem. Amin~\et~\cite{proc:nips14:amin} first proposed a stochastic gradient descent (SGD) based solution, which can attain $O(T^{2/3})$ strategic regret by ignoring logarithmic terms. However, their solution requires the feature vectors to be drawn from an i.i.d. distribution, such that each feature vector can serve as an unbiased estimate of the weight vector. The follow-up work by Cohen~\et~\cite{proc:ec16:cohen} abandoned this strict requirement. Besides, they approximated the polytope-shaped knowledge set with ellipsoid, and provided a worst-case regret of $O(n^2\log T)$, which is essentially the pure version of our pricing mechanism. Lobel~\et~\cite{proc:ec17:leme} further reduced regret to $O(n\log T)$ by projecting and cylindrifying the polytope. Most recently, Leme~\et~\cite{proc:focs18:leme} borrowed a key concept from geometric probability, called the intrinsic volumes of a convex body, and achieved a regret guarantee of $O(n^4\log\log(nT))$. The key principle behind this line of works is to identify the centroid of the knowledge set or its projection/transformation, such that each exploratory posted price can roughly impose a central cut in terms of different measures, \eg, volume, surface area, and width. Besides, although the most recent two works optimized the regret, they are too computationally complex to be deployed in practical online markets.


Yet, it is worth noting that the contextual dynamic pricing mechanisms significantly differ from the classical cutting-plane or localization algorithms in the field of convex optimization~\cite{book:boyd2004convex}, \eg, the original ellipsoid method~\cite{jour:khachiyan:1979} and the analytic center cutting-plane method\cite{jour:mp93:accpm}. In particular, the purpose of a cutting-plane method is to find a point in a convex set for optimizing a preset objective function. In contrast, the goal of a contextual dynamic pricing mechanism is to minimize the cumulative regret during the process of locating a preset point, \ie, the weight vector here. Furthermore, under contextual dynamic pricing, the direction of each cut is fixed by the feature vector of a product requested by a buyer, while the seller can only choose the position of the cut through posting a certain price. This setting distinguishes contextual dynamic pricing from a majority of ellipsoid based designs~\cite{jour:ms:2003:toubia,jour:2004:olivier,proc:stoc16:roth}, which allow the seller to control the direction of each cut. In fact, the contextual dynamic pricing problem can also be modeled into a contextual multi-armed bandit (MAB), where the arms/actions to be exploited and explored are the domain of the weight vector. However, given the domain of weight vector is continuous, we need to apply the discretization technique, which makes the number of bandits extremely large. In addition to inefficiency, since the payoff/regret function is piecewise and highly asymmetric, this sort of solutions can be oracle-based, \eg, \cite{jour:siam:2002:auer,proc:icml14:agarwal,proc:icml16:syrgkanis,proc:nips16:syrgkanis,proc:focs17:dudik,proc:nips18:foster}, and inevitably incurs polynomial rather than logarithmic regret in the number of total rounds $T$~\cite{proc:ec17:leme}.


Unfortunately, none of above works has incorporated the reserve price constraint, and further examined its impact on a posted price mechanism. In particular, due to the existence of reserve price, the cut over the knowledge set should support an arbitrary position, which instead was not touched in existing works. Besides, the impacts of reserve price on mitigating the cold-start problem and thus reducing the cumulative regret are first analyzed and verified in this work.

\section{Conclusion}\label{sec:conclusion}
In this paper, we have proposed the first contextual dynamic pricing mechanism with the reserve price constraint, for the data broker to maximize her revenue in online personal data markets. Our posted price mechanism features the properties of ellipsoid to perform online optimization effectively and efficiently, and can support both linear and nonlinear market value models, while allowing some uncertainty. We further have extended it to several other similar application scenarios, and extensively evaluated over three practical datasets. Empirical results have demonstrated the feasibility and generality of our pricing mechanism, as well as the functionality of the reserve price constraint.

\section*{Acknowledgment}
This work was supported in part by Science and Technology Innovation 2030 - ``New Generation Artificial Intelligence" Major Project No. 2018AAA0100905, in part by China NSF grant 61972252, 61972254, 61672348, and 61672353, in part by the Open Project Program of the State Key Laboratory of Mathematical Engineering and Advanced Computing 2018A09, and in part by Alibaba Group through Alibaba Innovation Research (AIR) Program. The opinions, findings, conclusions, and recommendations expressed in this paper are those of the authors and do not necessarily reflect the views of the funding agencies or the government. Fan Wu is the corresponding author.


\bibliographystyle{IEEEtran}
\bibliography{reference}  

\begin{thebibliography}{10}
\providecommand{\url}[1]{#1}
\csname url@samestyle\endcsname
\providecommand{\newblock}{\relax}
\providecommand{\bibinfo}[2]{#2}
\providecommand{\BIBentrySTDinterwordspacing}{\spaceskip=0pt\relax}
\providecommand{\BIBentryALTinterwordstretchfactor}{4}
\providecommand{\BIBentryALTinterwordspacing}{\spaceskip=\fontdimen2\font plus
\BIBentryALTinterwordstretchfactor\fontdimen3\font minus
  \fontdimen4\font\relax}
\providecommand{\BIBforeignlanguage}[2]{{%
\expandafter\ifx\csname l@#1\endcsname\relax
\typeout{** WARNING: IEEEtran.bst: No hyphenation pattern has been}%
\typeout{** loaded for the language `#1'. Using the pattern for}%
\typeout{** the default language instead.}%
\else
\language=\csname l@#1\endcsname
\fi
#2}}
\providecommand{\BIBdecl}{\relax}
\BIBdecl

\bibitem{link:factual}
``{Factual},'' \url{https://www.factual.com/}, 2008.

\bibitem{link:datasift}
``{DataSift},'' \url{https://datasift.com/}, 2010.

\bibitem{link:datacoup}
``{Datacoup},'' \url{https://datacoup.com/}, 2012.

\bibitem{link:citizenme}
``{CitizenMe},'' \url{https://www.citizenme.com/}, 2013.

\bibitem{link:coverus}
``{CoverUs},'' \url{https://coverus.health/}, 2018.

\bibitem{jour:vldb11:datamarket}
M.~Balazinska, B.~Howe, and D.~Suciu, ``Data markets in the cloud: An
  opportunity for the database community,'' \emph{{PVLDB}}, vol.~4, no.~12, pp.
  1482--1485, 2011.

\bibitem{jour:cacm:2017:roth}
A.~Roth, ``Technical perspective: Pricing information (and its implications),''
  \emph{{Communications of the ACM}}, vol.~60, no.~12, p.~78, 2017.

\bibitem{jour:cacm:2017:li}
C.~Li, D.~Y. Li, G.~Miklau, and D.~Suciu, ``A theory of pricing private data,''
  \emph{{Communications of the ACM}}, vol.~60, no.~12, pp. 79--86, 2017.

\bibitem{proc:ec11:ghosh:selling:privacy}
A.~Ghosh and A.~Roth, ``Selling privacy at auction,'' in \emph{Proc. of EC},
  2011, pp. 199--208.

\bibitem{proc:pods:2012:koutris}
P.~Koutris, P.~Upadhyaya, M.~Balazinska, B.~Howe, and D.~Suciu, ``Query-based
  data pricing,'' in \emph{Proc. of PODS}, 2012, pp. 167--178.

\bibitem{jour:vldb:2012:arbitrage}
------, ``Querymarket demonstration: Pricing for online data markets,''
  \emph{{PVLDB}}, vol.~5, no.~12, pp. 1962--1965, 2012.

\bibitem{proc:sigmod:2013:arbitrage}
------, ``Toward practical query pricing with querymarket,'' in \emph{Proc. of
  SIGMOD}, 2013, pp. 613--624.

\bibitem{jour:vldb:2014:lin}
B.~Lin and D.~Kifer, ``On arbitrage-free pricing for general data queries,''
  \emph{{PVLDB}}, vol.~7, no.~9, pp. 757--768, 2014.

\bibitem{proc:sigmod:17:arbitrage}
S.~Deep and P.~Koutris, ``{QIRANA:} {A} framework for scalable query pricing,''
  in \emph{Proc. of SIGMOD}, 2017, pp. 699--713.

\bibitem{proc:icdt:2017:deep:arbitrage}
------, ``The design of arbitrage-free data pricing schemes,'' in \emph{Proc.
  of ICDT}, 2017, pp. 12:1--12:18.

\bibitem{jour:vldb:2017:arbitrage}
S.~Deep, P.~Koutris, and Y.~Bidasaria, ``{QIRANA} demonstration: Real time
  scalable query pricing,'' \emph{{PVLDB}}, vol.~10, no.~12, pp. 1949--1952,
  2017.

\bibitem{proc:kdd:2018:erato}
C.~Niu, Z.~Zheng, F.~Wu, S.~Tang, X.~Gao, and G.~Chen, ``Unlocking the value of
  privacy: Trading aggregate statistics over private correlated data,'' in
  \emph{Proc. of KDD}, 2018, pp. 2031--2040.

\bibitem{proc:infocom:2019:horae}
C.~Niu, Z.~Zheng, S.~Tang, X.~Gao, and F.~Wu, ``Making big money from small
  sensors: Trading time-series data under pufferfish privacy,'' in \emph{Proc.
  of INFOCOM}, 2019.

\bibitem{jour:ftml:2012}
S.~Shalev{-}Shwartz, ``Online learning and online convex optimization,''
  \emph{Foundations and Trends in Machine Learning}, vol.~4, no.~2, pp.
  107--194, 2012.

\bibitem{jour:jto:2016:oco}
E.~Hazan, ``Introduction to online convex optimization,'' \emph{Foundations and
  Trends in Optimization}, vol.~2, no. 3-4, pp. 157--325, 2016.

\bibitem{proc:ec16:cohen}
M.~C. Cohen, I.~Lobel, and R.~P. Leme, ``Feature-based dynamic pricing,'' in
  \emph{Proc. of {EC}}, 2016, p. 817.

\bibitem{proc:ec17:leme}
I.~Lobel, R.~P. Leme, and A.~Vladu, ``Multidimensional binary search for
  contextual decision-making,'' in \emph{Proc. of {EC}}, 2017, p. 585.

\bibitem{proc:focs18:leme}
R.~P. Leme and J.~Schneider, ``Contextual search via intrinsic volumes,'' in
  \emph{Proc. of {FOCS}}, 2018, pp. 268--282.

\bibitem{proc:nips18:mao}
J.~Mao, R.~Leme, and J.~Schneider, ``Contextual pricing for lipschitz buyers,''
  in \emph{Proc. of NIPS}, 2018, pp. 5648--5656.

\bibitem{jour:hedonic:milon1984}
J.~W. Milon, J.~Gressel, and D.~Mulkey, ``Hedonic amenity valuation and
  functional form specification,'' \emph{Land Economics}, vol.~60, no.~4, pp.
  378--387, 1984.

\bibitem{jour:hedonic:malpezzi2002}
S.~Malpezzi, ``Hedonic pricing models: a selective and applied review,''
  \emph{Housing economics and public policy}, pp. 67--89, 2002.

\bibitem{jour:hedonic:sirmans2005}
S.~Sirmans, D.~Macpherson, and E.~Zietz, ``The composition of hedonic pricing
  models,'' \emph{Journal of real estate literature}, vol.~13, no.~1, pp.
  1--44, 2005.

\bibitem{proc:kdd18:airbnb}
P.~Ye, J.~Qian, J.~Chen, C.~Wu, Y.~Zhou, S.~D. Mars, F.~Yang, and L.~Zhang,
  ``Customized regression model for airbnb dynamic pricing,'' in \emph{Proc. of
  {KDD}}, 2018, pp. 932--940.

\bibitem{book:2003:pricing}
K.~B. Monroe, \emph{Pricing : making profitable decisions}, 3rd~ed.\hskip 1em
  plus 0.5em minus 0.4em\relax McGraw-Hill/Irwin, 2003.

\bibitem{book:2018:pricing}
T.~T. Nagle and G.~M{\"u}ller, \emph{The strategy and tactics of pricing: A
  guide to growing more profitably}, 6th~ed.\hskip 1em plus 0.5em minus
  0.4em\relax Routledge, 2018.

\bibitem{book:06:bishop}
C.~M. Bishop, \emph{Pattern recognition and machine learning}.\hskip 1em plus
  0.5em minus 0.4em\relax Springer, 2006.

\bibitem{book:09:esl}
T.~Hastie, R.~Tibshirani, and J.~H. Friedman, \emph{The elements of statistical
  learning: data mining, inference, and prediction}, 2nd~ed., ser. Springer
  series in statistics.\hskip 1em plus 0.5em minus 0.4em\relax Springer, 2009.

\bibitem{jour:khachiyan:1979}
L.~G. Khachiyan, ``A polynomial algorithm in linear programming,''
  \emph{Doklady Akademii Nauk SSSR}, vol. 244, pp. 1093--1096, 1979.

\bibitem{book:ellipsoid:depth:cut}
M.~Gr{\"o}tschel, L.~Lov{\'a}sz, and A.~Schrijver, \emph{Geometric algorithms
  and combinatorial optimization}, 2nd~ed., ser. Algorithms and
  Combinatorics.\hskip 1em plus 0.5em minus 0.4em\relax Springer-Verlag, 1993,
  vol.~2.

\bibitem{jour:or81:bland}
R.~G. Bland, D.~Goldfarb, and M.~J. Todd, ``Feature article - the ellipsoid
  method: {A} survey,'' \emph{Operations Research}, vol.~29, no.~6, pp.
  1039--1091, 1981.

\bibitem{proc:nips09:subgaussian:C}
R.~H. Keshavan, A.~Montanari, and S.~Oh, ``Matrix completion from noisy
  entries,'' in \emph{{Proc. of NIPS}}, 2009, pp. 952--960.

\bibitem{proc:nips10:subgaussian}
J.~M. Mooij, O.~Stegle, D.~Janzing, K.~Zhang, and B.~Sch{\"{o}}lkopf,
  ``Probabilistic latent variable models for distinguishing between cause and
  effect,'' in \emph{{Proc. of NIPS}}, 2010, pp. 1687--1695.

\bibitem{proc:icml13:chen:subgaussian}
Y.~Chen, C.~Caramanis, and S.~Mannor, ``Robust sparse regression under
  adversarial corruption,'' in \emph{Proc. of {ICML}}, 2013, pp. 774--782.

\bibitem{jour:vldb17:subgaussian}
S.~Rahman, M.~Aliakbarpour, H.~Kong, E.~Blais, K.~Karahalios, A.~G.
  Parameswaran, and R.~Rubinfeld, ``I've seen "enough": Incrementally improving
  visualizations to support rapid decision making,'' \emph{{PVLDB}}, vol.~10,
  no.~11, pp. 1262--1273, 2017.

\bibitem{jour:siam73:golub}
G.~H. Golub, ``Some modified matrix eigenvalue problems,'' \emph{SIAM Review},
  vol.~15, no.~2, pp. 318--334, 1973.

\bibitem{book:wilkinson1965algebraic}
J.~H. Wilkinson, \emph{The algebraic eigenvalue problem}, ser. Numerical
  Mathematics and Scientific Computation.\hskip 1em plus 0.5em minus
  0.4em\relax Oxford University Press, 1988.

\bibitem{proc:www07:richardson:lr}
M.~Richardson, E.~Dominowska, and R.~Ragno, ``Predicting clicks: estimating the
  click-through rate for new ads,'' in \emph{Proc. of {WWW}}, 2007, pp.
  521--530.

\bibitem{proc:www08:chakrabarti:lr}
D.~Chakrabarti, D.~Agarwal, and V.~Josifovski, ``Contextual advertising by
  combining relevance with click feedback,'' in \emph{Proc. of {WWW}}, 2008,
  pp. 417--426.

\bibitem{proc:kdd13:mcmahan}
H.~B. McMahan, G.~Holt, D.~Sculley, M.~Young, D.~Ebner, J.~Grady, L.~Nie,
  T.~Phillips, E.~Davydov, D.~Golovin, S.~Chikkerur, D.~Liu, M.~Wattenberg,
  A.~M. Hrafnkelsson, T.~Boulos, and J.~Kubica, ``Ad click prediction: a view
  from the trenches,'' in \emph{Proc. of {KDD}}, 2013, pp. 1222--1230.

\bibitem{proc:nips14:amin}
K.~Amin, A.~Rostamizadeh, and U.~Syed, ``Repeated contextual auctions with
  strategic buyers,'' in \emph{{Proc. of NIPS}}, 2014, pp. 622--630.

\bibitem{jor:jar98:loan:pricing}
D.~W. Blackwell, T.~R. Noland, and D.~B. Winters, ``The value of auditor
  assurance: Evidence from loan pricing,'' \emph{Journal of accounting
  research}, vol.~36, no.~1, pp. 57--70, 1998.

\bibitem{book:2005:loan}
R.~L. Phillips, \emph{Pricing and revenue optimization}, 1st~ed.\hskip 1em plus
  0.5em minus 0.4em\relax Stanford University Press, 2005.

\bibitem{link:movielens20m}
GroupLens, ``{MovieLens 20M Dataset},''
  \url{https://grouplens.org/datasets/movielens/20m/}, 2016.

\bibitem{link:airbnb}
Airbnb, ``{Airbnb listings in major us cities},''
  \url{https://www.kaggle.com/rudymizrahi/airbnb-listings-in-major-us-cities-deloitte-ml/},
  2018.

\bibitem{link:avazuctr}
Avazu, ``{Avazu mobile ad click dataset},''
  \url{https://www.kaggle.com/c/avazu-ctr-prediction/data/}, 2014.

\bibitem{link:source:code}
``Source code for online personal data markets,''
  \url{https://github.com/NiuChaoyue/Personal-Data-Pricing/}.

\bibitem{proc:balazinska:2013}
M.~Balazinska, B.~Howe, P.~Koutris, D.~Suciu, and P.~Upadhyaya, ``A discussion
  on pricing relational data,'' in \emph{In Search of Elegance in the Theory
  and Practice of Computation - Essays Dedicated to Peter Buneman}, 2013, pp.
  167--173.

\bibitem{proc:tcc06:dp}
C.~Dwork, F.~McSherry, K.~Nissim, and A.~D. Smith, ``Calibrating noise to
  sensitivity in private data analysis,'' in \emph{Proc. of TCC}, 2006, pp.
  265--284.

\bibitem{jour:2014:dwork:dp}
C.~Dwork and A.~Roth, ``The algorithmic foundations of differential privacy,''
  \emph{Foundations and Trends in Theoretical Computer Science}, vol.~9, no.
  3-4, pp. 211--407, 2014.

\bibitem{proc:sigmod:2011:kifer:privacy}
D.~Kifer and A.~Machanavajjhala, ``No free lunch in data privacy,'' in
  \emph{Proc. of SIGMOD}, 2011, pp. 193--204.

\bibitem{proc:pods:12:pufferfish}
------, ``A rigorous and customizable framework for privacy,'' in \emph{{Proc.
  of PODS}}, 2012, pp. 77--88.

\bibitem{jour:tods:14:pufferfish}
------, ``Pufferfish: {A} framework for mathematical privacy definitions,''
  \emph{ACM Transactions on Database Systems}, vol.~39, no.~1, pp. 3:1--3:36,
  2014.

\bibitem{jour:vldb18:hynes}
N.~Hynes, D.~Dao, D.~Yan, R.~Cheng, and D.~Song, ``A demonstration of sterling:
  {A} privacy-preserving data marketplace,'' \emph{{PVLDB}}, vol.~11, no.~12,
  pp. 2086--2089, 2018.

\bibitem{proc:focs03:kleinberg}
R.~D. Kleinberg and F.~T. Leighton, ``The value of knowing a demand curve:
  Bounds on regret for online posted-price auctions,'' in \emph{{Proc. of
  FOCS}}, 2003, pp. 594--605.

\bibitem{book:boyd2004convex}
S.~Boyd and L.~Vandenberghe, \emph{Convex optimization}, 1st~ed.\hskip 1em plus
  0.5em minus 0.4em\relax Cambridge university press, 2004.

\bibitem{jour:mp93:accpm}
J.~L. Goffin and J.~P. Vial, ``On the computation of weighted analytic centers
  and dual ellipsoids with the projective algorithm,'' \emph{Mathematical
  Programming}, vol.~60, no. 1-3, pp. 81--92, 1993.

\bibitem{jour:ms:2003:toubia}
O.~Toubia, D.~I. Simester, J.~R. Hauser, and E.~Dahan, ``Fast polyhedral
  adaptive conjoint estimation,'' \emph{Marketing Science}, vol.~22, no.~3, pp.
  273--303, 2003.

\bibitem{jour:2004:olivier}
O.~Toubia, J.~R. Hauser, and D.~I. Simester, ``Polyhedral methods for adaptive
  choice-based conjoint analysis,'' \emph{Journal of Marketing Research},
  vol.~41, no.~1, pp. 116--131, 2004.

\bibitem{proc:stoc16:roth}
A.~Roth, J.~Ullman, and Z.~S. Wu, ``Watch and learn: optimizing from revealed
  preferences feedback,'' in \emph{{Proc. of {STOC}}}, 2016, pp. 949--962.

\bibitem{jour:siam:2002:auer}
P.~Auer, N.~Cesa{-}Bianchi, Y.~Freund, and R.~E. Schapire, ``The nonstochastic
  multiarmed bandit problem,'' \emph{{SIAM Journal on Computing}}, vol.~32,
  no.~1, pp. 48--77, 2002.

\bibitem{proc:icml14:agarwal}
A.~Agarwal, D.~J. Hsu, S.~Kale, J.~Langford, L.~Li, and R.~E. Schapire,
  ``Taming the monster: {A} fast and simple algorithm for contextual bandits,''
  in \emph{{Proc. of ICML}}, 2014, pp. 1638--1646.

\bibitem{proc:icml16:syrgkanis}
V.~Syrgkanis, A.~Krishnamurthy, and R.~E. Schapire, ``Efficient algorithms for
  adversarial contextual learning,'' in \emph{{Proc. of ICML}}, 2016, pp.
  2159--2168.

\bibitem{proc:nips16:syrgkanis}
V.~Syrgkanis, H.~Luo, A.~Krishnamurthy, and R.~E. Schapire, ``Improved regret
  bounds for oracle-based adversarial contextual bandits,'' in \emph{{Proc. of
  NIPS}}, 2016, pp. 3135--3143.

\bibitem{proc:focs17:dudik}
M.~Dud{\'{\i}}k, N.~Haghtalab, H.~Luo, R.~E. Schapire, V.~Syrgkanis, and J.~W.
  Vaughan, ``Oracle-efficient online learning and auction design,'' in
  \emph{{Prof. of FOCS}}, 2017, pp. 528--539.

\bibitem{proc:nips18:foster}
D.~J. Foster and A.~Krishnamurthy, ``Contextual bandits with surrogate losses:
  Margin bounds and efficient algorithms,'' in \emph{{Proc. of NIPS}}, 2018,
  pp. 2626--2637.

\end{thebibliography}


\section*{appendix}
\begin{theorem}\label{them:one:dimension}
The worst-case regret of the pure version of our pricing mechanism in the one-dimensional case is $O(\log T)$.
\end{theorem}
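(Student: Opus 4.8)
\emph{Proof proposal.} The plan is to run the same argument as in Theorem~\ref{theorem:alg2}, but to exploit the fact that when $n=1$ the knowledge set $\mathcal{K}_t=\mathcal{E}_t$ is merely an interval, so the entire ellipsoid-volume / smallest-eigenvalue apparatus (Lemmas~\ref{lem:ellipsoid:volume:ratio:alpha}--\ref{lem:smallest:eigenvalue:change:alpha}) collapses to elementary bisection bookkeeping. First I would set up the one-dimensional picture: with $n=1$, $\mathbf{A}_t$ is a positive scalar $r_t^2$, so $\mathcal{E}_t=[c_t-r_t,\,c_t+r_t]$, and $\bar{p}_t-\ubar{p}_t=2\sqrt{{\mathbf{x}_t}^T\mathbf{A}_t\mathbf{x}_t}=2r_t|x_t|$. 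In the pure version the exploratory price is the midpoint ${\mathbf{x}_t}^T\mathbf{c}_t=x_tc_t$, whose cutting point $\theta=c_t$ splits the feasible interval for $\theta^*$ exactly in half, so $r_{t+1}=r_t/2$; a conservative round leaves $r_{t+1}=r_t$. Hence $r_t$ is non-increasing over rounds and is divided by $2$ precisely on the exploratory rounds.

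Next I would bound the number $T_e$ of exploratory rounds. Since $\|\mathbf{x}_t\|\le S$, the exploration test $2r_t|x_t|>\epsilon$ forces $r_t>\epsilon/(2S)$; once $r_t\le\epsilon/(2S)$, every subsequent round is conservative. Combined with $r_1=R$ and the halving, this yields $T_e\le\log_2(2RS/\epsilon)+O(1)$ — this is the one-dimensional analogue of Lemma~\ref{lem:exploratory:number:round:alpha}, obtained directly rather than through the volume squeeze. Then I would bound the per-round regret: in an exploratory round the worst case is that the midpoint price is rejected, so $R_t=v_t\le\|\mathbf{x}_t\|\,\|\theta^*\|\le RS$, a constant; in a conservative round the price $\ubar{p}_t\le v_t$ is accepted, so $R_t=v_t-\ubar{p}_t\le\bar{p}_t-\ubar{p}_t\le\epsilon$ by the branch condition. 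Summing gives $\sum_{t\in[T]}R_t\le T_e\,RS+(T-T_e)\epsilon\le RS\log_2(2RS/\epsilon)+T\epsilon+O(1)$.

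Finally, choosing $\epsilon=\log_2(T)/T$ (the setting used in the evaluation section for $n=1$) makes both terms $O(\log T)$: $T\epsilon=\log_2 T$ and $RS\log_2(2RST/\log_2 T)=O(\log T)$ when $R,S$ are treated as constants, which establishes the claimed $O(\log T)$ worst-case regret.

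The main obstacle is comparatively mild. Unlike the multi-dimensional case, there is no danger of the knowledge set becoming ``thin along a bad direction,'' because there is only one direction and bisection is an honest factor-$2$ contraction irrespective of the adversary's choice of $\mathbf{x}_t$; the reserve price (the other potential source of anomalous cuts) is absent in the pure version. Consequently the only care required is tracking the constants in the $T_e$ bound and verifying that the $R_t\le\epsilon$ estimate for conservative rounds uses the correct branch guard; a secondary routine check is the small-$T$ regime so that $\epsilon$ is below the relevant scales and the logarithms are well defined.
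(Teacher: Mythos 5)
Your proposal is correct and follows essentially the same route as the paper's own proof: bound the number of exploratory rounds by $\log_2(2RS/\epsilon)$ via the halving of the interval, bound the per-round regret by $RS$ on exploratory rounds and by $\bar{p}_t-\ubar{p}_t\le\epsilon$ on conservative rounds, and set $\epsilon=\log_2(T)/T$ to make both $T_e\,RS$ and $T\epsilon$ of order $\log T$. The extra bookkeeping you add (the explicit interval representation $[c_t-r_t,c_t+r_t]$ and the observation that $2r_t|x_t|\le 2r_tS$ forces termination of exploration) only makes explicit what the paper states more tersely.
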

\begin{proof}
As illustrated in Section~\ref{sec:design:principles}, ellipsoid degenerates to interval in the one-dimensional case. Specifically, we let $\mathbf{x}_t \in [-S, S]$ and $\mathcal{K}_1 = \{\theta \in \mathbb{R}|\theta \in [-R, R]\}$. Thus, we have: $\forall t \in [T], \bar{p}_t - \ubar{p}_t \leq 2RS$. Besides, in round $t$, if $\bar{p}_t - \ubar{p}_t > \epsilon$, the data broker will take the exploratory price to refine her knowledge set of $\theta^*$ by half. Hence, we have: $T_e \leq \log_2(2RS/\epsilon)$. Moreover, in round $t$, the regret incurred by posting the exploratory (\resp, conservative) price is bounded above by $\bar{p}_t$ (\resp, $\bar{p}_t - \ubar{p}_t$), and can be further bounded above by $RS$ (\resp, the threshold $\epsilon$). Therefore, the total regret is no more than $T_e RS + (T - T_e)\epsilon$. When $T_e$ takes its upper bound $\log_2(2RS/\epsilon)$ and $\epsilon$ takes $\log_2(T)/T$, the cumulative regret is $O(\log T)$. This completes the proof.
\end{proof}

\begin{figure}[t]
\centering
\includegraphics[width = 0.8\columnwidth]{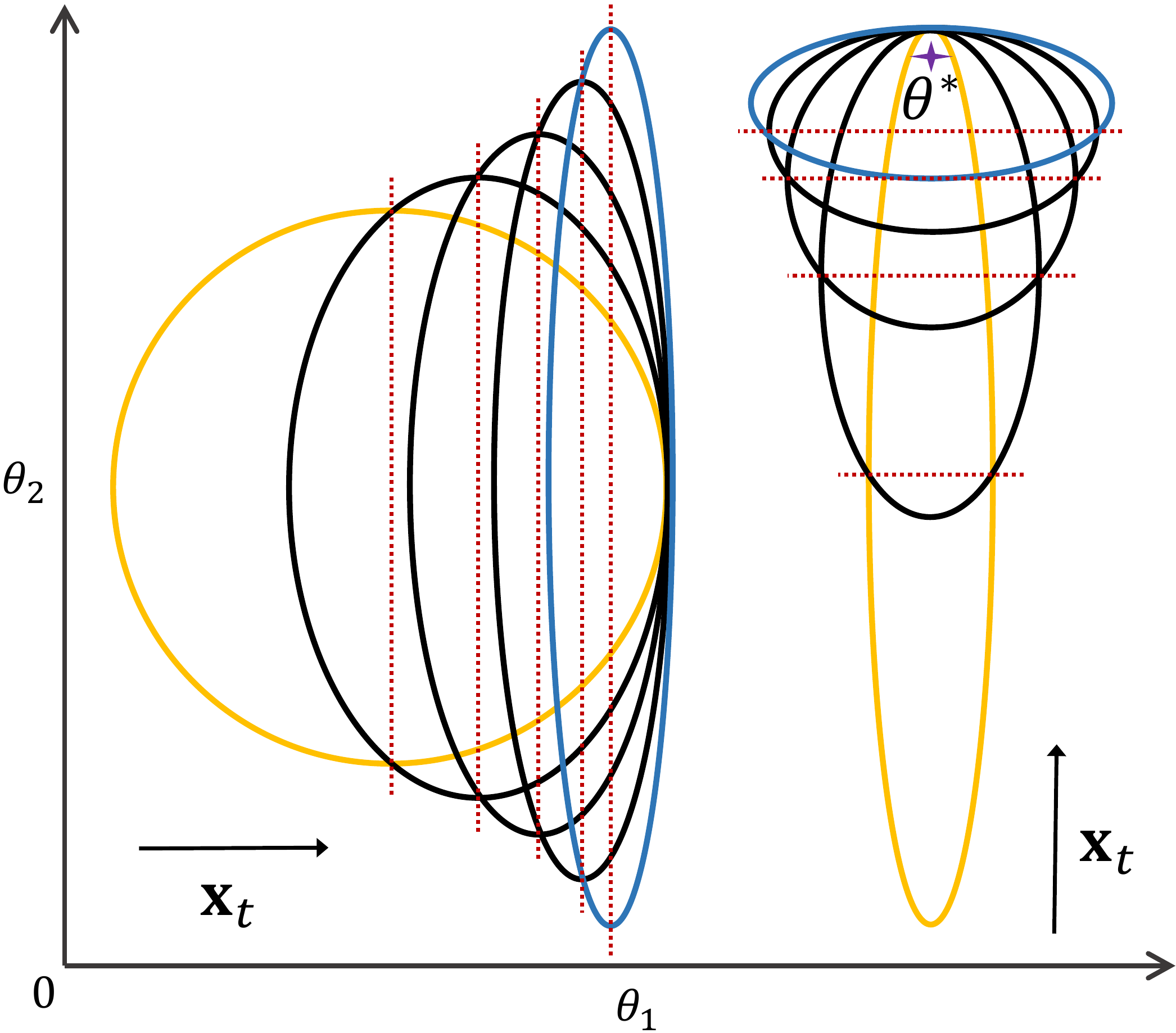}
\caption{An adversarial example in the two-dimensional case, if the conservative posted prices are allowed to cut the ellipsoid.}\label{fig:adversary:case}
\end{figure}

\begin{lemma}\label{lem:adversarial:example}
If the data broker are allowed to utilize the conservative posted prices to refine the knowledge set, her cumulative worst-case regret is $O(T)$.
\end{lemma}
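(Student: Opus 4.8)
The plan is to prove the lemma by exhibiting an explicit adversarial instance — already in $n=2$ dimensions, matching Fig.~\ref{fig:adversary:case} — on which the variant of the mechanism that is allowed to refine $\mathcal{E}_t$ with the conservative price accumulates $\Omega(T)$ regret; since the $O(T)$ upper bound is immediate for any posted‑price mechanism, this is the whole content of the lemma. The mechanism‑specific fact I would exploit is that the bound $T_e = O(n^2\log(RS^2(n+1)/\epsilon))$ of Lemma~\ref{lem:exploratory:number:round:alpha} rests on Lemma~\ref{lem:smallest:eigenvalue:nondecresing:alpha}, i.e.\ on the fact that conservative prices never shorten the ellipsoid along a ``narrow'' direction, so $\gamma_n(\mathbf{A}_t)\ge\tau\epsilon^2$ throughout. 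Once conservative prices may cut, an adversary can drive $\gamma_n(\mathbf{A}_t)\to 0$, thereby forcing the mechanism to post exploratory prices in a constant fraction of all $T$ rounds, each such round incurring $\Omega(1)$ regret.

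Concretely, I would present only $\mathbf{x}_t\in\{\mathbf{e}_1,\mathbf{e}_2\}$, so $\mathbf{A}_t$ stays diagonal and every cut merely rescales the two eigenvalues $\gamma_1^{(t)},\gamma_2^{(t)}$. A setup phase of $O(\log(R/\epsilon))$ rounds with $\mathbf{x}_t=\mathbf{e}_1$ drives $\gamma_1$ below $\epsilon^2/4$ via exploratory central cuts, while inflating $\gamma_2$ by the factor $\tfrac{n^2}{n^2-1}=\tfrac43$ per round. The main phase then repeats a ``super‑cycle'' of two blocks. First block: $m=\Theta(\log(1/\epsilon))$ rounds of $\mathbf{x}_t=\mathbf{e}_1$ with the reserve price set equal to the middle price $\mathbf{x}_t^T\mathbf{c}_t$; since $\bar{p}_t-\ubar{p}_t=2\sqrt{\gamma_1}\le\epsilon$ the mechanism posts the conservative price, which equals $q_t$ and — if it is allowed to cut — makes a (near‑)central cut along $\mathbf{e}_1$, multiplying $\gamma_1$ by $\tfrac{n^2}{(n+1)^2}=\tfrac49$ (shrink) and $\gamma_2$ by $\tfrac43$ (grow); because the estimate of $v_t$ along $\mathbf{e}_1$ is then within $\epsilon$, each such round costs at most $\sqrt{\gamma_1}\le\epsilon/2$ in regret. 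Second block: $m'\approx m\log\tfrac43/\log\tfrac94\approx 0.355\,m$ rounds of $\mathbf{x}_t=\mathbf{e}_2$, where now $\bar{p}_t-\ubar{p}_t=2\sqrt{\gamma_2}\gg\epsilon$, so the mechanism is forced to post the exploratory (middle) price; placing $\theta^*_2$ at a fixed offset $\mu>0$ from the common $\mathbf{e}_2$‑coordinate of the centers makes a constant fraction of these rounds ``rejections'' with regret exactly $v_t=\mu=\Omega(1)$ by Equation~(\ref{eq:regret:formula:rp}), while the accompanying cuts pull $\gamma_2$ back below $\epsilon^2/4$ and re‑inflate $\gamma_1$ only by $(\tfrac43)^{m'}$.

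The accounting then closes: choosing $m$ large enough that $(\tfrac43)^{m'}(\tfrac49)^{m}\le 1$ — which needs only $\tfrac{n^2}{n^2-1}<\tfrac{(n+1)^2}{n^2}$, true for every $n\ge2$ — restores $\gamma_1\le\epsilon^2/4$ at the end of each super‑cycle and also returns $\gamma_2$ to $\Theta(\epsilon^2)$, so the construction is self‑sustaining; each super‑cycle has $\Theta(m)$ rounds and yields $\Theta(m)$ exploratory rejection rounds of regret $\mu$, so summing over the $\Theta(T/m)$ super‑cycles gives cumulative regret $\Omega(\mu T)=\Omega(T)$, with the conservative rounds contributing only $O(\epsilon T)=o(T)$ and with $\delta=O(n/T)\ll\mu$ not affecting any sign. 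The step I expect to be the real work is the geometric/consistency bookkeeping of the main phase: verifying that a single fixed $\theta^*=(\theta_1^*,\mu)$, with $\theta_1^*$ the limit of the increasing $\mathbf{e}_1$‑coordinates of the centers, lies in every $\mathcal{E}_t$ and makes every acceptance/rejection consistent (this holds because each L\"{o}wner-John update contains the retained half‑ellipsoid, hence $\theta^*$, and because $\theta^*_1-\mathbf{x}_t^T\mathbf{c}_t\le\sqrt{\gamma_1}\le\epsilon/2$), and carefully tracking $\gamma_1,\gamma_2$ across a whole super‑cycle to confirm that the mechanism genuinely lands in the conservative branch during the first block and in the exploratory branch during the second. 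Everything else is the routine arithmetic of geometric sequences sketched above.
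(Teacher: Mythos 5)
Your construction is correct and rests on exactly the same mechanism as the paper's proof of Lemma~\ref{lem:adversarial:example}: with the reserve price set equal to the middle price, a conservative round that is allowed to cut imposes a central cut along $\mathbf{e}_1$, and each such cut inflates the ellipsoid's width along $\mathbf{e}_2$ by the factor $n/\sqrt{n^2-1}>1$ (your $\gamma_2\mapsto\tfrac43\gamma_2$ for $n=2$), so the adversary can force inflation that costs $\Omega(1)$ exploratory bisection rounds per inflating round to undo. The difference is only in scheduling. The paper runs a single two-phase instance: $\lfloor T/2\rfloor$ inflating rounds along $\mathbf{e}_1$, driving the $\mathbf{e}_2$-width to $2(n/\sqrt{n^2-1})^{\lfloor T/2\rfloor}$ --- exponentially large in $T$ --- followed by $\lceil T/2\rceil$ rounds along $\mathbf{e}_2$, at least $\min\bigl(\lceil T/2\rceil,\ \log_2(2(n/\sqrt{n^2-1})^{\lfloor T/2\rfloor}/\epsilon)\bigr)=\Omega(T)$ of which must be exploratory. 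You instead interleave inflation and deflation in self-sustaining super-cycles of length $\Theta(\log(1/\epsilon))$, verifying $(\tfrac43)^{m'}(\tfrac49)^{m}\le 1$ so the eigenvalues return to their starting range. Your version buys a uniformly bounded ellipsoid (no $(2/\sqrt{3})^{T/2}$ blow-up) and an explicit per-round regret ledger ($\le\epsilon/2$ for conservative rounds, $\mu=\Omega(1)$ for exploratory rejections); the paper's is shorter and needs no cross-cycle consistency condition. One caution: your phrase ``the common $\mathbf{e}_2$-coordinate of the centers'' is not quite right, since the second-block cuts do move that coordinate; what you actually need --- and correctly flag as the remaining bookkeeping --- is that each second block restarts from an interval recentered within $\epsilon/2$ of $\theta_2^*$ and that some fixed $\theta_2^*=\mu=\Omega(1)$ yields a constant fraction of rejections (an all-rejection run is inconsistent with any $\theta_2^*$ remaining in the knowledge set, so this fraction argument is genuinely needed). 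The paper elides this step entirely, identifying the count of exploratory rounds with the regret, so your sketch is no less rigorous than the published one on that point.
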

\begin{proof}
We set $R = 1$ and $S = 1$ for clarity, \ie, $\mathbf{A}_1 = \mathbf{I}_{n \times n}$ and $\forall t \in [T], \|\mathbf{x}_t\| = 1$. We consider an adversary, who chooses a sequence of $T$ queries, where the feature vectors in the first half are $\forall t \in [\lfloor T/2 \rfloor], \mathbf{x}_t = (1, 0, 0, \ldots, 0)^T$, and the feature vectors in the second half are $\forall t \in \{\lfloor T/2\rfloor + 1, \ldots, T\}, \mathbf{x}_t = (0, 1, 0, \ldots, 0)^T$. Besides, in the first half of rounds, the adversary sets the reserve prices to be the middle prices, \ie, $\forall t \in [\lfloor T/2 \rfloor], q_t = \frac{\ubar{p}_t + \bar{p}_t}{2}$, while discarding the reserve price constraint in the second half. Under the above adversary setting, suppose the data broker is allowed to utilize the conservative prices to refine her knowledge set. During the first $\lfloor T/2\rfloor$ rounds, she will successively impose central cuts over the ellipsoid along the first coordinate. Besides, the width of the ellipsoid's first axis shrinks exponentially, while all the lengthes of the other $n - 1$ axes expand exponentially, in particular with the ratio $\frac{n}{\sqrt{n^2 - 1}}$. Thus, at the end of $\lfloor T/2 \rfloor$ rounds, the width of the ellipsoid along the second coordinate is $2(\frac{n}{\sqrt{n^2 - 1}})^{\lfloor T/2 \rfloor}$. We now compute the cumulative regret in the second half. Given $\bar{p}_t - \ubar{p}_t$ represents the width of the ellipsoid along $\mathbf{x}_t = (0, 1, 0, \ldots, 0)^T$, the data broker will successively post the exploratory prices, \ie, the middle prices, to cut along the second coordinate, until the width is below the threshold $\epsilon$. Besides, each cut can at most shorten the width by half. Hence, the number of rounds where the exploratory prices are posted is at least $\min(T - \lfloor T/2\rfloor, \log_2(2(\frac{n}{\sqrt{n^2 - 1}})^{\lfloor T/2 \rfloor} / \epsilon)) = O(T)$. Therefore, the cumulative worst-case regret in the above adversary case is $O(T)$, which completes the proof. In Fig.~\ref{fig:adversary:case}, we give an intuitive view of this adversarial example in the two-dimensional case.
\end{proof}

\end{document}